\documentclass[11pt,journal,onecolumn]{IEEEtran}

\usepackage[T1]{fontenc}
\usepackage[utf8]{inputenc}
\usepackage{lmodern,microtype}
\usepackage{amsmath,amssymb,amsthm,mathtools,bm,mathrsfs}
\usepackage{booktabs,longtable,array,tabularx,graphicx,float}
\usepackage{enumitem,cite,xurl}
\usepackage[hidelinks]{hyperref}
\allowdisplaybreaks[2]
\setlist{leftmargin=*,itemsep=2pt,topsep=3pt}
\AtBeginDocument{\DeclareFontShape{T1}{lmr}{bx}{sc}{<->ssub*lmr/bx/n}{}}
\newtheorem{theorem}{Theorem}
\newtheorem{lemma}[theorem]{Lemma}
\newtheorem{proposition}[theorem]{Proposition}
\newtheorem{corollary}[theorem]{Corollary}
\theoremstyle{definition}

\theoremstyle{remark}

\newcommand{\E}{\mathbb E}
\newcommand{\PP}{\mathbb P}

\newcommand{\Var}{\operatorname{Var}}
\newcommand{\Cov}{\operatorname{Cov}}
\newcommand{\eps}{\varepsilon}
\newcommand{\one}{\mathbf 1}

\newcommand{\supp}{\operatorname{supp}}
\newcommand{\TV}{\operatorname{d_{TV}}}
\newcommand{\diag}{\operatorname{diag}}

\newcommand{\cN}{\mathcal N}

\graphicspath{{figures/}}

\newcounter{manuscriptpart}


\hypersetup{pdftitle={Second-Order Image Characterization with Distortion and Secrecy},pdfauthor={Daming Cao}}

\ifdefined\HighlightEdits
  \DeclareRobustCommand{\edit}[1]{{\bfseries #1}}
\else
  \DeclareRobustCommand{\edit}[1]{#1}
\fi
\newcommand{\tr}{\operatorname{tr}}

\begin{document}

\title{Second-Order Image Characterization with Distortion and Secrecy}
\author{Daming Cao}
\markboth{SECOND-ORDER IMAGE CHARACTERIZATION WITH DISTORTION AND SECRECY}{}
\maketitle
\begin{abstract}
\edit{Image characterization relates coding budgets to the probability carried by source cells and their output images. At second order, refining a cell can reveal unavailable information, while a change of reference measure can alter the relevant likelihood fluctuations. We establish image bounds that account for both effects. A finite-block partition theorem bounds the cost of merging component images by the joint conditional information spectrum of their label. It preserves continuous second-order profiles when that cost is negligible on the square-root blocklength scale. For reconstruction constraints, cell-dependent reference measures retain the actual distortion deficits in a single likelihood. A Gaussian evaluation gives matching second-order rates for Heegard--Berger coding with one uninformed decoder in the strict both-active regime, common reconstruction under a joint agreement and distortion criterion, and two-stage refinement with side information absent at the first decoder. A structured binary source illustrates partition stability even when its global density and conditional-divergence deficits are linear in blocklength. Posterior-mass images also sharpen extraction after a public refinement by charging its residual information rather than its alphabet size. These results preserve the joint constraints and the observations available for choosing the local coding experiment.}
\end{abstract}
\begin{IEEEkeywords}
Image characterization, second-order asymptotics, source coding, distortion, secrecy.
\end{IEEEkeywords}

\section{Introduction}
\edit{A message produced by a source encoder identifies a cell of possible source sequences. The descriptions available at the decoders constrain the size of an output image, or the size of each of its sections. Image characterization connects these constraints to the probability of successful reconstruction and provides a common route to coding converses~\cite[Chs.~15--16]{ck}. Weighted image measures extend the same viewpoint to distortion and secrecy~\cite[Ch.~17]{ck}.}

\edit{At a fixed nonzero error probability, the relevant refinement of a message budget is of order $\sqrt n$. A first-order image estimate with an unspecified $o(n)$ loss does not resolve this term. Moreover, several constraints can be active on the same successful event. Their joint fluctuation depends on the physical channel, and a coding choice may respond only to information available at the terminal that makes it. An encoder can adapt to its observed source type or energy, but not to decoder-only noise.}

\edit{These restrictions create two obstacles. An arbitrary message cell may not have the composition or covariance needed for a local approximation. Partitioning it into regular components introduces a label whose availability must be accounted for. Reconstruction imposes a different difficulty: acceptable sets may overlap, and a reference that ignores one decoder's reproduction can discard an active distortion constraint. The question is therefore how to refine or reweight an image while preserving its second-order operational meaning.}

\edit{We address the first obstacle by merging component images through posterior lists. The additional section budget is controlled by the information still needed to identify the component after the section index is known. This gives a two-sided finite-block comparison, not only a converse. For reconstruction, we attach the output reference to the reproduction assigned to each source cell. The resulting likelihood contains both distortion deficits, and a pair of likelihood comparisons cancels the same analytical description cost. In both routes, the joint Gaussian approximation is performed only after the admissible source-side choice has been fixed.}

\newpage
\subsection{Contributions}
\begin{enumerate}
\item \edit{We prove stability of simultaneous image constraints under recoverable source partitions. The comparison retains one joint exceptional event, projected sections, and prescribed actual costs. A common selector header implements the partition when the encoder observes its label and the required communication links are available. An explicit structured source attains a matching second-order rate beyond the small-density-deficit and small-conditional-divergence regimes.}
\item \edit{We characterize a two-level Gaussian distortion image using reconstruction-dependent references. Its converse starts from arbitrary measurable cells and matches a source-adaptive shell construction. The resulting rate applications are Gaussian Heegard--Berger coding with one uninformed decoder, common reconstruction under a joint agreement/distortion error criterion, and the corresponding successive-refinement corner.}
\item \edit{We extend the image bounds to coupled outputs, overlapping reconstruction policies, and posterior-mass constraints. The coupled-output bounds retain compatible helper sections. For extraction, the conditional information in a public refinement replaces a worst-case alphabet-size penalty and yields a second-order invariance result.}
\end{enumerate}

\subsection{Relation to existing methods}
\edit{The type and image methods of Csisz\'ar and K\"orner~\cite{ck} provide the first-order setting. Cao's single-output posterior characterization~\cite{imagecomp}, recalled as Lemma~\ref{comp:hi:thm:input}, supplies the contact identities and optimizer family at a prescribed rate pair. The regular weighted refinement is given in~\cite{wak}. The joint posterior extension uses this common geometry for every active section, whereas the conditional-composition and partition bounds are proved directly.}

\edit{Controlled auxiliaries provide another way to obtain regular image components. Tan~\cite{tanbc} combines Gaussian entropy dissipation with the strong functional representation lemma of Li and El Gamal~\cite{sfrl} to construct a message partition for the scalar Gaussian broadcast channel. The conditional label cost and the centered likelihood approximation have separate roles. Our partition comparison isolates the image operation justified by such a construction and applies to arbitrary finite output alphabets. The simultaneous seed-selection lemma ensures that one partition controls all specified outputs.}

\edit{For reconstruction with side information, likelihood bounds, Poisson matching, and type-deviation convergence give nonasymptotic and second-order constructions~\cite{wkt,poisson,lili}. Zhou~\cite{consolidated} establishes the discrete Wyner--Ziv characterization in the setting considered there. Xu and Zhou~\cite{xuzhou} obtain the scalar Gaussian characterization using source-dependent shells and a converse based on posterior regularization. Their reconstruction repair and smoothing estimates underlie the Gaussian analysis below. The additional cell-dependent reference preserves the uninformed reconstruction, and its distortion deficit must be included in the entropy and enhancement calculations. Common reconstruction~\cite{steinbergcr,ahmadi} and successive refinement~\cite{steinbergmerhav} then impose agreement or additional message-budget constraints on this experiment.}

\edit{Weighted-reference methods are also related to Brascamp--Lieb duality and reverse hypercontractivity~\cite{blfr,blinfo,lvhv}. For Gaussian Gray--Wyner coding, Zhou and Yu~\cite{zhouyu} establish the bivariate positive-common-rate characterization through covariance-shell covering. Section~\ref{sec:gw-comparison} compares its supporting-weight dispersion with the quadratic weighted-image expression. The identity identifies the same source fluctuation in the two representations on their shared bivariate regime.}

\edit{The order of operational choices matters beyond source coding. Oufkir and Berta~\cite{oufkir} show a second-order advantage of joint non-signalling channel conversion over concatenated coding and simulation. The allowable assistance differs from the source observations here, but both settings require one admissible transformation for the whole experiment. For security, extraction must likewise be distinguished from reliable agreement~\cite{htw}. Our posterior-mass results concern the actual extracted key and public information; communication-constrained key generation additionally requires a compatible reliable partition~\cite{hentila}.}

\subsection{Organization}
Sections~\ref{sec:reference}--\ref{sec:security} develop the image methods in Table~\ref{tab:methods}. Their matching applications, Gaussian comparisons, and nonmatching bounds appear in Sections~\ref{sec:tight}--\ref{sec:nontight} and Tables~\ref{tab:tight}--\ref{tab:gaps}. The appendices supply supporting proofs.

\begin{table}[H]
\centering\small
\caption{Image methods and the constraint retained.}\label{tab:methods}
\renewcommand{\arraystretch}{1.15}
\begin{tabularx}{\textwidth}{@{}>{\raggedright\arraybackslash}p{.29\textwidth}>{\raggedright\arraybackslash}X@{}}
\toprule Method & Operational content \\
\midrule
Single-output reference, Lemma~\ref{comp:hi:thm:input} & Posterior contacts and the optimizer family at a fixed rate pair. \\
Joint and marked images, Theorems~\ref{comp:hi:lem:vector-transfer},~\ref{new:dense} & Coupled outputs, projected sections, and actual costs on one event. \\
Recoverable partitions, Theorem~\ref{thm:partition-stability} & Two-sided merging controlled by the label's joint conditional information spectrum. \\
Spectral stability, Proposition~\ref{prop:spectral-stability} & Separate control of observable means, likelihood remainders, and joint covariance. \\
Reconstruction-constrained image, Theorem~\ref{thm:gaussian-distortion-image} & Both distortion deficits and the same analytical cost in paired likelihood bounds. \\
Policy and private images, Theorem~\ref{v5:policy-lp}, Proposition~\ref{prop:public-refinement} & Overlapping reconstruction policies; posterior mass after public information. \\
\bottomrule
\end{tabularx}
\end{table}

\begin{table}[H]
\centering\small
\caption{Direct matching rate applications.}\label{tab:tight}
\renewcommand{\arraystretch}{1.15}
\begin{tabularx}{\textwidth}{@{}>{\raggedright\arraybackslash}p{.32\textwidth}>{\raggedright\arraybackslash}X@{}}
\toprule Application & Scope and result \\
\midrule
Gaussian Heegard--Berger & One uninformed decoder; $0<a<p$, $0<b<aq/(a+q)$; Corollary~\ref{cor:hb-gaussian-rate}. \\
Gaussian common reconstruction & Joint agreement/distortion error; scalar and nested cases; Theorem~\ref{thm:cr-rates}. \\
Two-stage refinement & Side information absent at the first decoder; both-active corner; Corollary~\ref{cor:sr-image}. \\
Structured conditional source & Exponentially many regular components with a recoverable label; Proposition~\ref{prop:structured-cells}. \\
Compatible helper intersections & Active sections satisfying the stated mean and likelihood conditions; Theorem~\ref{comp:hi:thm:mixed}. \\
Conditional-type extraction & Prescribed types and public refinements of negligible conditional information; Theorem~\ref{thm:type-extraction}, Corollary~\ref{cor:public-extraction}. \\
\bottomrule
\end{tabularx}
\end{table}

\begin{table}[H]
\centering\small
\caption{Established Gaussian comparisons.}\label{tab:benchmarks}
\renewcommand{\arraystretch}{1.15}
\begin{tabularx}{\textwidth}{@{}>{\raggedright\arraybackslash}p{.32\textwidth}>{\raggedright\arraybackslash}X@{}}
\toprule Model & Characterization used \\
\midrule
Scalar Gaussian Wyner--Ziv & Source-observable joint Gaussian profile at fixed interior distortion~\cite{xuzhou}; Lemma~\ref{v5:wz-inner}. \\
Bivariate Gaussian Gray--Wyner & Supporting-weight characterization at positive common rate~\cite{zhouyu}; Section~\ref{sec:gw-comparison}. \\
\bottomrule
\end{tabularx}
\end{table}

\begin{table}[H]
\centering\small
\caption{Nonmatching bounds and the missing image estimate.}\label{tab:gaps}
\renewcommand{\arraystretch}{1.15}
\begin{tabularx}{\textwidth}{@{}>{\raggedright\arraybackslash}p{.32\textwidth}>{\raggedright\arraybackslash}X@{}}
\toprule Scope & Remaining requirement \\
\midrule
General helper codes & Recoverable regular components or small conditional-likelihood losses, with compatible means and encoder operations. \\
General both-lossy reconstruction & A matching asymptotic evaluation of the block reconstruction-policy inverse. \\
Two noisy reconstruction stages & Joint residual geometry under the original decoder observations. \\
Reliable-key generation & One public partition attaining both agreement and the private-image spectrum. \\
\bottomrule
\end{tabularx}
\end{table}


\section{Image Experiments and a Single-Output Reference}\label{sec:reference}
\subsection{Operational conventions}
All logarithms are natural. Alphabets are finite and fixed unless a Gaussian model is specified. The source distribution has full support after null letters are removed. A channel is a fixed stochastic kernel; zeros are allowed unless positivity is explicitly required. Write $T_n(q)$ for an $n$-type class and $\mathcal T_n(Q_{X|U}\mid u^n)$ for a conditional type class. A probability measure on posterior distributions and an auxiliary stochastic matrix are different objects. For a positive scalar $v$, $\phi_{vI_n}$ denotes the density of $\mathcal N(0,vI_n)$.

For $G\sim\mathcal N(0,\Sigma)$ define $\Phi_\Sigma(b)=\Pr\{G\le b\}$ componentwise, and write $\Phi$ and $Q=1-\Phi$ for the standard normal distribution and tail. Positive marginal variances make rectangle boundaries null even when $\Sigma$ is singular. For a finite source $X\sim p$,
\[
 Z_n=\sqrt n(\widehat p_{X^n}-p)\Rightarrow Z\sim\mathcal N(0,\diag(p)-pp^{\mathsf T}),
 \qquad H'_p(z)=-\sum_xz(x)\log p(x).
\]
A source-coding coefficient vector $L$ describes budgets $\log M_{i,n}=nr_i+\sqrt nL_i+o(\sqrt n)$. At a zero first-order rate, $L_i\ge0$. Closed coefficient regions allow favorable $o(\sqrt n)$ budget adjustments. Error is the probability that any required decoding or actual block-distortion constraint fails. Total variation is $\TV(P,Q)=\frac12\|P-Q\|_1$ and is evaluated for the actual extracted key, including decoding-error outcomes and all public information. We write $R_n=o_P(a_n)$ when $R_n/a_n$ converges to zero in probability and $R_n=O_P(a_n)$ when this ratio is tight. An eventwise statement on $E_n$ means the corresponding statement for $1_{E_n}R_n$ under the original law.

To obtain a rate theorem from an image bound, the admissible sets must be realized by an encoder and its decoders. In a conditional-type experiment, $u^n$ is fixed and known to the terminals that index their codebooks by it. In a helper problem, the helper message supplies this description.

\subsection{Intuitive proof outline and key insights}
\edit{The starting point is a code-induced cell and the image constraints imposed by its decoders. If the cell has a fixed conditional composition, each image score is a sum of independent channel increments after the source sequence is fixed. The main issue is then to retain all scores on the same event. The posterior method reaches a similar local experiment for arbitrary cells by controlling their distance from the optimizer family at the prescribed first-order rates.}

\edit{A cell may remain too irregular for either description. Refinement separates it into components, but gives the component-image problem an extra label. Posterior lists measure how much of that label is missing at each section. Merging only the labels on those lists recovers an admissible image with a controlled increase in its section budgets. A decoder that needs the label itself uses a common selector header, provided the coding architecture supplies the necessary link.}

\edit{For distortion, counting alone is insufficient because several source sequences may share an acceptable reproduction. A reference centered at the reproduction assigned to a cell converts its distortion constraint into a likelihood deficit. In the Gaussian two-level problem, both deficits enter with positive multipliers. This identifies the posterior geometry needed for a converse and explains why the coarse reconstruction cannot be replaced by an unconditional output reference.}

\edit{The final step separates observable variation from residual channel noise. A likelihood mean remains part of the admissible allocation, while only a controlled centered remainder is discarded. The construction must implement the same allocation before the channel noise is observed. This order is also essential for private images: public information changes the posterior mass available to extraction and must be included before the spectrum is evaluated.}

\subsection{The reference case \texorpdfstring{$U-X-Y$}{U-X-Y}}
The single-output posterior characterization of Cao~\cite{imagecomp} describes the image of a source cell at a prescribed first-order rate pair. We recall its contact identities and optimizer family, which will determine the mean and covariance constraints in the joint-image extension. The related regular weighted formulation appears in~\cite{wak}.
Let $X$ and $Y$ take values in finite alphabets $\mathcal X$ and
$\mathcal Y$, and write their law as $p(x)W(y|x)$.  Throughout this
section,
\begin{equation}
 p(x)>0,\qquad W(y|x)>0\qquad(x\in\mathcal X,y\in\mathcal Y).
 \label{comp:hi:eq:positive}
\end{equation}
Fix $c>1$, and define
\begin{equation}
 \phi_c=\min_{U-X-Y}\{I(U;X)+cH(Y|U)\}.
 \label{comp:hi:eq:value}
\end{equation}
The channel $W$ is fixed in this minimization.  Fix an optimizer-induced
pair $\boldsymbol r=(r_h,r_\Sigma)$ satisfying
\begin{equation}
 r_h>0,\qquad r_h+cr_\Sigma=\phi_c,
 \qquad a_0=H(p)-r_h.
 \label{comp:hi:eq:informative}
\end{equation}
In particular, both coordinates, rather than only their weighted sum,
are prescribed.

For a probability vector $q$ on $\mathcal X$, set
\begin{equation}
 \begin{aligned}
 F_c(q)&=D(q\Vert p)+cH(qW),\\
 h(q,y)&=-\log(qW)(y),\qquad
 m(q,x)=\sum_yW(y|x)h(q,y).
 \end{aligned}
 \label{comp:hi:eq:posterior-functions}
\end{equation}
For an auxiliary $T=P_{U|X}$, its posterior measure is
$\mu_T=\sum_uP_U(u)\delta_{P_{X|U=u}}$.  Conversely, a finite posterior
measure $\sum_uw_u\delta_{q_u}$ with barycenter $p$ defines
$T(u|x)=w_uq_u(x)/p(x)$.

\begin{lemma}[Single-output posterior image characterization {\cite{imagecomp}}]
\label{comp:hi:thm:input}
Under \eqref{comp:hi:eq:positive}--\eqref{comp:hi:eq:informative}, there is a unique
vector $j:\mathcal X\to\mathbb R$ such that
\begin{equation}
 \langle q,j\rangle\le F_c(q),\qquad
 \langle p,j\rangle=\phi_c.
 \label{comp:hi:eq:dual}
\end{equation}
Let $d(q)=F_c(q)-\langle q,j\rangle$ and
$\mathcal C=\{q:d(q)=0\}$.  The contact set $\mathcal C$ is compact
and contained in the interior of the probability simplex.  At every
contact,
\begin{equation}
 \log\frac{q(x)}{p(x)}+cm(q,x)=j(x)
 \qquad(x\in\mathcal X).
 \label{comp:hi:eq:contact}
\end{equation}
The complete optimizer fiber at the prescribed rate pair is
\begin{equation}
 \begin{split}
 \mathcal M_{\boldsymbol r}=\bigg\{\mu:\ &\supp\mu\subseteq\mathcal C,
 \quad\int q\,d\mu=p,\\
 &\int H(q)\,d\mu=a_0\bigg\}.
 \end{split}
 \label{comp:hi:eq:fiber}
\end{equation}
Its elements are probability measures.  They also satisfy
$\int H(qW)\,d\mu=r_\Sigma$.  Define
\begin{equation}
 \begin{aligned}
 v(q)&=c^2\sum_xq(x)\Var_{W(\cdot|x)}h(q,Y),\\
 \tau_-^2&=\min_{\mu\in\mathcal M_{\boldsymbol r}}\int v\,d\mu,\\
 \tau_+^2&=\max_{\mu\in\mathcal M_{\boldsymbol r}}\int v\,d\mu.
 \end{aligned}
 \label{comp:hi:eq:variance-range}
\end{equation}
Then $0<\tau_-\le\tau_+<\infty$, and all intermediate variances are
attainable by finite auxiliaries preserving $\boldsymbol r$.

The scalar image assertion has the following uniform form.  Suppose $p_n\to p$, $\mathcal X_n\subseteq T_n(p_n)$, and
output budgets $b_n$ satisfy
\begin{equation}
 \begin{gathered}
 \frac{\log|\mathcal X_n|}{n}\to a_0,\qquad
 \frac{b_n}{n}\to r_\Sigma,\\
 s_n=\frac{cb_n-\log|\mathcal X_n|
        -n\langle j+\log p,p_n\rangle}{\sqrt n}\to s.
 \end{gathered}
 \label{comp:hi:eq:input-scaling}
\end{equation}
For every $\mathcal D_n$ with $\log|\mathcal D_n|\le b_n$,
\begin{equation}
 \limsup_n\frac{1}{|\mathcal X_n|}
      \sum_{x^n\in\mathcal X_n}W^n(\mathcal D_n|x^n)
 \le f_{\boldsymbol r}(s),
 \label{comp:hi:eq:scalar-image}
\end{equation}
where
\begin{equation}
 f_{\boldsymbol r}(s)=\sup_{\tau\in[\tau_-,\tau_+]}\Phi(s/\tau)
 =\begin{cases}\Phi(s/\tau_-),&s\ge0,\\
                 \Phi(s/\tau_+),&s<0.
   \end{cases}
 \label{comp:hi:eq:scalar-envelope}
\end{equation}
The bound is uniform when the three first-order errors in
\eqref{comp:hi:eq:input-scaling} vanish uniformly and $s_n$ lies in a fixed
bounded interval.  Separate bounded second-order offsets for the input
and output cardinalities are not required.
\end{lemma}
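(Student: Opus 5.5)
The plan has two parts: first the variational and duality facts, then the scalar image bound.

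\emph{Duality and contact set.} The value $\phi_c$ is the lower convex envelope of $F_c$ evaluated at $p$. Indeed, $I(U;X)+cH(Y|U)=\int F_c\,d\mu_T$, because $D(q\Vert p)$ averaged over posteriors gives $I(U;X)$ and $H(qW)$ averaged gives $H(Y|U)$; hence $\phi_c=\inf\{\int F_c\,d\mu:\int q\,d\mu=p\}$. A supporting hyperplane of this envelope at the interior point $p$ gives a vector $j$ satisfying \eqref{comp:hi:eq:dual}.

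For uniqueness of $j$, I would show the envelope is differentiable at $p$. Because $c>1$ and $r_h>0$, the optimizers are nondegenerate: some posterior differs from $p$, and the envelope is affine along the optimizer's support. Near each contact, $F_c$ is smooth, so every supporting functional must equal the gradient there, which forces \eqref{comp:hi:eq:contact}. That equation follows by differentiating $d(q)\ge 0$ at an interior minimum. Differentiating $D(q\Vert p)$ gives $\log(q/p)+1$, and differentiating $cH(qW)$ gives $c\,m(q,x)-c$; the constants cancel on the simplex.

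Interiority of $\mathcal C$ comes from the $\log q(x)$ term. Since $W>0$, $m$ stays bounded as $q(x)\to0$, while $\partial D/\partial q(x)\to-\infty$, so moving mass into $x$ lowers $d$ and no boundary point can be a contact. Compactness then follows from continuity of $d$.

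\emph{Optimizer fiber and variance range.} Characterize the fiber $\mathcal M_{\boldsymbol r}$ as the measures on $\mathcal C$ with barycenter $p$ whose $\int D\,d\mu$ equals $r_h$. Since $\int D(q\Vert p)\,d\mu=H(p)-\int H(q)\,d\mu$, this is the condition $\int H(q)\,d\mu=a_0$; then $r_\Sigma$ follows from $r_h+cr_\Sigma=\phi_c$. Probability measures are automatic.

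$\mathcal M_{\boldsymbol r}$ is weak-* compact and convex, and $v$ is continuous, so $\tau_\pm^2$ are attained and every intermediate value is reached by mixing two optimizers. Carath\'eodory reduces these to finite support of size at most $|\mathcal X|+2$, giving finite auxiliaries. To get $\tau_->0$, note that $v(q)=0$ would force $h(q,\cdot)$ to be constant on each row. With $W>0$, that means $qW$ is uniform, and I would rule out such contacts carrying full mass using $r_h>0$, or else handle it by continuity of $v$ on $\mathcal C$.

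\emph{Scalar image bound.} The key is the identity, valid on a type class,
\[
 c\,b_n-\log|\mathcal X_n|-n\langle j+\log p,p_n\rangle\approx c\,b_n-\sum_t\bigl[\text{posterior score}\bigr].
\]
I would decompose $\mathcal X_n$ by an auxiliary, meaning the empirical posterior of $x^n$ given $\mathcal D_n$, and use the weighted counting bound
\[
 W^n(\mathcal D_n|x^n)\le\Pr\{-\log (q W)^n(Y^n)\le b_n+\text{slack}\}+e^{-\text{slack}}
\]
for the conditional output type $q$. The score $-\log(qW)^n(Y^n)$ is a sum of independent terms under $W^n(\cdot|x^n)$ with mean $n\sum_x q(x)m(q,x)$ and variance $n\,v(q)/c^2$. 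Combining \eqref{comp:hi:eq:contact} with the budget produces a Berry--Esseen bound of $\Phi(s/\tau(q))$ plus a penalty proportional to $d(q)$.

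Compactness and quantitative control then show that posteriors far from $\mathcal C$ cost an $\Omega(n\,d)$ exponent, so the cell's posterior measure concentrates near $\mathcal M_{\boldsymbol r}$. Averaging $\Phi(s/\tau)$ over that measure is bounded by its supremum over $[\tau_-,\tau_+]$, which yields \eqref{comp:hi:eq:scalar-envelope}. Uniformity follows from the uniform Berry--Esseen constant together with the bounded range of $s_n$.

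\emph{Main obstacle.} The hard step is the converse for an arbitrary cell $\mathcal D_n$. Its induced posteriors need not be regular, and the second-order contribution from the distance to $\mathcal C$ must be controlled at the $\sqrt n$ scale, not just $o(n)$. I would first try a quadratic growth estimate $d(q)\gtrsim\mathrm{dist}(q,\mathcal C)^2$ together with smoothing over posterior lists.
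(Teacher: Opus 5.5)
Your variational part is essentially sound. It covers the supporting hyperplane at the interior point $p$, uniqueness (an admissible $j$ must match the gradient of $F_c$ at an interior point of an optimizer's support), interiority from the $-\infty$ boundary derivative, the fiber via $\int D(q\Vert p)\,d\mu=H(p)-\int H(q)\,d\mu$, and Carath\'eodory for finite auxiliaries. To finish $\tau_->0$: on $\mathcal C$, $v(q)=0$ makes $qW$ uniform and hence $m(q,\cdot)$ constant. Then \eqref{comp:hi:eq:contact} forces $q\propto pe^{j}$, and a fiber measure carried by this single point is $\delta_p$, contradicting $r_h>0$.

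The genuine gap is the scalar image bound. Your inequality $W^n(\mathcal D_n|x^n)\le\Pr\{-\log(qW)^n(Y^n)\le b_n+\gamma\}+e^{-\gamma}$ is false for a single input word, because counting controls mass only under the measure used to count. Take a binary symmetric channel with crossover $0<\theta<1/2$, a balanced $x^n$, $q$ uniform, $\mathcal D_n=\{x^n\}$, $b_n=0$ and $\gamma=n\log 2-1$: the left side is $(1-\theta)^n$ and the right side is $e\,2^{-n}$. Counting must be done under the cell's averaged output law. Moreover, one product reference $(qW)^n$ applied to a word of type $p_n$ has mean $n\langle p_n,m(q,\cdot)\rangle$ and variance $n\sum_xp_n(x)\Var_{W(\cdot|x)}h(q,Y)$, not $n\langle q,m(q,\cdot)\rangle$ and $nv(q)/c^2$. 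By \eqref{comp:hi:eq:contact}, $c\langle p,m(q,\cdot)\rangle=\phi_c+D(p\Vert q)$ for $q\in\mathcal C$, so the mean of $c\sum_ih(q,Y_i)$ exceeds $cb_n$ by about $n[r_h+D(p\Vert q)]$. Your bound would then assert vanishing success, which already fails for a conditional type class of an optimizing auxiliary with its conditionally typical image. Finally, averaging $\Phi(s/\sqrt{v(q)})$ over posteriors is the wrong limit object. Given the whole source word, the channel fluctuations add into one Gaussian whose variance is an average $\int v\,d\widehat\mu_n$. The interval $[\tau_-^2,\tau_+^2]$ constrains such averages over fiber measures, not individual values $v(q)$, so a mixture of the $\Phi(s/\sqrt{v(q)})$ need not lie below $f_{\boldsymbol r}(s)$.

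The missing idea is the sequential source posterior. The paper takes it from \cite{imagecomp} and carries it out, for the aggregate row, in the proof of Theorem~\ref{comp:hi:lem:vector-transfer} (Appendix~\ref{app:posterior}). The posteriors come from the cell alone, not from $\mathcal D_n$: under the uniform law on $\mathcal X_n$, put $q_i=P(X_i\in\cdot\mid X^{i-1})$. After counting under the actual output law, the nonnegative supermartingale of Lemma~\ref{comp:hi:lem:weighted-spectrum} replaces each output predictor by $q_iW$, at the price of an extra $\gamma$ and $e^{-\gamma}$. Because $\prod_iq_i(X_i)=|\mathcal X_n|^{-1}$, the score satisfies the exact identity \eqref{comp:hi:eq:aggregate-identity}, $c\sum_ih(q_i,Y_i)=\log|\mathcal X_n|+n\langle j+\log p,p_n\rangle+D_n+R_n+\sqrt nN_{0,n}$. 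Here $D_n=\sum_id(q_i)\ge0$; $R_n=\sum_ir(q_i,X_i)$, with $r(q,x)=\log\frac{q(x)}{p(x)}+cm(q,x)-j(x)-d(q)$, is a source martingale whose conditional variance vanishes on $\mathcal C$; and $N_{0,n}$ is a sum of independent channel terms once $X^n$ is fixed.

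This resolves your main obstacle without any growth bound $d(q)\ge C\,\mathrm{dist}(q,\mathcal C)^2$, which the hypotheses do not supply. Distance from $\mathcal C$ is not an $\Omega(nd)$ exponent but a threshold shift $D_n/\sqrt n$ of favorable sign, so it is simply dropped. Successful histories with $D_n>K\sqrt n$ have mass $O(K^{-2})$ by second moments and boundedness of $s_n$. On the remaining histories the stopped residual is $o_{\PP}(\sqrt n)$ by \eqref{comp:hi:eq:stopped-residual}, using only $\sup_{d(q)\le u}\sum_xq(x)r(q,x)^2\downarrow0$ from compactness. The empirical measure $\widehat\mu_n=n^{-1}\sum_i\delta_{q_i}$ then has barycenter near $p_n$, $\int H\,d\widehat\mu_n$ near $n^{-1}\log|\mathcal X_n|$ by the exact chain rule, and $\int d\,d\widehat\mu_n\le K/\sqrt n$. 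Hence its limit points lie in $\mathcal M_{\boldsymbol r}$; the entropy constraint is what fixes the rate pair rather than only the weighted sum. Only after this localization is the central limit theorem applied conditionally on all of $X^n$, with variance near some $\int v\,d\mu$, $\mu\in\mathcal M_{\boldsymbol r}$, which gives $f_{\boldsymbol r}(s)$ uniformly on bounded windows.
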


For several output tests, the contact identity supplies a common mean decomposition and the complete fixed-pair family constrains covariance selection to the prescribed first-order rates. The next section combines this geometry with the joint conditional output law to determine the correlations among those tests.


\section{Joint Posterior Images and Compatible Sections}\label{comp:hi:sec:tools}\label{sec:posterior}
Apply Lemma~\ref{comp:hi:thm:input} with input $A=X$ and joint output $B=(B_1,\ldots,B_k)=Y$. Thus $p,W,c,j,\mathcal C$, and $\mathcal M_{\boldsymbol r}$ keep their reference-lemma meanings under this relabeling. The output coordinates need not be conditionally independent. A single image bound controls the aggregate rate, whereas separately encoded targets impose additional section constraints. All these constraints must be retained on one joint success event.

\subsection{Counting on the original success event}

Write $B=(B_1,\ldots,B_k)$ and let $S$ be a nonempty proper subset of
$[k]$.  Put $D=B_S$ and $C=B_{S^c}$.  A set
$\mathcal D\subseteq\mathcal B^n$ has section bound $M_S$ if
\begin{equation}
 \big|\{d^n:(d^n,c^n)\in\mathcal D\}\big|\le M_S
 \qquad\text{for every }c^n.
 \label{comp:hi:eq:section-cardinality}
\end{equation}
This property follows from separate encoding: after the helper index and
the complement source sequence are fixed, there are at most
$M_S=\prod_{j\in S}M_j$ remaining index tuples that can be decoded
correctly.

\begin{lemma}[Ancillary section spectrum]
\label{comp:hi:lem:ancillary}
Suppose $A-C-D$, and put
\begin{equation}
 t_S(b)=-\log P_{D|C}(d|c),\qquad
 m_S(a)=\E[t_S(B)|A=a].
 \label{comp:hi:eq:ancillary-score}
\end{equation}
If \eqref{comp:hi:eq:section-cardinality} holds, then for every $a^n$ and
$\gamma>0$,
\begin{equation}
 \PP\left\{B^n\in\mathcal D,
       \sum_i t_S(B_i)>\log M_S+\gamma\,\middle|\,A^n=a^n\right\}
 \le e^{-\gamma}.
 \label{comp:hi:eq:ancillary-spectrum}
\end{equation}
The same inequality holds after restricting $A^n$ to any nonempty
source subset and averaging under any law on that subset.
\end{lemma}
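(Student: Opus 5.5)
The plan is a standard change-of-measure count carried out section by section. Fix $a^n$. The Markov chain $A-C-D$ gives the conditional law of $B^n=(C^n,D^n)$ given $A^n=a^n$ in factored form:
\[
 P(c^n,d^n\mid a^n)=P_{C|A}^n(c^n\mid a^n)\,P_{D|C}^n(d^n\mid c^n).
\]
First condition additionally on $C^n=c^n$ and bound the inner probability over $D^n$ alone. Then average over $c^n$ with $P_{C|A}^n(\cdot\mid a^n)$.

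For fixed $c^n$, let $\mathcal D(c^n)=\{d^n:(d^n,c^n)\in\mathcal D\}$. By \eqref{comp:hi:eq:section-cardinality}, $|\mathcal D(c^n)|\le M_S$. The event $\sum_i t_S(B_i)>\log M_S+\gamma$ says exactly that
\[
 P_{D|C}^n(d^n\mid c^n)<e^{-\gamma}/M_S .
\]
Hence the conditional probability of the joint event given $(A^n,C^n)=(a^n,c^n)$ equals
\[
 \sum_{d^n\in\mathcal D(c^n)} P_{D|C}^n(d^n\mid c^n)\,
 \one\!\left\{P_{D|C}^n(d^n\mid c^n)<e^{-\gamma}/M_S\right\}
 \le |\mathcal D(c^n)|\,\frac{e^{-\gamma}}{M_S}\le e^{-\gamma}.
\]
Here the Markov property is what allows us to drop $a^n$ from the conditioning on $D^n$. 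Averaging this bound over $c^n$ yields \eqref{comp:hi:eq:ancillary-spectrum}.

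The extension to a source subset is immediate. The bound $e^{-\gamma}$ holds for every $a^n$ separately, so it survives restricting $A^n$ to any nonempty set and averaging under any law on that set: a convex combination of numbers each at most $e^{-\gamma}$ is at most $e^{-\gamma}$.

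The main point to check, not really an obstacle, is null events. The letters with $P_{D|C}(d\mid c)=0$ make $t_S=\infty$. They carry zero probability under the conditional law, so they contribute nothing to the sum. Likewise, $c^n$ with $P_{C|A}^n(c^n\mid a^n)=0$ contribute nothing, and the argument needs no positivity assumption. The score $m_S$ is not used in the inequality itself; it serves only to interpret the centering of the same statistic later.
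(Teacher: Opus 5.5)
Your proof is correct and follows essentially the same route as the paper. You condition on $C^n=c^n$, use the Markov chain and memorylessness to get the product law $P_{D|C}^n(\cdot\mid c^n)$ free of $a^n$, bound each of the at most $M_S$ tail words in the section by $e^{-\gamma}/M_S$, average over $C^n\mid A^n=a^n$, and extend to restricted or averaged source laws by the same pointwise-bound (convexity) observation.
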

\begin{proof}
Condition additionally on $C^n=c^n$.  The Markov condition and the
memoryless channel give the original product law
$P_{D|C}^n(\cdot|c^n)$, independently of the fixed $a^n$.  Each word in
the indicated tail has conditional probability smaller than
$e^{-\gamma}/M_S$.  There are at most $M_S$ correct words in this
section.  Summing their probabilities and then averaging over
$C^n|a^n$ proves the assertion.  Restricting the input law does not
change either the section cardinality or this conditional product law.
\end{proof}

\begin{lemma}[Weighted section spectrum and one prediction correction]
\label{comp:hi:lem:weighted-spectrum}
Let $A^n$ have any law, let $B^n|A^n$ follow $W^n$, and denote its
induced output law by $Q_{B^n}$.  If
\eqref{comp:hi:eq:section-cardinality} holds, then every probability law
$R_{C^n}$ satisfies
\begin{equation}
 \PP\left\{B^n\in\mathcal D,
       -\log\frac{Q_{B^n}(B^n)}{R_{C^n}(C^n)}
              >\log M_S+\gamma\right\}\le e^{-\gamma}.
 \label{comp:hi:eq:weighted-spectrum}
\end{equation}
For a positive product reference $R_{C^n}=R_C^n$, define the source-only
posteriors $q_i=Q(A_i\in\cdot|A^{i-1})$.  Then
\begin{multline}
 \PP\biggl\{B^n\in\mathcal D,
       \sum_i[h(q_i,B_i)+\log R_C(C_i)]
                \\>\log M_S+2\gamma\biggr\}\le2e^{-\gamma}.
 \label{comp:hi:eq:weighted-prediction}
\end{multline}
For any finite collection of sections, all these predictions use the
same exceptional event for the joint output $B^n$.
\end{lemma}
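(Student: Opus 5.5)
The plan is to prove \eqref{comp:hi:eq:weighted-spectrum} by a direct counting argument in the style of Lemma~\ref{comp:hi:lem:ancillary}, and then obtain \eqref{comp:hi:eq:weighted-prediction} by combining it with a martingale-type comparison between $Q_{B^n}$ and the sequential prediction $\prod_i (q_iW)(B_i)$.

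For the first inequality, write $Q_{B^n}(b^n)=R_{C^n}(c^n)\cdot Q_{B^n}(b^n)/R_{C^n}(c^n)$. On the tail event we have
\[
Q_{B^n}(b^n)<e^{-\gamma}R_{C^n}(c^n)/M_S .
\]
Summing over $b^n\in\mathcal D$ in the tail, fix $c^n$ first. By \eqref{comp:hi:eq:section-cardinality} there are at most $M_S$ words $d^n$, so the inner sum is below $e^{-\gamma}R_{C^n}(c^n)$. Summing over $c^n$ then gives at most $e^{-\gamma}$, since $R_{C^n}$ is a probability law. No Markov or product structure is needed here, only that the probability is computed under the true output law $Q_{B^n}$.

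For the prediction correction, take $R_{C^n}=R_C^n$. Since $h(q_i,B_i)=-\log(q_iW)(B_i)$, the event in \eqref{comp:hi:eq:weighted-prediction} is contained in the union of two events. The first is the event of \eqref{comp:hi:eq:weighted-spectrum}. The second is
\[
\Bigl\{-\sum_i\log(q_iW)(B_i)>-\log Q_{B^n}(B^n)+\gamma\Bigr\},
\]
that is, $\prod_i(q_iW)(B_i)<e^{-\gamma}Q_{B^n}(B^n)$. The key step is a change-of-measure bound on this second event. Define $P'(b^n)=\prod_i(q_iW)(b_i)$ with $q_i$ evaluated along the source path. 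This does not give a deterministic law on $\mathcal B^n$, because $q_i$ depends on $A^{i-1}$ rather than on $B^{i-1}$.

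I would therefore work on the joint space $(A^n,B^n)$. Let $\widetilde P(a^n,b^n)=Q(a^n)\prod_i(q_iW)(b_i)$; this is a probability law, since each factor $(q_iW)(\cdot)$ sums to one. The relevant ratio becomes $\widetilde P/P_{A^nB^n}$, and Markov's inequality gives $P\{\widetilde P<e^{-\gamma}P\}\le e^{-\gamma}$. The true joint density is $Q(a^n)\prod_iW(b_i|a_i)$, so this controls $\prod_i(q_iW)(B_i)$ against $\prod_iW(B_i|A_i)$ rather than against $Q_{B^n}(B^n)$. To repair this, I would apply the first inequality with the conditional law of $B^n$ given $A^n$ in place of $Q_{B^n}$. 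The counting argument of the first inequality works verbatim conditionally on $A^n=a^n$, because the only ingredient is that the law is a probability on $\mathcal B^n$. The resulting bound is
\[
\PP\Bigl\{B^n\in\mathcal D,\ -\log\tfrac{W^n(B^n|A^n)}{R_C^n(C^n)}>\log M_S+\gamma\Bigr\}\le e^{-\gamma}.
\]
Chaining the two inequalities and applying a union bound then yields the threshold $\log M_S+2\gamma$ and the probability $2e^{-\gamma}$. I regard this choice of intermediate comparison measure (joint versus marginal) as the main obstacle. I expect that the intended reading uses a conditional law of this kind so that the two exceptional events add.

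Finally, each of the two exceptional events is defined through $B^n$, $A^n$ and $R$ only, and not through the particular section $S$. The section cardinality enters only inside the counting bound, so for a finite collection of sections the Markov-inequality event $\{\widetilde P<e^{-\gamma}P\}$ is literally the same. The counting tails are then controlled section by section on the same joint output. This gives the common exceptional event claimed in the last sentence of the statement.
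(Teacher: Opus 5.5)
Your proof of \eqref{comp:hi:eq:weighted-spectrum} is correct. So is your reduction of \eqref{comp:hi:eq:weighted-prediction} to that counting event together with $\{\prod_i(q_iW)(B_i)<e^{-\gamma}Q_{B^n}(B^n)\}$; both match the paper.

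The gap is in how you bound the second event. From $\E_P[\widetilde P/P]\le1$, Markov's inequality controls only the upper tail, $P\{\widetilde P>e^{\gamma}P\}\le e^{-\gamma}$. Your chaining needs the lower tail $P\{\widetilde P<e^{-\gamma}P\}\le e^{-\gamma}$, and that is false in general. Conditionally on $\sigma(A^{i-1},B^{i-1})$ one has $A_i\sim q_i$ and $B_i\sim q_iW$. Hence $\log[W^n(B^n\mid A^n)/\prod_i(q_iW)(B_i)]$ has mean $\sum_i\E\, I(q_i,W)$, the accumulated mutual information of $W$ at the inputs $q_i$. For an iid source this is $nI(A;B)$, the log-ratio concentrates there, and $P\{\widetilde P<e^{-\gamma}P\}\to1$. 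Equivalently, in the iid case your intermediate claim $\sum_ih(q_i,B_i)\le-\log W^n(B^n\mid A^n)+\gamma$ would put $-\log Q_{B^n}(B^n)\approx nH(B)$ below $-\log W^n(B^n\mid A^n)\approx nH(B\mid A)$.

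So the comparison with $W^n(\cdot\mid A^n)$ has the wrong orientation. $W(\cdot\mid A_i)$ predicts $B_i$ from strictly more information than $q_iW$, so $q_iW$ typically loses to it by order $n$. The union bound then gives nothing.

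What you need is a comparison against the coarser, output-only predictor, done along a filtration rather than by a single change of measure.

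\begin{itemize}
\item Put $s_i=q_iW$ and $a_i(b)=Q(B_i=b\mid B^{i-1})$, and use $\mathcal F_i=\sigma(A^i,B^i)$.
\item By memorylessness, past outputs carry no information about $A_i$ beyond $A^{i-1}$. So the conditional law of $B_i$ given $\mathcal F_{i-1}$ is exactly $s_i$, while $a_i$ is $\mathcal F_{i-1}$-measurable.
\item Hence $\E[a_i(B_i)/s_i(B_i)\mid\mathcal F_{i-1}]=\sum_{b:s_i(b)>0}a_i(b)\le1$.
\item Therefore $\prod_ia_i(B_i)/s_i(B_i)=Q_{B^n}(B^n)/\prod_is_i(B_i)$ is a nonnegative supermartingale under the true law, with mean at most one.
\item Markov's inequality for this product gives $P\{\prod_is_i(B_i)<e^{-\gamma}Q_{B^n}(B^n)\}\le e^{-\gamma}$ directly.
\end{itemize}

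Combined with \eqref{comp:hi:eq:weighted-spectrum} and a union bound, this proves \eqref{comp:hi:eq:weighted-prediction}, and no second counting bound is needed. This exceptional event involves neither $S$ nor $R_C$, which is what gives the common exceptional event for a finite collection of sections.
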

\begin{proof}
Use the measure $\nu_R(d^n,c^n)=R_{C^n}(c^n)$.  Its mass on
$\mathcal D$ is at most $M_S$, by
\eqref{comp:hi:eq:section-cardinality}.  On the event in
\eqref{comp:hi:eq:weighted-spectrum}, each output probability is smaller
than $e^{-\gamma}\nu_R(d^n,c^n)/M_S$.  Summation proves that inequality.

For the prediction step, set
\begin{equation}
 \begin{aligned}
 s_i(b)&=(q_iW)(b),\\
 a_i(b)&=Q(B_i=b|B^{i-1}),\\
 \Lambda_B&=\sum_i\log\frac{s_i(B_i)}{a_i(B_i)}.
 \end{aligned}
 \label{comp:hi:eq:prediction-def}
\end{equation}
Under the filtration $\mathcal F_i=\sigma(A^i,B^i)$, the conditional
law of $B_i$ given $\mathcal F_{i-1}$ is $s_i$.  Indeed, after the
source prefix is fixed, past channel randomness gives no further
information about the next source symbol.  Consequently,
\begin{equation}
 \E\left[\frac{a_i(B_i)}{s_i(B_i)}\,\middle|\,
              \mathcal F_{i-1}\right]=\sum_{b:s_i(b)>0}a_i(b)\le1.
 \label{comp:hi:eq:prediction-martingale}
\end{equation}
The product of these ratios is a nonnegative supermartingale, and is a martingale when $W$ is positive;
therefore $\PP\{\Lambda_B<-\gamma\}\le e^{-\gamma}$.  The exact
identity
\begin{equation}
 \sum_i[h(q_i,B_i)+\log R_C(C_i)]
   =-\log\frac{Q_{B^n}(B^n)}{R_C^n(C^n)}-\Lambda_B
 \label{comp:hi:eq:weighted-identity}
\end{equation}
proves \eqref{comp:hi:eq:weighted-prediction}.  Since $\Lambda_B$ does not
depend on the chosen section, only the counting exceptional events
must be union-bounded separately.
\end{proof}

\subsection{A multivariate posterior transfer}

Fix two finite collections of sections, $\mathscr A$ and
$\mathscr W$.  For $S\in\mathscr A$, use the ancillary mark in
\eqref{comp:hi:eq:ancillary-score}.  For $S\in\mathscr W$, use
\begin{equation}
 \begin{aligned}
 C_S&=B_{S^c},\\
 t_{C_S}(b)&=-\log P_{C_S}(b_{S^c}),\\
 m_{C_S}(a)&=\E[t_{C_S}(B)|A=a].
 \end{aligned}
 \label{comp:hi:eq:complement-score}
\end{equation}
Define the vector of output marks
\begin{equation}
 \boldsymbol\psi(q,b)=\left(ch(q,b),\ (t_S(b))_{S\in\mathscr A},\
       (c[h(q,b)-t_{C_S}(b)])_{S\in\mathscr W}\right)
 \label{comp:hi:eq:mark-vector}
\end{equation}
and the covariance-valued continuous function
\begin{equation}
 \begin{aligned}
 V(q)&=\sum_aq(a)\Cov_{W(\cdot|a)}\boldsymbol\psi(q,B),\\
 \Sigma_\mu&=\int V(q)\,d\mu(q).
 \end{aligned}
 \label{comp:hi:eq:mark-covariance}
\end{equation}
All its entries are bounded and continuous under
\eqref{comp:hi:eq:positive}.  In the corresponding source vector,
\begin{equation}
 K(a)=\left(j(a),\ (m_S(a))_{S\in\mathscr A},\
                 (j(a)-cm_{C_S}(a))_{S\in\mathscr W}\right),
 \label{comp:hi:eq:source-vector}
\end{equation}
the only auxiliary-dependent quantities have disappeared.  This fact
will make source-type-dependent optimizer selection operational.

\begin{theorem}[Simultaneous posterior image bound]
\label{comp:hi:lem:vector-transfer}
Assume the input geometry of Lemma~\ref{comp:hi:thm:input}.  Let
$A^n$ be uniform on $\mathcal A_n\subseteq T_n(p_n)$, where
$p_n\to p$ and $n^{-1}\log|\mathcal A_n|\to a_0$.  Suppose a correct
set $\mathcal D_n$ has total cardinality at most $e^{b_{0,n}}$ and
section cardinality at most $e^{b_{S,n}}$.  Assume the ancillary Markov
conditions for $\mathscr A$.  Define normalized thresholds by
\begin{equation}
 \begin{aligned}
 s_{0,n}&=\frac{cb_{0,n}-\log|\mathcal A_n|
                 -n\langle j+\log p,p_n\rangle}{\sqrt n},\\
 s_{S,n}^{\rm a}&=\frac{b_{S,n}-n\langle m_S,p_n\rangle}{\sqrt n},\\
 s_{S,n}^{\rm w}&=\frac{cb_{S,n}-\log|\mathcal A_n|
              -n\langle j+\log p-cm_{C_S},p_n\rangle}{\sqrt n}.
 \end{aligned}
 \label{comp:hi:eq:vector-thresholds}
\end{equation}
If $s_n=(s_{0,n},(s_{S,n}^{\rm a}),(s_{S,n}^{\rm w}))$ stays bounded,
the simultaneous image bound is
\begin{equation}
 \PP\{B^n\in\mathcal D_n\}
 \le \sup_{\mu\in\mathcal M_{\boldsymbol r}}
                 \Phi_{\Sigma_\mu}(s_n)+o(1).
 \label{comp:hi:eq:vector-image}
\end{equation}
Here and below the displayed probability formula applies directly when
all retained Gaussian marginal variances are positive.  For zero
marginals it denotes the associated closed coefficient bound, obtained
by arbitrarily small positive threshold shifts.  The error is uniform
on bounded threshold windows and vanishing first-order neighborhoods.
\end{theorem}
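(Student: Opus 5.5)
The plan is to reduce the event $\{B^n\in\mathcal D_n\}$ to a joint threshold event for one vector of sums, and then to apply a multivariate central limit approximation conditional on the source prefix. The reduction uses Lemmas~\ref{comp:hi:lem:ancillary} and~\ref{comp:hi:lem:weighted-spectrum} on a single exceptional event.

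\textbf{Counting step.} Take $Q$ to be the law of $A^n$, uniform on $\mathcal A_n$, and let $q_i=Q(A_i\in\cdot\,|A^{i-1})$ be the source posteriors. Choose $\gamma_n\to\infty$ with $\gamma_n=o(\sqrt n)$.
\begin{itemize}
\item For the total cardinality bound, use the trivial section $S=[k]$ with $C$ empty and $R\equiv1$ in \eqref{comp:hi:eq:weighted-prediction}. Off an event of probability $2e^{-\gamma_n}$, success forces $\sum_i h(q_i,B_i)\le b_{0,n}+2\gamma_n$.
\item For $S\in\mathscr W$, apply \eqref{comp:hi:eq:weighted-prediction} with $R_C=P_{C_S}$. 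This gives $\sum_i[h(q_i,B_i)-t_{C_S}(B_i)]\le b_{S,n}+2\gamma_n$.
\item For $S\in\mathscr A$, Lemma~\ref{comp:hi:lem:ancillary} gives $\sum_i t_S(B_i)\le b_{S,n}+\gamma_n$.
\end{itemize}
All the prediction corrections share the same $\Lambda_B$, so a union bound over finitely many counting events keeps the exceptional probability $o(1)$.

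\textbf{Centering step.} Multiply the first and $\mathscr W$ inequalities by $c$. Then write $\sum_i\boldsymbol\psi(q_i,B_i)=\sum_i\E[\boldsymbol\psi(q_i,B_i)|\mathcal F_{i-1}]+M_n$, where $M_n$ is a martingale with conditional covariance increments $\sum_a q_i(a)\Cov_{W(\cdot|a)}\boldsymbol\psi(q_i,B)$. The predictable mean must be rewritten through the contact identity \eqref{comp:hi:eq:contact}. For the first coordinate, $\E[c\,h(q_i,B_i)|\mathcal F_{i-1}]=c\,m(q_i,A_i)$ up to a martingale difference, and
\[
c\,m(q,x)=j(x)-\log\frac{q(x)}{p(x)}-\Big(c\,m(q,x)+\log\frac{q(x)}{p(x)}-j(x)\Big).
\]
Summing over $i$, $\sum_i\log q_i(A_i)=-\log|\mathcal A_n|$ exactly, and $\sum_i j(A_i)+\log p(A_i)=n\langle j+\log p,p_n\rangle$ because $A^n\in T_n(p_n)$. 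This produces exactly the normalization in $s_{0,n}$ and in $s^{\rm w}_{S,n}$, where $c\,m_{C_S}$ enters through $K$ in \eqref{comp:hi:eq:source-vector}. For the $\mathscr A$ coordinates, $\sum_i m_S(A_i)=n\langle m_S,p_n\rangle$ by type. The residual term is governed by $d(q_i)\ge0$ through a first-order expansion of $F_c$ at contacts. Its sum is nonnegative up to $o(\sqrt n)$ or pushes the inequality in the favorable direction. This step needs care and would follow the scalar argument behind \eqref{comp:hi:eq:scalar-image}.

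\textbf{Gaussian step (main obstacle).} The success probability is now bounded by $\PP\{n^{-1/2}(M_n-\text{source-fluctuation terms})\le s_n+o(1)\}$. This requires a martingale CLT in which the random conditional covariance $n^{-1}\sum_i V(q_i)$ is not concentrated at one point. Its limit points are $\Sigma_\mu$ for random empirical posterior measures $\mu$. The plan is to use the scalar lemma's result: since the rates are pinned to $\boldsymbol r$, the empirical measure $n^{-1}\sum_i\delta_{q_i}$ approaches $\mathcal M_{\boldsymbol r}$ in probability. The route to this is to force $\sum_i d(q_i)=o(n)$ from $n^{-1}\log|\mathcal A_n|\to a_0$ and the budget $b_{0,n}/n\to r_\Sigma$. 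After that, condition on the source sequence. This is possible because $V$ is continuous, $\mathcal C$ is compact, and the increments are bounded. Then apply a Berry--Esseen bound for independent non-identical vectors, since given $A^n$ the channel outputs are independent. This yields $\Phi_{\Sigma_{\mu_n}}(s_n)+o(1)$ with $\mu_n$ within $o(1)$ of $\mathcal M_{\boldsymbol r}$. Compactness of $\mathcal M_{\boldsymbol r}$ and continuity of $\mu\mapsto\Phi_{\Sigma_\mu}(s)$, which holds where the marginal variances are positive, then give the supremum in \eqref{comp:hi:eq:vector-image}. Zero marginals are handled by small threshold shifts. The delicate point is that conditioning on $A^n$ turns $q_i$ into a deterministic sequence, so independence holds. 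All bounds are uniform on bounded threshold windows, which gives the stated uniformity.
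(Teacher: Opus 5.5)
Your architecture matches the paper's. It has one shared exceptional event for the counting rows, the contact-identity decomposition, localization of $\widehat\mu_n=n^{-1}\sum_i\delta_{q_i}$ to $\mathcal M_{\boldsymbol r}$, conditioning on all of $A^n$, and a conditional multivariate CLT. However, the step you defer as needing care is the heart of the proof, and the mechanism you propose for it fails.

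First, fix the sign in your display: the bracket enters with $+$, not $-$. The summed residual is then $D_n+R_n$, with $D_n=\sum_i d(q_i)\ge0$ and $R_n=\sum_i r(q_i,A_i)$, where $r(q,a)=\log\frac{q(a)}{p(a)}+c\,m(q,a)-j(a)-d(q)$. The term $R_n$ is a mean-zero source martingale with no sign. Neither $d\ge0$ nor a first-order expansion of $F_c$ makes it nonnegative, and for a general cell it is of order $\sqrt n$.

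An order-one random shift $R_n/\sqrt n$ of the aggregate and weighted thresholds cannot be discarded after the conditional CLT, because averaging $\Phi$ over a mean-zero shift enlarges it at negative thresholds. You must therefore prove $R_n=o_{\PP}(\sqrt n)$ on the histories that carry success. The missing idea is a variance estimate:
\begin{itemize}
\item By the contact identity, $r(q,\cdot)\equiv0$ on $\mathcal C$. Its conditional variance $v_r(q)=\sum_a q(a)r(q,a)^2$ therefore vanishes there, and by compactness $\omega(u)=\sup_{d(q)\le u}v_r(q)\downarrow0$.
\item Truncate predictably: retain innovation $i$ only while $\sum_{j<i}d(q_j)\le K\sqrt n$. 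Conditioning instead on $\{D_n\le K\sqrt n\}$ or on $\{D_n=o(n)\}$ would destroy the martingale structure.
\item Orthogonality then gives $n^{-1}\E(R_n^{(K)})^2\le\omega(u)+C(K\sqrt n+d_{\max})/(nu)$. Hence $R_n=o_{\PP}(\sqrt n)$ on $\{D_n\le K\sqrt n\}$.
\end{itemize}

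Your localization, which forces $\sum_i d(q_i)=o(n)$ from $n^{-1}\log|\mathcal A_n|\to a_0$ and $b_{0,n}/n\to r_\Sigma$, is also unjustified as stated. The first-order rates say nothing about the posteriors of an arbitrary cell, and high-gap histories can carry large source probability. They are harmless only jointly with success. If $|s_{0,n}|\le B$ and $D_n>K\sqrt n$, the aggregate row forces $N_{0,n}+R_n/\sqrt n\le B+1-K$, where $N_{0,n}$ is the normalized centered channel sum of the first mark. The crude bounds $\E R_n^2\le Cn$ and $\E N_{0,n}^2\le C$ with Chebyshev make this probability at most $C/(K-B-1)^2$. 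You then let $n\to\infty$ before $K\to\infty$.

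Finally, a small average gap gives only support on $\mathcal C$. Membership in $\mathcal M_{\boldsymbol r}$ also requires barycenter $p$ and entropy $a_0$. These come from the source-martingale estimates $\int q\,d\widehat\mu_n-p_n=o_{\PP}(1)$ and $\int H(q)\,d\widehat\mu_n-n^{-1}\log|\mathcal A_n|=o_{\PP}(1)$, the latter using $\sum_i-\log q_i(A_i)=\log|\mathcal A_n|$. Similarly, the covariance given $A^n$ is $n^{-1}\sum_i\Cov_{W(\cdot|A_i)}\boldsymbol\psi(q_i,B)$, not $n^{-1}\sum_i V(q_i)$. The two agree only up to another source-martingale average, which you must show is $o_{\PP}(1)$. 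The channel increments must accordingly be centered given $A_i$, not given $\mathcal F_{i-1}$.
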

\begin{proof}[Proof outline]
The aggregate posterior score decomposes into a deterministic type term, a nonnegative dual gap, a centered source residual, and a conditionally independent output vector. Each weighted section uses the same gap and source residual. Source-past martingale bounds remove the residual on low-gap histories and localize the posterior measure to the complete fixed-rate-pair family. Only then is the whole source block conditioned on to evaluate the joint output Gaussian law. High-gap successful histories have vanishing mass. Appendix~\ref{app:posterior} gives the identities, stopping estimate, and uniformity argument.
\end{proof}

\subsection{Preserving section means under an observable helper choice}

The weighted spectrum is valid without a Markov assumption, but its
mean need not equal the Slepian--Wolf conditional entropy.  The precise
condition for equality is useful both analytically and computationally.

\begin{lemma}[Complete-fiber mean compatibility]
\label{comp:hi:lem:mean-compatibility}
For an auxiliary $U-A-(D,C)$,
\begin{equation}
 H(D|C,U)-\E[-\log P(B|U)+\log P_C(C)]=I(U;C).
 \label{comp:hi:eq:mean-gap}
\end{equation}
Consequently the two means agree for every optimizer in
$\mathcal M_{\boldsymbol r}$ exactly when
\begin{equation}
 (q-p)W_C=0\qquad
   \text{for every }q\in\mathcal K_{\boldsymbol r},
 \quad
 \mathcal K_{\boldsymbol r}=
       \bigcup_{\mu\in\mathcal M_{\boldsymbol r}}\supp\mu.
 \label{comp:hi:eq:nullspace}
\end{equation}
Equivalently, the span of all these contact differences is contained
in the nullspace of $W_C$ acting on source tangent vectors.
\end{lemma}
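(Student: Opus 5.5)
The identity \eqref{comp:hi:eq:mean-gap} is a chain-rule computation. The nullspace criterion then follows by writing $I(U;C)$ as an integral of divergences against the posterior measure $\mu_T$ and using the support of $\mu$.

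\textbf{Step 1 (the identity).} Since $B=(D,C)$, we have $\E[-\log P(B|U)]=H(D,C|U)=H(C|U)+H(D|C,U)$. Also $\E[\log P_C(C)]=-H(C)$. Hence
\[
 \E[-\log P(B|U)+\log P_C(C)]=H(C|U)+H(D|C,U)-H(C).
\]
Subtracting this from $H(D|C,U)$ leaves $H(C)-H(C|U)=I(U;C)$. The Markov chain $U-A-(D,C)$ is only needed so that the joint law is determined by $T=P_{U|A}$ and $W$. Under it, $P(B|U=u)=(q_uW)(b)$ with $q_u=P_{A|U=u}$, so $-\log P(B|U)$ is exactly the posterior score $h(q_U,B)$ of \eqref{comp:hi:eq:posterior-functions}. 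Thus the left side compares the Slepian--Wolf conditional entropy $H(D|C,U)$ with the mean of the weighted mark $h(q,B)-t_{C_S}(B)$.

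\textbf{Step 2 (divergence representation).} For an auxiliary with posterior measure $\mu=\mu_T=\sum_u P_U(u)\delta_{q_u}$, the conditional law of $C$ given $U=u$ is $q_uW_C$. The barycenter condition gives $P_C=pW_C$. Therefore
\[
 I(U;C)=\int D(qW_C\Vert pW_C)\,d\mu(q).
\]
The same expression is used for every $\mu\in\mathcal M_{\boldsymbol r}$; finite posterior measures with barycenter $p$ correspond to auxiliaries as recalled before Lemma~\ref{comp:hi:thm:input}.

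\textbf{Step 3 (the criterion).} The two means agree for every optimizer exactly when this integral vanishes for every $\mu\in\mathcal M_{\boldsymbol r}$.

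\emph{Sufficiency.} If $(q-p)W_C=0$ on $\mathcal K_{\boldsymbol r}$, the integrand vanishes on each support, so the integral is zero.

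\emph{Necessity.} The integrand is nonnegative and continuous in $q$. This holds because $pW_C$ has full support by \eqref{comp:hi:eq:positive}. Every point of $\supp\mu$ has neighborhoods of positive $\mu$-mass. So if the integrand were positive at some support point, it would be positive on such a neighborhood and the integral would be positive. Hence a vanishing integral forces $D(qW_C\Vert pW_C)=0$, i.e.\ $qW_C=pW_C$, at every $q\in\supp\mu$. Taking the union over $\mu$ gives \eqref{comp:hi:eq:nullspace}.

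\emph{Span form.} The condition is linear in the row vectors $q-p$. These have zero coordinate sum, i.e.\ they are source tangent vectors. So it holds for all $q\in\mathcal K_{\boldsymbol r}$ iff the span of these differences lies in the nullspace of $W_C$ on such vectors.

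\textbf{Main obstacle.} The only delicate point is the necessity direction in Step 3. There $\mu$ may in principle be a non-atomic measure on the compact contact set $\mathcal C$, so the vanishing of an integral must be turned into pointwise vanishing on the support. This uses continuity and nonnegativity of the divergence, which rely on full support of $pW_C$, together with the definition of support. For finite auxiliaries this reduces to positivity of the weights $P_U(u)$.
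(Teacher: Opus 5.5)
Your proof is correct and follows the paper's own argument: the same chain-rule subtraction yields $I(U;C)$, and the criterion follows from the fact that $U\perp C$ holds exactly when $qW_C=pW_C$ on the support of the posterior measure. Your divergence-integral representation and continuity/support step only make explicit, for possibly non-atomic $\mu\in\mathcal M_{\boldsymbol r}$, the passage that the paper states in one line.
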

\begin{proof}
The weighted expectation is $H(B|U)-H(C)$, whereas
$H(D|C,U)=H(B|U)-H(C|U)$.  Subtraction proves
\eqref{comp:hi:eq:mean-gap}.  The posterior $q=P_{A|U=u}$ induces complement
law $qW_C$, and the unconditional complement law is $pW_C$.  Thus
$U\perp C$ holds for a posterior measure precisely when these two
vectors agree on its support.  Taking the union of the supports proves
\eqref{comp:hi:eq:nullspace}.  The linear-span formulation is immediate.
\end{proof}

The Gaussian probability is optimized over the complete fiber, so the compatibility certificate must hold throughout that family.  If the contact differences span the entire
source tangent space, \eqref{comp:hi:eq:nullspace} forces $W_C$ to have
identical rows.  Dependence between $A$ and $C$ is permitted when the
optimal posterior variations span a proper subspace.

The matching helper construction also needs the \emph{conditional likelihoods}, not only their means, to agree at second order. Lemma~\ref{comp:hi:lem:implementation} in Appendix~\ref{app:posterior} proves this for every fixed finite family of admissible optimizer branches. A type header describes the selected branch, while the target encoders continue to use fixed local binning maps.


\section{Conditional-Composition Images with Actual-Distortion Marks}\label{new:images}\label{sec:conditional}
The preceding posterior theorem handles arbitrary cells through a supporting entropy geometry. A second description is useful when a conditional composition is specified. It permits different output projections to be counted and keeps actual reconstruction costs without replacing them by their conditional means.

\subsection{Projected images and marked success}
Fix a joint $n$-type $Q_{UX}$, $u^n\in\mathcal T_n(Q_U)$, and a nonempty cell
\[
 \mathcal A_n\subseteq\mathcal T_n(Q_{X|U}\mid u^n)=:\mathcal T_n,
 \qquad \delta_n=\log\frac{|\mathcal T_n|}{|\mathcal A_n|}.
\]
The input $X^n$ is uniform on $\mathcal A_n$. The finite memoryless channel $W_{B|X}$ has joint output $B=(B_1,\ldots,B_k)$. Its coordinates can be dependent given $X$. Let $P^{\mathcal A}$ denote the resulting joint law, and let $R$ be the output of the unconditioned product input $\prod_iQ_{X|U=u_i}$:
\[
 R(b^n\mid u^n)=\prod_{i=1}^n(Q_{X|U=u_i}W)(b_i).
\]
For each of finitely many rows $j$, select disjoint coordinate sets $S_j,T_j\subseteq[k]$, with $S_j\ne\varnothing$, and write $V_j=B_{S_j}$, $C_j=B_{T_j}$. An image $\mathcal D\subseteq\mathcal B^n$ is admissible at budgets $b=(b_j)$ if the projection of $\mathcal D$ onto $(V_j^n,C_j^n)$ contains at most $e^{b_j}$ different $v_j^n$ for each fixed $c_j^n$. Thus $T_j=\varnothing$ specifies a marginal-image budget, while $S_j\cup T_j=[k]$ specifies a section of the full image.

Define the reference scores
\begin{equation}
 J_{j,n}=\sum_{i=1}^n j_{j,Q}(B_i,u_i),\qquad
 j_{j,Q}(b,u)=-\log Q_{V_j|C_j,U}(v_j\mid c_j,u).
 \label{new:dense-score}
\end{equation}
All scores are evaluated on their sampling support. For fixed bounded nonnegative functions $d_\ell(u,x,b)$, define the actual marks
\[
 D_{\ell,n}=\sum_{i=1}^n d_\ell(u_i,X_i,B_i),\qquad
 \mathcal E_n(t)=\{D_{\ell,n}\le t_\ell\text{ for every }\ell\}.
\]
For example $d_\ell(u,x,b)=d(x,g_\ell(u,b))$ records the actual distortion of a fixed reconstruction rule. These marks need not be determined by the output alone. Let
\[
 P_{\rm im}^*(b,t)=\sup_{\mathcal D\text{ admissible}}
    P^{\mathcal A}\{B^n\in\mathcal D,\ \mathcal E_n(t)\}.
\]
Omitting the marks recovers the ordinary image problem.

\begin{theorem}[Simultaneous projected and marked images]\label{new:dense}
Let $s$ be the number of image rows and put $\Delta_n=\delta_n+|\mathcal X||\mathcal U|\log(n+1)$. For every $\gamma>0$,
\begin{align}
 \Pr\{J_{j,n}\le b_j\ \forall j,\ \mathcal E_n(t)\}
 &\le P_{\rm im}^*(b,t),\label{new:dense-lower}\\
 P_{\rm im}^*(b,t)
 &\le\Pr\{J_{j,n}\le b_j+\Delta_n+2\gamma\ \forall j,\ \mathcal E_n(t)\}
       +2s e^{-\gamma}.\label{new:dense-upper}
\end{align}
There is no independence assumption between $U$ and any complement $C_j$, and channel transitions may vanish.

For compatible $Q_n\to Q$, let $\xi_Q(u,x,b)$ be the vector consisting of all $j_{j,Q}(b,u)$ and all $d_\ell(u,x,b)$. Set
\begin{equation}
 m(Q)=\E_{QW}\xi_Q(U,X,B),\qquad
 \Gamma(Q)=\sum_{u,x}Q(u,x)\Cov_{W(\cdot|x)}\xi_Q(u,x,B).
 \label{eq:marked-covariance}
\end{equation}
The image-coordinate means are $H_Q(V_j|C_j,U)$. Suppose all retained marginal variances in $\Gamma(Q)$ are positive. If $\delta_n=o(\sqrt n)$ and
\[
 (b_n,t_n)=nm(Q_n)+\sqrt nL+o(\sqrt n),
\]
then $P_{\rm im}^*(b_n,t_n)\to\Phi_{\Gamma(Q)}(L)$. Singular joint covariance is permitted. If instead $\limsup\delta_n/\sqrt n\le\beta$, the corresponding lower and upper limiting bounds are $\Phi_{\Gamma(Q)}(L)$ and $\Phi_{\Gamma(Q)}(L+\beta e_{\rm im})$, where $e_{\rm im}$ is one on image coordinates and zero on distortion coordinates.
\end{theorem}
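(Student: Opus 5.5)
Both finite-block inequalities come from comparing $P^{\mathcal A}$ with the product reference $R$. The lower bound \eqref{new:dense-lower} is a construction. The upper bound \eqref{new:dense-upper} is a counting argument on one joint exceptional event. The limits then follow from a multivariate Berry--Esseen estimate for the score--mark vector.

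\emph{Lower bound.} Take $\mathcal D^\star=\{b^n: J_{j,n}(b^n)\le b_j\ \forall j\}$. We need its admissibility. Fix $j$ and $c_j^n$. The conditional reference $\prod_iQ_{V_j|C_j,U}(\cdot\mid c_{j,i},u_i)$ is a probability measure on $v_j^n$. Every $v_j^n$ in the projection of $\mathcal D^\star$ satisfies $J_{j,n}\le b_j$. Hence it carries mass at least $e^{-b_j}$, so there are at most $e^{b_j}$ such words. Since $J_{j,n}$ depends only on $b^n$, we get $P^{\mathcal A}\{B^n\in\mathcal D^\star,\mathcal E_n(t)\}=\Pr\{J_{j,n}\le b_j\ \forall j,\mathcal E_n(t)\}$. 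The factor $Q_{V_j|C_j,U}$ may vanish off the sampling support, but this does not affect the count.

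\emph{Upper bound.} Fix an admissible $\mathcal D$ and a row $j$. The key comparison is
\[
P^{\mathcal A}(b^n)\le e^{\delta_n+|\mathcal X||\mathcal U|\log(n+1)}R(b^n\mid u^n).
\]
This holds because the uniform law on $\mathcal A_n$ is at most $e^{\delta_n}$ times the uniform law on $\mathcal T_n$. That law in turn is at most $(n+1)^{|\mathcal X||\mathcal U|}$ times $\prod_iQ_{X|U=u_i}$ on $\mathcal T_n$.

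Apply this bound to the marginal on $(V_j^n,C_j^n)$. Under $R$, the law of $V_j^n$ given $C_j^n$ is exactly $e^{-J_{j,n}}$. Fix $c_j^n$ and consider the projected section words with $J_{j,n}>b_j+\Delta_n+2\gamma$. There are at most $e^{b_j}$ of them, and each has $R$-conditional mass below $e^{-b_j-\Delta_n-2\gamma}$. Multiplying by the density factor and averaging over $c_j^n$, the $P^{\mathcal A}$-probability of $\{B^n\in\mathcal D,\ J_{j,n}>b_j+\Delta_n+2\gamma\}$ is at most $e^{-2\gamma}\le 2e^{-\gamma}$.

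The $C_j$-marginals of $P^{\mathcal A}$ and $R$ differ. If a cleaner constant is needed, I would instead follow Lemma~\ref{comp:hi:lem:weighted-spectrum}. That route works with the true $P^{\mathcal A}$ conditional of $C_j^n$ and pays one extra $\gamma$ for a supermartingale prediction correction, which gives the stated $2\gamma$ and $2e^{-\gamma}$ per row. A union bound over the $s$ rows leaves one exceptional set of mass at most $2se^{-\gamma}$. On its complement, the marks $\mathcal E_n(t)$ are untouched, which gives \eqref{new:dense-upper}. Neither step uses independence between $U$ and $C_j$ or positivity of $W$.

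\emph{Asymptotics.} Choose $\gamma=\gamma_n\to\infty$ with $\gamma_n=o(\sqrt n)$. Since $\log(n+1)=o(\sqrt n)$, both inequalities reduce to evaluating $\Pr\{\xi$-sums$\le nm(Q_n)+\sqrt n L'\}$ under $P^{\mathcal A}$. This is the main obstacle, because $X^n$ is uniform on a subset and not a product law.

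I would first prove the CLT under uniform $X^n$ on $\mathcal T_n$. Condition on $x^n$. The sum $\sum_i\xi_{Q_n}(u_i,x_i,B_i)$ is then a sum of independent bounded vectors. Its mean is exactly $nm(Q_n)$, since the joint type is $Q_n$ and the score means match $H_Q(V_j|C_j,U)$. Its covariance is $n\Gamma(Q_n)+o(n)$. Multivariate Berry--Esseen, with a smoothing argument to handle the singular $\Gamma$ using positive marginals, gives $\Phi_{\Gamma(Q)}(L)$ uniformly over $x^n$. So the bound holds for every cell $\mathcal A_n$, not only in average.

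For the upper bound with $\delta_n\le\beta\sqrt n$, the threshold shift $\Delta_n+2\gamma_n$ affects only the image coordinates. This yields $\Phi_{\Gamma(Q)}(L+\beta e_{\rm im})$, and $\delta_n=o(\sqrt n)$ gives the matching limit. Boundaries are null by positivity of the marginal variances. This continuity closes the $\gamma_n$ slack.
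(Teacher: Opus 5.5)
Your proposal is correct. The lower bound and the asymptotic part follow the paper, but your finite-block upper bound takes a different and slightly sharper route.

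\emph{The upper bound.} The paper writes the exact identity \eqref{eq:projected-identity} and counts each projected section under the true conditional $P^{\mathcal A}(\cdot\mid C_j^n)$. It bounds the joint ratio $P^{\mathcal A}_{V_j^nC_j^n}/R_{V_j^nC_j^n}$ by $e^{\Delta_n}$, and then excludes the lower tail of the complement ratio $P^{\mathcal A}_{C_j^n}/R_{C_j^n}$. That exclusion is where the second $\gamma$ and the factor $2$ come from.

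You instead keep the projected law whole, $P^{\mathcal A}_{V_j^nC_j^n}\le e^{\Delta_n}R_{V_j^nC_j^n}$, and count under the reference conditional $e^{-J_{j,n}}$. You then average over $c_j^n$ with $R_{C_j^n}$, which is a probability measure; any $c_j^n$ with $R_{C_j^n}(c_j^n)=0$ carries no $P^{\mathcal A}$-mass. This is valid and gives $\Pr\{J_{j,n}\le b_j+\Delta_n+\gamma\ \forall j,\ \mathcal E_n(t)\}+se^{-\gamma}$, which implies \eqref{new:dense-upper}.

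So your concern that the $C_j$-marginals differ does not arise in your own computation. You should drop the fallback through Lemma~\ref{comp:hi:lem:weighted-spectrum}: that lemma controls source-past prediction scores $h(q_i,B_i)$, not the product-reference score $J_{j,n}$, so it would not deliver this statement.

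What the paper's decomposition buys is an isolated conditional ratio $\ell_{j,n}$. Theorem~\ref{v5:screened} later controls that ratio by $K_{j,n}$ or $k_{j,n}(C_j^n)$ instead of the whole-cell deficit $\delta_n$. Your joint domination charges $\delta_n$ by construction, so it would have to be rewritten conditionally to reach that refinement.

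\emph{The asymptotic part.} One claim needs correcting: the increments are not uniformly bounded. On the sampling support, $0\le j_{j,Q_n}(b,u)\le-\log Q_n(x\mid u)-\log w_0$, where $w_0$ is the smallest positive transition of $W$. This bound is of order $\log n$ when $Q_n$ puts mass of order $1/n$ on pairs outside the support of $Q$. You need two replacements:
\begin{itemize}
\item a uniform bound on the averaged third moments, which holds because $q\log^3(1/q)$ is bounded;
\item an explicit proof of the continuity $\Gamma(Q_n)\to\Gamma(Q)$, which you use implicitly. It follows from $q\log^2(1/q)\to0$, including for a $U$ column whose limiting mass vanishes.
\end{itemize}
Also, the covariance is exactly $n\Gamma(Q_n)$, not merely up to $o(n)$. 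With these additions, permutation invariance over the fixed joint type, the positive-diagonal multivariate CLT with null rectangle boundaries, and $\gamma_n=o(\sqrt n)$ complete the argument as in the paper.
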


\begin{proof}
Under the product input every member of $\mathcal T_n$ has probability $e^{-nH_Q(X|U)}$. Hence the conditioning event $\{X^n\in\mathcal A_n\}$ has probability $\pi\ge e^{-\Delta_n}$. For every output projection $O$, marginalization preserves
\[
 \frac{P_O^{\mathcal A}}{R_O}\le e^{\Delta_n}.
\]
Also $P\{\log(P/R)<-\gamma\}\le e^{-\gamma}$ for probability laws $P\ll R$. The exact conditional identity for row $j$ is
\begin{equation}
 J_{j,n}=-\log P^{\mathcal A}(V_j^n\mid C_j^n)
       +\log\frac{P_{V_j^nC_j^n}^{\mathcal A}}{R_{V_j^nC_j^n}}
       -\log\frac{P_{C_j^n}^{\mathcal A}}{R_{C_j^n}}.
 \label{eq:projected-identity}
\end{equation}
For each $c_j^n$, an admissible projection contains at most $e^{b_j}$ values. Its mass where $-\log P^{\mathcal A}(V_j^n|C_j^n)>b_j+\gamma$ is at most $e^{-\gamma}$. Averaging over $C_j^n$ and excluding the lower tail of the final likelihood ratio in \eqref{eq:projected-identity} proves \eqref{new:dense-upper}. All exclusions intersect the same original image event; retaining $\mathcal E_n(t)$ only reduces their probabilities.

For the lower bound take $\mathcal D_0=\{b^n:J_{j,n}(b^n)\le b_j\text{ for every }j\}$ on the reference support. For fixed $c_j^n$, every admitted $v_j^n$ has conditional reference probability at least $e^{-b_j}$. There are at most $e^{b_j}$ such values. Every projected section of $\mathcal D_0$ is a subset of this threshold section, so $\mathcal D_0$ is admissible. Intersecting its event with the actual marks gives \eqref{new:dense-lower}.

Given any $x^n\in\mathcal A_n$, the vector of scores and marks is a sum of independent channel increments. Since $(u^n,x^n)$ has a fixed joint type, its distribution is the same for every such input, up to permutation of the coordinates. Its mean and covariance are exactly $nm(Q_n)$ and $n\Gamma(Q_n)$. If $w_0$ is the smallest positive transition of $W$, then on a sampled $(u,x,b)$,
\[
 0\le j_{j,Q}(b,u)\le-\log Q(x|u)-\log w_0.
\]
The averaged third moments are uniformly bounded on a finite simplex, and bounded actual marks preserve this property. The identity $q\log^2(1/q)\to0$ as $q\downarrow0$, together with the same conditional moment bound for a vanishing $U$ column, proves covariance continuity. The independent-array central limit theorem is therefore uniform on compact limiting covariance sets with positive diagonal. Such Gaussians assign zero probability to rectangle boundaries even if jointly singular. Taking $\gamma=3\log(n+1)$ in the finite-block bounds proves the stated limits and the deficit bracket.
\end{proof}

\begin{corollary}[Two output images]\label{cor:two-images}
In the experiment $U-X-(Y,Z)$, impose separate image sizes $e^{b_Y}$ and $e^{b_Z}$. For a conditional-type cell with $\delta_n=o(\sqrt n)$, put
\[
 b_Y=nH_{Q_n}(Y|U)+\sqrt nL_Y+o(\sqrt n),\quad
 b_Z=nH_{Q_n}(Z|U)+\sqrt nL_Z+o(\sqrt n).
\]
Then the optimal probability that both outputs fall in their respective images has limit
\[
 \Phi_\Gamma(L_Y,L_Z),\quad
 \Gamma=\E_{Q_{UX}}\Cov_{W(\cdot|X)}
 \begin{pmatrix}-\log Q_{Y|U}(Y|U)\\-\log Q_{Z|U}(Z|U)\end{pmatrix},
\]
provided its diagonal is positive. Additional bounded actual-distortion marks have the joint limit specified in Theorem~\ref{new:dense}.
\end{corollary}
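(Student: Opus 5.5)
This corollary is Theorem~\ref{new:dense} specialized to two image rows, and the plan is to check the specialization and read off the covariance. Take the joint output $B=(Y,Z)$ with $k=2$. Use row $1$ with $S_1=\{Y\}$, $T_1=\varnothing$ and budget $b_Y$, and row $2$ with $S_2=\{Z\}$, $T_2=\varnothing$ and budget $b_Z$. With empty complements, admissibility at budgets $(b_Y,b_Z)$ says exactly that the $Y$-projection of $\mathcal D$ has at most $e^{b_Y}$ elements and the $Z$-projection at most $e^{b_Z}$.

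The success event for separate images $\mathcal D_Y,\mathcal D_Z$ is $\{Y^n\in\mathcal D_Y,Z^n\in\mathcal D_Z\}=\{(Y^n,Z^n)\in\mathcal D_Y\times\mathcal D_Z\}$, and the product set is admissible. Conversely, any admissible $\mathcal D$ is contained in the product of its two projections, and that product is again admissible. Hence the optimal two-image probability equals $P_{\rm im}^*(b,\cdot)$ with the marks omitted. The input is uniform on a cell $\mathcal A_n$ inside a conditional type class with $\delta_n=o(\sqrt n)$, which is exactly the setting of the theorem.

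Next we identify the scores. With $T_j=\varnothing$ we get $j_{1,Q}=-\log Q_{Y|U}(y|u)$ and $j_{2,Q}=-\log Q_{Z|U}(z|u)$. Their means under $Q_nW$ are $H_{Q_n}(Y|U)$ and $H_{Q_n}(Z|U)$, which are the image-coordinate means $m(Q_n)$ stated in the theorem. The budgets $b_Y,b_Z$ are therefore of the form $nm(Q_n)+\sqrt nL+o(\sqrt n)$. The covariance $\Gamma(Q)$ in \eqref{eq:marked-covariance} reduces to the displayed $\Gamma$, since $\Cov_{W(\cdot|x)}$ of a function of $(u,B)$ averaged over $Q(u,x)$ is $\E_{Q_{UX}}\Cov_{W(\cdot|X)}$. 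Evaluating at the limit $Q$ is justified by the covariance continuity shown in the proof of Theorem~\ref{new:dense}.

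With positive diagonal, the limit statement of the theorem gives $\Phi_\Gamma(L_Y,L_Z)$, and singularity is allowed. Adding bounded marks $d_\ell$ enlarges $\xi_Q$ and gives the marked limit verbatim. The only point needing any care is the equivalence between separate images and admissible joint images, which is the product-set observation above. I expect no real obstacle.
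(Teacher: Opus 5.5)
Your proposal is correct and follows the paper's proof. Both use the two rows $(Y,\varnothing)$ and $(Z,\varnothing)$, observe that any admissible joint image lies inside the admissible product of its two projections, and then apply Theorem~\ref{new:dense} directly to the separate-image problem. Your explicit identification of the scores, their means, and the covariance $\Gamma$ only spells out what the paper leaves implicit.
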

\begin{proof}
Use rows $(V_1,C_1)=(Y,\varnothing)$ and $(V_2,C_2)=(Z,\varnothing)$. Every admissible joint image is contained in the product of its two projections; replacing it by that product preserves both budgets and cannot decrease the success probability. Theorem~\ref{new:dense} therefore applies to the original separate-image problem.
\end{proof}
\edit{If a distortion mark depends only on $(u,x)$, its sum is fixed by the joint type. An infeasible deterministic budget makes the joint success probability zero; a feasible one can be removed before applying the positive-diagonal Gaussian statement. Thus no success probability is assigned by a Gaussian approximation to that deterministic coordinate.} The off-diagonal covariance is determined by the physical joint channel. Applying a scalar image theorem twice and assigning two separate error tolerances would not give this joint limit.

\subsection{Conditional likelihood control for irregular cells}
Let $\widetilde P$ be the unconditioned product input and channel law and write $\mathsf A=\{X^n\in\mathcal A_n\}$. For each row define
\begin{align}
 \ell_{j,n}&=\log\frac{P^{\mathcal A}(V_j^n\mid C_j^n)}{R(V_j^n\mid C_j^n)},\label{v5:conditional-likelihood}\\
 k_{j,n}(c)&=-\log\widetilde P(\mathsf A\mid C_j^n=c),\nonumber\\
 K_{j,n}&=D(P_{V_j^nC_j^n}^{\mathcal A}\|R_{V_j^nC_j^n})
              -D(P_{C_j^n}^{\mathcal A}\|R_{C_j^n})\ge0.\label{v5:conditional-deficit}
\end{align}
The last quantity is conditional relative entropy. A complement can explain a large part of the restriction defining the cell, making this quantity much smaller than $\delta_n$.

\begin{theorem}[Screened projected-image comparison]\label{v5:screened}
For every nonempty conditional-type cell, every $\gamma>0$, and deterministic $\eta_j>0$,
\begin{align}
 P_{\rm im}^*(b,t)
 &\le\Pr\{J_{j,n}\le b_j+\eta_j+\gamma\ \forall j,\ \mathcal E_n(t)\}
       +\sum_j\left(e^{-\gamma}+\Pr\{\ell_{j,n}>\eta_j\}\right),\label{v5:screened-bound}\\
 P_{\rm im}^*(b,t)
 &\le\Pr\{J_{j,n}\le b_j+k_{j,n}(C_j^n)+\gamma\ \forall j,\ \mathcal E_n(t)\}
       +s e^{-\gamma}.\label{v5:posterior-bound}
\end{align}
Furthermore,
\begin{equation}
 \Pr\{\ell_{j,n}>\eta\}\le\frac{K_{j,n}+1}{\eta}.\label{v5:conditional-tail}
\end{equation}
The Gaussian conclusion of Theorem~\ref{new:dense} remains valid if its density assumption is replaced by $\max_jK_{j,n}=o(\sqrt n)$, or more generally by $(\ell_{j,n})_+/\sqrt n\to0$ in probability for every retained row.
\end{theorem}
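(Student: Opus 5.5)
The plan is to rerun the upper-bound argument of Theorem~\ref{new:dense}, using the conditional identity $J_{j,n}=-\log P^{\mathcal A}(V_j^n\mid C_j^n)+\ell_{j,n}$. This is the definition \eqref{v5:conditional-likelihood} rewritten, since $J_{j,n}=-\log R(V_j^n\mid C_j^n)$; the latter holds because the reference output is a product over coordinates. It replaces the global density bound $e^{\Delta_n}$ by the row-specific likelihood $\ell_{j,n}$.

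For \eqref{v5:screened-bound}, fix an admissible $\mathcal D$. For each $c_j^n$ the section contains at most $e^{b_j}$ words, so the $P^{\mathcal A}$-mass of $\{B^n\in\mathcal D,\ -\log P^{\mathcal A}(V_j^n\mid C_j^n)>b_j+\gamma\}$ is at most $e^{-\gamma}$ after averaging over $C_j^n$. On the complement of this event and of $\{\ell_{j,n}>\eta_j\}$ we have $J_{j,n}\le b_j+\gamma+\eta_j$. A union bound over the $s$ rows, all intersected with the same event $\{B^n\in\mathcal D\}\cap\mathcal E_n(t)$, gives the bound.

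For \eqref{v5:posterior-bound}, use Bayes' rule under $\widetilde P$: $P^{\mathcal A}(v\mid c)=\widetilde P(v\mid c)\,\widetilde P(\mathsf A\mid v,c)/\widetilde P(\mathsf A\mid c)$, with $\widetilde P(v\mid c)=R(v\mid c)$. Since $\widetilde P(\mathsf A\mid v,c)\le1$, this yields $\ell_{j,n}\le k_{j,n}(C_j^n)$ pointwise, and the same counting step with $\eta_j$ replaced by $k_{j,n}(C_j^n)$ applies. The threshold may be random because the counting step is performed section by section with $c_j^n$ fixed, so the tail of $\ell$ is not needed.

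For \eqref{v5:conditional-tail}, note $\E\ell_{j,n}=K_{j,n}$ by the chain rule for relative entropy. The negative part satisfies $\E(\ell_{j,n})_-\le 1$, since $\E[e^{-\ell}]\le1$ gives $\E(\ell)_-\le \sup_x x e^{-x}\cdot$(mass), bounded by $e^{-1}\le1$. Hence $\E(\ell_{j,n})_+\le K_{j,n}+1$, and Markov's inequality finishes. Nonnegativity of $K_{j,n}$ is the nonnegativity of conditional divergence.

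For the Gaussian conclusion, the lower bound \eqref{new:dense-lower} is unchanged. For the upper bound, take $\eta_j=\eps\sqrt n$ and $\gamma=\log n$ in \eqref{v5:screened-bound}; the hypothesis makes $\Pr\{\ell_{j,n}>\eps\sqrt n\}\to0$. It then remains to evaluate $\Pr\{J_{j,n}\le b_j+\eps\sqrt n+\log n,\ \mathcal E_n(t)\}$ under $P^{\mathcal A}$.

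This is the main obstacle. The CLT in Theorem~\ref{new:dense} was argued conditionally on each $x^n\in\mathcal A_n$, and it used only the fixed joint type, not the density $\delta_n$. Therefore the score-and-mark vector has the law of $nm(Q_n)+\sqrt n\,\Gamma(Q_n)^{1/2}G+o_P(\sqrt n)$ uniformly over the cell, and the limit is $\Phi_{\Gamma(Q)}(L+\eps e_{\rm im})$. Letting $\eps\downarrow0$ and using the null rectangle boundaries, guaranteed by the positive diagonal, gives $\Phi_{\Gamma(Q)}(L)$. I would check carefully that the density assumption in Theorem~\ref{new:dense} entered only through $\Delta_n$ and not through the CLT, which I believe holds since that step conditions on $x^n$.
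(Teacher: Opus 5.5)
Your proposal is correct and follows the paper's proof essentially step for step. The shared steps are the identity $J_{j,n}=-\log P^{\mathcal A}(V_j^n\mid C_j^n)+\ell_{j,n}$ with conditional section counting, the Bayes bound $\ell_{j,n}\le k_{j,n}(C_j^n)$, and $\E\ell_{j,n}=K_{j,n}$ together with $\E(\ell_{j,n})_-\le 1$ and Markov; they also include the observation that neither the fixed-joint-type central limit argument nor the feasible-image lower bound uses $\delta_n$. The differences are cosmetic: you bound $\E(\ell_{j,n})_-$ by a change of measure (getting $e^{-1}$) instead of integrating the tail $\Pr\{\ell_{j,n}<-u\}\le e^{-u}$. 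You also take $\eta_j=\eps\sqrt n$ and then let $\eps\downarrow0$, rather than making a diagonal choice $\eta_{j,n}=o(\sqrt n)$.
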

\begin{proof}
Bayes' rule gives
\[
 \ell_{j,n}=\log\widetilde P(\mathsf A\mid V_j^n,C_j^n)
              -\log\widetilde P(\mathsf A\mid C_j^n)
          \le k_{j,n}(C_j^n).
\]
Use conditional section counting as in Theorem~\ref{new:dense} and the identity $J_{j,n}=-\log P^{\mathcal A}(V_j^n|C_j^n)+\ell_{j,n}$. This proves both probability bounds without changing the actual-distortion event. Conditional likelihood-ratio summation yields $\Pr\{\ell_{j,n}<-u\}\le e^{-u}$. Hence $\E(\ell_{j,n})_-\le1$. The relative-entropy chain rule gives $\E\ell_{j,n}=K_{j,n}$, so $\E(\ell_{j,n})_+\le K_{j,n}+1$ and Markov's inequality proves \eqref{v5:conditional-tail}. If $K_{j,n}=o(\sqrt n)$, choose $\eta_{j,n}=[(K_{j,n}+1)\sqrt n]^{1/2}$ and $\gamma=3\log(n+1)$. Both the normalized threshold loss and the exceptional probability vanish. Under convergence in probability, a deterministic diagonal choice of $\eta_{j,n}=o(\sqrt n)$ does the same. The feasible-image lower bound and the conditional-composition Gaussian law are unchanged.
\end{proof}

For example, take input $(V,T)$ and a cell
\[
 \mathcal A_n=\{(v^n,t^n):v^n\in\mathcal C_n\subseteq\mathcal T_n(Q_V),\quad
                   t^n\in\mathcal T_n(Q_{T|V}\mid v^n)\}.
\]
If the complement observes $V^n$, only the conditional-type event remains after it is fixed. Thus $k_{j,n}(V^n)\le|\mathcal V||\mathcal T|\log(n+1)$ even when $\log|\mathcal T_n(Q_V)|-\log|\mathcal C_n|=\Theta(n)$. The other output can retain information about $T$ beyond $V$: conditioning on the complement removes the sparse restriction on $V^n$, while the conditional-type uncertainty remains. A total-image row has an empty complement and requires its own bound.



\section{Image Stability Under Recoverable Source Partitions}\label{sec:partitions}
An irregular source cell need not admit a useful product reference. A refinement can separate it into regular components, but revealing the component to every terminal may change the problem. The relevant question is therefore quantitative: how much image budget is needed to remove that extra knowledge? The answer is determined by the conditional information of the component label at each section, rather than by the number of components alone.

\subsection{One partition and simultaneous section budgets}
Let $X$ be an arbitrary source, $B$ its joint output on a finite alphabet, and $J=f(X)$ a countably valued source partition. For each of a fixed number $s$ of constraints, let $(V_r,C_r)=\pi_r(B)$ be a projection of the output. For an output set $\mathcal D$, its $r$th section is
\[
 \mathcal D_r(c)=\{v:(v,c)\in\pi_r(\mathcal D)\}.
\]
A budget vector $\boldsymbol b$ permits $|\mathcal D_r(c)|\le e^{b_r}$ for every $r,c$. A total-image constraint is included by taking $C_r$ constant and $V_r=B$. If an actual-cost event $\mathcal E\subseteq\mathcal X\times\mathcal B$ is prescribed, the same event is retained throughout. Define
\begin{align}
 \Gamma(\boldsymbol b)&=\sup_{\mathcal D}\Pr\{B\in\mathcal D,\ (X,B)\in\mathcal E\},\nonumber\\
 \Gamma^J(\boldsymbol b)&=\sup_{(\mathcal D_j)}
       \Pr\{B\in\mathcal D_J,\ (X,B)\in\mathcal E\},\label{eq:partition-values}
\end{align}
where every image in either supremum satisfies the same section budgets. All quantities refer to the original source--channel law. A common side-information variable already known to the relevant terminals can be conditioned on in these definitions.

Put $T_r=-\log P(J\mid C_r)$, evaluated at $(J,C_r)$. The following finite-block bound does not impose independence among the sections.
\begin{theorem}[Image stability under a recoverable partition]\label{thm:partition-stability}
For every positive vector $\boldsymbol\tau$,
\begin{equation}
 \Gamma(\boldsymbol b)\le\Gamma^J(\boldsymbol b)
 \le\Gamma(\boldsymbol b+\boldsymbol\tau)
      +\Pr\{\exists r:T_r>\tau_r\}.
 \label{eq:partition-sandwich}
\end{equation}
In particular, the final term is at most
$\sum_r H(J\mid C_r)/\tau_r$. The same assertion holds for section measures in place of cardinalities, provided that for each $r,c$ the same measure is used for every component image.
\end{theorem}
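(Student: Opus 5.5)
The left inequality is immediate: given any admissible $\mathcal D$, take $\mathcal D_j=\mathcal D$ for every label $j$. Each component image satisfies the same section budgets, and the success event $\{B\in\mathcal D_J,(X,B)\in\mathcal E\}$ is then the original one. Taking suprema gives $\Gamma(\boldsymbol b)\le\Gamma^J(\boldsymbol b)$.

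For the right inequality, fix admissible component images $(\mathcal D_j)$. We merge them through posterior lists. For each constraint $r$ and section value $c$, define the list
\[
 \mathcal L_r(c)=\{j: P(J=j\mid C_r=c)\ge e^{-\tau_r}\}.
\]
Since posterior probabilities sum to at most one, $|\mathcal L_r(c)|\le e^{\tau_r}$. Define the merged image
\[
 \mathcal D=\bigcup_j\{b\in\mathcal D_j:\ j\in\mathcal L_r(c_r(b))\ \text{for every }r\},
\]
where $c_r(b)$ is the $C_r$-coordinate of $\pi_r(b)$. To check admissibility at $\boldsymbol b+\boldsymbol\tau$, fix $r$ and $c$. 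Every $v$ in $\mathcal D_r(c)$ comes from some $b\in\mathcal D_j$ with $\pi_r(b)=(v,c)$ and $j\in\mathcal L_r(c)$. Hence
\[
 \mathcal D_r(c)\subseteq\bigcup_{j\in\mathcal L_r(c)}(\mathcal D_j)_r(c),
\]
whose cardinality is at most $e^{\tau_r}e^{b_r}$. The key point is that the list depends only on the conditioning value $c$ of the section being counted. This lets a single merged set satisfy all $s$ budgets at once, without any independence among the sections.

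Next we compare success events. If $B\in\mathcal D_J$ and $J\in\mathcal L_r(C_r)$ for all $r$, then $B\in\mathcal D$ with witness $j=J$. The condition $J\notin\mathcal L_r(C_r)$ is exactly $T_r>\tau_r$. Therefore
\[
 \Pr\{B\in\mathcal D_J,\mathcal E\}\le\Pr\{B\in\mathcal D,\mathcal E\}+\Pr\{\exists r:T_r>\tau_r\}.
\]
The actual-cost event $\mathcal E$ is simply carried along on both sides. Taking the supremum over $(\mathcal D_j)$ proves \eqref{eq:partition-sandwich}. For the stated bound on the final term, apply a union bound and Markov's inequality to each $T_r\ge0$, using $\E T_r=H(J\mid C_r)$. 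For the section-measure version, replace cardinality by a section measure $\nu_{r,c}$ that is common to all components. Subadditivity then gives $\nu_{r,c}(\mathcal D_r(c))\le e^{\tau_r}e^{b_r}$ in the same way.

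The main technical care concerns measurability and countability. Since $J$ is countably valued, the union defining $\mathcal D$ is countable. If the output alphabet $B$ is finite, or at least the section index $c$ is discrete, the posterior $P(J=j\mid C_r=c)$ is well defined pointwise. For general $C_r$, one would take a regular conditional version and note that the list bound $|\mathcal L_r(c)|\le e^{\tau_r}$ holds for every $c$ in that version. Otherwise the argument is elementary; the essential choice is that the lists are indexed by the section variable, not by the full output.
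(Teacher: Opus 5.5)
Your proof is correct and follows the paper's argument essentially step for step. Both use posterior lists indexed by the section value $c$, merge the component images over labels lying in every list, bound each merged section by the union of at most $e^{\tau_r}$ component sections (or by subadditivity of a common section measure), and observe that success is retained outside $\{\exists r:T_r>\tau_r\}$. Both obtain the lower bound from constant component images and the entropy estimate from Markov's inequality with $\E T_r=H(J\mid C_r)$. Your measurability remark is harmless but not needed, since the output alphabet is finite here.
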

\begin{proof}
Define the posterior list
$\mathcal L_r(c)=\{j:P(j\mid c)\ge e^{-\tau_r}\}$;
its size is at most $e^{\tau_r}$. Given component images $(\mathcal D_j)$, merge them as
\begin{equation}
 \mathcal D=\{b:\text{some }j\text{ satisfies }b\in\mathcal D_j
       \text{ and }j\in\bigcap_r\mathcal L_r(C_r(b))\}.
 \label{eq:partition-merge}
\end{equation}
For fixed $r,c$, the projected section of $\mathcal D$ is contained in the union of the corresponding sections of at most $e^{\tau_r}$ component images. Its size is therefore at most $e^{b_r+\tau_r}$. The same argument uses subadditivity for a common section measure. Whenever $B\in\mathcal D_J$ and all true-label lists contain $J$, the merged image also contains $B$; the prescribed event $\mathcal E$ is unchanged. Taking suprema proves the upper bound. Choosing every component image equal proves the lower bound. Finally, $T_r\ge0$ and $\E T_r=H(J\mid C_r)$ give the entropy estimate.
\end{proof}

The spectrum in \eqref{eq:partition-sandwich} is the primary condition. Conditional entropy supplies a convenient sufficient estimate, but is not necessary.
\begin{corollary}[Preservation of a second-order image profile]\label{cor:partition-profile}
For a sequence of these experiments, suppose
\begin{equation}
 \max_{r\le s}\frac{-\log P(J_n\mid C_{r,n})}{\sqrt n}
       \longrightarrow0\quad\text{in probability}.
 \label{eq:partition-small}
\end{equation}
Let $\boldsymbol b_n=n\boldsymbol r+\sqrt n\boldsymbol L$ and suppose
$\Gamma^{J_n}_n(\boldsymbol b_n)\to F(\boldsymbol L)$, locally with respect to fixed offsets, where $F$ is continuous at $\boldsymbol L$. Then
$\Gamma_n(\boldsymbol b_n)\to F(\boldsymbol L)$.
A sufficient condition for \eqref{eq:partition-small} is
$\max_r H(J_n\mid C_{r,n})=o(\sqrt n)$.
\end{corollary}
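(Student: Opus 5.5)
The plan is to apply the finite-block sandwich of Theorem~\ref{thm:partition-stability} at each blocklength with a vanishing second-order slack, and then use continuity of $F$ to remove that slack. From \eqref{eq:partition-small} we extract a deterministic sequence $\eps_n\downarrow0$ such that
\[
 \Pr\Bigl\{\max_{r\le s}T_{r,n}>\eps_n\sqrt n\Bigr\}\to0 .
\]
This is the standard diagonal choice for convergence in probability. Set $\tau_{r,n}=\eps_n\sqrt n$ for every $r$; these are positive, as the theorem requires.

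The lower direction comes from the right-hand inequality of \eqref{eq:partition-sandwich} applied at the shifted budget $\boldsymbol b_n-\boldsymbol\tau_n$. This gives
\[
 \Gamma_n(\boldsymbol b_n)\ge \Gamma^{J_n}_n\bigl(n\boldsymbol r+\sqrt n(\boldsymbol L-\eps_n\one)\bigr)-\Pr\{\exists r:T_{r,n}>\tau_{r,n}\}.
\]
The upper direction comes from the left inequality $\Gamma_n\le\Gamma^{J_n}_n$ at the same budget $\boldsymbol b_n$. That yields $\limsup\Gamma_n(\boldsymbol b_n)\le F(\boldsymbol L)$ directly from the hypothesis, with no slack needed.

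The main step, and the only delicate one, is the claim
\[
 \Gamma^{J_n}_n\bigl(n\boldsymbol r+\sqrt n(\boldsymbol L-\eps_n\one)\bigr)\to F(\boldsymbol L)
\]
when the offset moves with $n$. I read ``locally with respect to fixed offsets'' as convergence $\Gamma^{J_n}_n(n\boldsymbol r+\sqrt n\boldsymbol L')\to F(\boldsymbol L')$ for every fixed $\boldsymbol L'$ near $\boldsymbol L$. With only pointwise convergence I would use monotonicity instead: every $\Gamma$-value is nondecreasing in each budget coordinate, since enlarging a budget enlarges the feasible class.

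So fix $\delta>0$. Eventually $\eps_n<\delta$, hence
\[
 \Gamma^{J_n}_n\bigl(n\boldsymbol r+\sqrt n(\boldsymbol L-\eps_n\one)\bigr)\ge\Gamma^{J_n}_n\bigl(n\boldsymbol r+\sqrt n(\boldsymbol L-\delta\one)\bigr)\to F(\boldsymbol L-\delta\one).
\]
This gives $\liminf\Gamma_n(\boldsymbol b_n)\ge F(\boldsymbol L-\delta\one)$. Letting $\delta\downarrow0$ and using continuity of $F$ at $\boldsymbol L$ gives $\liminf\ge F(\boldsymbol L)$. Combined with the upper direction, this proves $\Gamma_n(\boldsymbol b_n)\to F(\boldsymbol L)$.

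For the sufficient condition I would use the entropy estimate stated in Theorem~\ref{thm:partition-stability}. Since $T_{r,n}\ge0$ and $\E T_{r,n}=H(J_n\mid C_{r,n})$, Markov's inequality gives
\[
 \Pr\{T_{r,n}>\eps\sqrt n\}\le \frac{H(J_n\mid C_{r,n})}{\eps\sqrt n}\to0
\]
for every $\eps>0$. A union bound over the fixed number $s$ of rows then yields \eqref{eq:partition-small}.
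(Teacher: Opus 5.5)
Your proposal is correct and follows the paper's proof essentially step for step. You make a diagonal choice of thresholds $\tau_n=o(\sqrt n)$ and apply the sandwich of Theorem~\ref{thm:partition-stability} at budget $\boldsymbol b_n-\tau_n\one$. You then use monotonicity to compare with a fixed offset $\boldsymbol L-\delta\one$, let $\delta\downarrow0$ using continuity of $F$, and obtain the entropy condition by Markov's inequality.
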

\begin{proof}
A diagonal choice gives deterministic $\tau_n=o(\sqrt n)$ such that
$\Pr\{\max_rT_{r,n}>\tau_n\}\to0$. Theorem~\ref{thm:partition-stability} at budget $\boldsymbol b_n-\tau_n\one$ implies
\[
 \Gamma_n^{J_n}(\boldsymbol b_n-\tau_n\one)-o(1)
 \le\Gamma_n(\boldsymbol b_n)\le\Gamma_n^{J_n}(\boldsymbol b_n).
\]
For any fixed $\eta>0$, monotonicity bounds the left side below by the component profile at offset $\boldsymbol L-\eta\one$ for large $n$. Take $n\to\infty$ and then $\eta\downarrow0$. Markov's inequality proves the sufficient entropy condition.
\end{proof}
A total-image row in this theorem charges $-\log P(J_n)$, since its complement is constant. A partition recoverable only from a nonconstant complement does not make that total-image cost disappear.

\subsection{When a label must be decoded}
Merging sets preserves image mass, but a reconstruction policy may require the label itself. The next statement supplies one common header when the encoder knows $J$ and each relevant decoder observes its own $C_r$.
\begin{lemma}[A common selector header]\label{lem:selector-header}
For $t,\gamma>0$, a deterministic hash of $J$ into
$L=\lceil e^{t+\gamma}\rceil$ labels permits all $s$ decoders to recover $J$ with joint error at most
\begin{equation}
 \Pr\{\exists r:-\log P(J\mid C_r)>t\}+s e^{-\gamma}.
 \label{eq:selector-error}
\end{equation}
Thus \eqref{eq:partition-small} gives an $o(\sqrt n)$-nat header with vanishing error, when the communication architecture allows that header.
\end{lemma}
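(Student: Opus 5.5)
We use a random-binning argument and then derandomize it. First draw a uniformly random hash $g$ from the countable alphabet of $J$ into $[L]$, with values independent across labels. The encoder sends $g(J)$. Decoder $r$ observes $C_r$ and the header $m=g(J)$, and it uses the posterior list $\mathcal L_t(c)=\{j:P(j\mid c)\ge e^{-t}\}$, the analogue of the list in the proof of Theorem~\ref{thm:partition-stability}. Its output is the unique $j\in\mathcal L_t(C_r)$ with $g(j)=m$, or an error if there is no such $j$ or more than one.

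Next we identify the ways decoder $r$ can fail. Either the true label is missing from its list, which is the event $T_r=-\log P(J\mid C_r)>t$, or some other label $j'\neq J$ in $\mathcal L_t(C_r)$ has $g(j')=g(J)$. The first failure does not depend on the hash, and taking the union over $r$ gives the event $\{\exists r:-\log P(J\mid C_r)>t\}$. This single exceptional event is shared by all decoders, so it is charged only once. For the second failure, the list satisfies $|\mathcal L_t(c)|\le e^{t}$ because posterior probabilities sum to one. Conditional on $(J,C_r)$, the probability over $g$ that a fixed $j'\ne J$ collides with $J$ is $1/L$. A union bound over the list then gives collision probability at most $e^{t}/L\le e^{-\gamma}$. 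Summing over the $s$ decoders gives $se^{-\gamma}$.

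Averaging over $g$, the expected joint error is at most the quantity in \eqref{eq:selector-error}. Hence some deterministic $g$ does at least as well. The random-hash construction must be checked to be measurable over the countable label alphabet, but independent uniform labels on a countable set pose no difficulty. For the final claim, apply the lemma along the sequence. By \eqref{eq:partition-small}, a diagonal choice gives a deterministic $t_n=o(\sqrt n)$ with $\Pr\{\max_r T_{r,n}>t_n\}\to0$. Take $\gamma_n=\log n$. The header length is $\log L_n\le t_n+\gamma_n+1=o(\sqrt n)$ nats, and the error vanishes.

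The only delicate step is the collision bound. The same header is used by all decoders, but each decoder has its own list, so the collision events have to be bounded separately and then union-bounded; this is the source of the factor $s$. In contrast, the list-miss events are combined into one joint event.
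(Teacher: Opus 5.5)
Your proposal is correct and follows essentially the same route as the paper: independent uniform labels in $[L]$, posterior lists of size at most $e^{t}$, one shared list-miss event plus $s$ separate collision terms each at most $e^{t}/L\le e^{-\gamma}$, derandomization by averaging over the hash, and a diagonal choice $t_n=o(\sqrt n)$ with $\gamma_n\to\infty$, $t_n+\gamma_n=o(\sqrt n)$, for the asymptotic header claim.
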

\begin{proof}
Assign independent uniform labels in $[L]$ to the possible values of $J$. Decoder $r$ searches its posterior list with threshold $e^{-t}$, which contains at most $e^t$ labels. Outside the true-label exclusion event, any competing label collides with probability $1/L$, so the expected joint collision probability is at most $s e^t/L\le s e^{-\gamma}$. Average and fix one hash. This remains valid for a countable label alphabet. A slow threshold sequence and $\gamma_n\to\infty$ with $t_n+\gamma_n=o(\sqrt n)$ prove the last assertion.
\end{proof}
The header is a message of the encoder that observes $J$. It does not reveal a distributed source to another encoder. For distortion policies, the original joint distortion event is retained whenever the label is recovered; its failure probability is added only once.

\subsection{Constructing one controlled partition}
The previous results reduce irregular-cell analysis to a partition construction. A synthetic observation can provide such a partition through the strong functional representation lemma of Li and El Gamal~\cite{sfrl}. The same seed must work for all active outputs. Tan's Gaussian broadcast converse uses this principle to preserve both component geometry and the cost of reassembling an output image~\cite{tanbc}.

Consider $T-X-(B_1,\ldots,B_m,E_1,\ldots,E_s)$ on standard Borel spaces with $I(X;T)<\infty$. Fix reference kernels $Q_{j|t}$ and finite quantities
\[
 d_j=\E_TD(P_{B_j|T}\|Q_{j|T}),\qquad
 h_r=I(X;T\mid E_r),\qquad
 \kappa=\log\left(1+\frac{I(X;T)}{\log2}\right)+4\log2.
\]
\begin{lemma}[Simultaneous auxiliary-to-partition transfer]\label{lem:simultaneous-partition}
There is a deterministic countably labelled partition $J=f(X)$, and a reference parameter $t(J)$, such that, with $C=2(m+s)$,
\begin{align}
 H(J\mid E_r)&\le C(h_r+\kappa),\quad r\le s,\label{eq:partition-SFRL-h}\\
 \E_JD(P_{B_j|J}\|Q_{j|t(J)})&\le C(d_j+\kappa),\quad j\le m.
 \label{eq:partition-SFRL-d}
\end{align}
Additional nonnegative approximation errors can be included in the same seed-selection argument by increasing $C$ by twice their number.
\end{lemma}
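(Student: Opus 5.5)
The plan is to build $J$ from one Poisson functional representation in the sense of Li and El Gamal~\cite{sfrl}, which serves every output and eavesdropper row at once. Then a Markov and union bound over the shared random seed fixes one deterministic seed. Let $Z=(\bar T_1,\bar T_2,\ldots)$ be the points of a marked Poisson process with i.i.d.\ marks $\bar T_k\sim P_T$, independent of $X$. Define the selected index
\[
 K=\arg\min_k \frac{\Theta_k}{(dP_{T|X}/dP_T)(\bar T_k\mid X)},
\]
so that $(X,\bar T_K)$ has the law of $(X,T)$. For each fixed realization $z$, $K=f_z(X)$ is a deterministic countably labelled partition of $X$, and $t(K)=\bar t_K$ is its reference parameter. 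Since $T-X-(B,E)$ and $\bar T_K=T$ in law jointly with $X$, the joint law of $(X,B_j,E_r,\bar T_K)$ is the original one. This representation is the only probabilistic input.

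\textbf{Step 1 (entropy rows).} I will show that $H(K\mid E_r,Z)\le h_r+\kappa$ for each $r$. The unconditional Li--El Gamal bound gives $\E\log K\le I(X;T)+\log(1+I(X;T)/\log 2)+\text{const}$. For a conditioning variable, I would rerun that argument with the mismatched reference $P_{T|E_r}$ in place of $P_T$, using $T-X-E_r$. This requires conditioning on $E_r=e$ and comparing the selected rank with a Poisson process tilted by $dP_{T|E_r=e}/dP_T$. A Zipf-type entropy bound of the form $H(K\mid E_r,Z)\le\E\log K'+\log(1+\E\log K')+O(1)$ then converts this into the stated form with $\kappa$. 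Here $K'$ is the rank in that tilted ordering, and $\E\log K'\le I(X;T\mid E_r)+O(1)$. I expect this step to be the main obstacle. The ordering of the points is fixed by $P_T$, while the conditional cost must be measured against $P_{T|E_r}$. If a direct rank comparison fails, my fallback is to bound $H(K\mid E_r,Z)\le H(K\mid \bar T_K,Z)+I(K;\bar T_K\mid E_r,Z)$ and control each term separately.

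\textbf{Step 2 (divergence rows).} Conditionally on $(K,Z)=(k,z)$, the law of $X$ is a posterior whose output $P_{B_j|K=k,Z=z}$ should be compared with $Q_{j|\bar t_k}$. I will use the chain rule
\[
 \E D(P_{B_j|K,Z}\|Q_{j|\bar T_K})\le \E D(P_{B_j|T}\|Q_{j|T})+I(B_j;K,Z\mid \bar T_K).
\]
The last term is at most $I(X;K,Z\mid \bar T_K)$ by data processing along $B_j-X-(K,Z,\bar T_K)$. The Poisson construction bounds this term by $\kappa$, because given the selected point the remaining randomness of $(K,Z)$ carries only the rank information already controlled in Step 1. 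Hence each divergence row has seed-average at most $d_j+\kappa$.

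\textbf{Step 3 (one seed).} Each of the $m+s$ functionals $\Psi_i(z)$ is a nonnegative random variable in $Z$: the entropies $H(f_z(X)\mid E_r)$ and the divergences for the fixed $z$. Their seed-averages are bounded by $a_i=h_r+\kappa$ or $a_i=d_j+\kappa$, using Steps 1 and 2. By Markov's inequality, $\Pr_Z\{\Psi_i(Z)>2(m+s)a_i\}\le 1/(2(m+s))$, so the union over all $m+s$ rows has probability at most $1/2$. Some $z$ therefore satisfies all bounds \eqref{eq:partition-SFRL-h}--\eqref{eq:partition-SFRL-d} with $C=2(m+s)$. Fixing this $z$ makes $J=f_z(X)$ and $t(J)=\bar t_J$ deterministic. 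Any further nonnegative approximation errors with controlled seed-averages enter the same union bound, and each one adds $2$ to $C$, as claimed. Measurability on standard Borel spaces follows from the measurable $\arg\min$ over a locally finite point process.
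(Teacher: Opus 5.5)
You have the overall architecture right: one Poisson functional representation, then Markov's inequality and a union bound to fix one seed for all rows. The gap is that the central inequality of Step~1, $H(K\mid E_r,Z)\le h_r+\kappa$, is never proved. You only sketch a mismatched rank comparison against $P_{T\mid E_r}$. Its key estimate $\E\log K'\le I(X;T\mid E_r)+O(1)$ and the conversion to exactly $\kappa$ are asserted, and you yourself flag the step as possibly failing.

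Your fallback does not close the gap either. Under the $\arg\min$ rule, the first index carrying a given mark always beats later indices with the same mark. Hence $K$ is a function of $(Z,\bar T_K)$, so $H(K\mid\bar T_K,Z)=0$, and $I(K;\bar T_K\mid E_r,Z)=H(\bar T_K\mid E_r,Z)$ is exactly the quantity still to be bounded. Step~2 has a milder version of the same problem. There you justify $I(X;K,Z\mid\bar T_K)\le\kappa$ by ``rank information already controlled in Step~1'', but that is not what Step~1 controls.

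The missing idea is that the Li--El Gamal guarantee is equivalent to a single conditional mutual-information bound, after which every row follows from chain rules. Since $Z\perp X$ and $T=\bar T_K$ is a function of $(X,Z)$, we have $H(T\mid Z)=I(X;T\mid Z)=I(X;T)+I(X;Z\mid T)$. So the SFRL bound $H(T\mid Z)\le I(X;T)+\kappa$ says exactly $I(X;Z\mid T)\le\kappa$. The outputs are generated from $X$ independently of the seed, so $I(X;Z\mid E_r)=0$ and $I(Z;E_r\mid X,T)=0$. Therefore
\[
 H(K\mid E_r,Z)=H(T\mid E_r,Z)=I(X;T\mid E_r)+I(X;Z\mid T,E_r)
 =h_r+I(X;Z\mid T)-I(Z;E_r\mid T)\le h_r+\kappa .
\]
For the divergence rows, $I(B_j;K,Z\mid T)\le I(X;K,Z\mid T)=I(X;Z\mid T)\le\kappa$.

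This is the paper's proof. It takes the seed $S$ with $I(X;S\mid T)\le\kappa$ directly from \cite[Theorem~1]{sfrl} and labels the distinct values of $T$ at a fixed seed. It then derives both families of bounds from these identities and fixes one seed. No tilted-process rank analysis is needed, and the constant comes out as exactly the stated $\kappa$.
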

The proof is in Appendix~\ref{app:partition-tools}. It generates all physical outputs conditionally on $X$, independently of the representation seed, and then fixes one seed for the entire experiment. In particular, $I(X^n;T_n)=O(n)$ and $h_{r,n}=o(\sqrt n)$ yield the label condition in Corollary~\ref{cor:partition-profile}, while $d_{j,n}=o(n)$ is preserved. A component-image theorem must still establish its fluctuation approximation; an $o(n)$ divergence is not itself that approximation.

\subsection{Separating drift from centered errors}\label{sec:spectral-control}
Distributional approximation has two different tasks. A predictable likelihood drift must be retained in the admissible mean or canceled by a reference identity. Only the centered remainder is removed by a variance estimate. This separation is needed even when the component law is close in relative entropy per symbol.

For example, take $P_n(0)=P_n(1)=1/2$ and $Q_n(1)=e^{-\sqrt n}/2$. For $\ell_n=\log(dP_n/dQ_n)$,
\begin{equation}
 D(P_n\|Q_n)=\tfrac12\sqrt n+O(1)=o(n),\qquad
 \Var_{P_n}(\ell_n)/n\longrightarrow\tfrac14.
 \label{eq:KL-counterexample}
\end{equation}
Thus even centering the likelihood does not make this error negligible at second order. \edit{A change of reference is useful only after its predictable mean has been included or canceled. The Gaussian reconstruction argument below performs that cancellation through its explicit distortion identity. The remaining random error is controlled separately.}

The following criterion records what suffices. Let $\mathcal H_n$ contain the information available before residual channel noise. Let $m_n,\Sigma_n$ be $\mathcal H_n$-measurable, with
$\min_j(\Sigma_n)_{jj}\ge\sigma_*^2>0$, and suppose
\[
 \E\sup_z\left|P(Z_n\le z\mid\mathcal H_n)-\Phi_{\Sigma_n}(z)\right|\le\beta_n.
\]
For the actual score vector write
$S_n=m_n+Z_n+(R_n+T_n)/\sqrt n$, where
$\E\|T_n\|_\infty\le h_n$ and $\E\|R_n\|^2\le v_n$.
\begin{proposition}[Quantitative spectral stability]\label{prop:spectral-stability}
For every $\eta>0$ and every $\mathcal H_n$-measurable threshold $b_n$,
\begin{equation}
 \left|P(S_n\le b_n)-\E\Phi_{\Sigma_n}(b_n-m_n)\right|
 \le\beta_n+C\eta+\frac{h_n}{\eta\sqrt n}
                    +\frac{v_n}{n\eta^2},
 \label{eq:spectral-stability}
\end{equation}
where $C$ depends only on the fixed dimension and $\sigma_*$. In particular, $h_n=o(\sqrt n)$, $v_n=o(n)$, and $\beta_n\to0$ imply a vanishing comparison error.
\end{proposition}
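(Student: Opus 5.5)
The plan is a sandwich argument on the event $\{S_n\le b_n\}$. Write $E_n=(R_n+T_n)/\sqrt n$, so that $S_n\le b_n$ reads $Z_n\le b_n-m_n-E_n$. On the good event $G_n=\{\|E_n\|_\infty\le\eta\}$, we have the inclusions
\[
 \{Z_n\le b_n-m_n-\eta\one\}\cap G_n\subseteq\{S_n\le b_n\}\cap G_n\subseteq\{Z_n\le b_n-m_n+\eta\one\}.
\]
Hence
\[
 P(Z_n\le b_n-m_n-\eta\one)-P(G_n^c)\le P(S_n\le b_n)\le P(Z_n\le b_n-m_n+\eta\one)+P(G_n^c).
\]

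The bad event is controlled by splitting the perturbation into its two parts. For the predictable-drift part, Markov's inequality gives $P(\|T_n\|_\infty>\eta\sqrt n/2)\le 2h_n/(\eta\sqrt n)$. For the centered part, Chebyshev's inequality in the Euclidean norm, which dominates $\|\cdot\|_\infty$, gives $P(\|R_n\|>\eta\sqrt n/2)\le 4v_n/(n\eta^2)$. Together these bound $P(G_n^c)$ by the last two terms of \eqref{eq:spectral-stability}. The constants $2$ and $4$ are absorbed by replacing $\eta$ with a constant multiple and enlarging $C$. Note that neither $R_n$ nor $T_n$ needs to be $\mathcal H_n$-measurable or independent of $Z_n$; only the moment bounds are used.

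Next, the two shifted Gaussian-approximation terms are handled by conditioning on $\mathcal H_n$. Since $b_n$, $m_n$ and $\eta$ are $\mathcal H_n$-measurable, $b_n-m_n\pm\eta\one$ is a fixed point given $\mathcal H_n$. Thus
\[
 |P(Z_n\le b_n-m_n\pm\eta\one\mid\mathcal H_n)-\Phi_{\Sigma_n}(b_n-m_n\pm\eta\one)|
\]
is bounded by the conditional supremum, and taking expectations contributes $\beta_n$.

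The remaining step, which I expect to be the main obstacle, is the Gaussian smoothing estimate
\[
 |\Phi_{\Sigma}(z\pm\eta\one)-\Phi_\Sigma(z)|\le C\eta,
\]
uniform over $z$ and over all covariances with diagonal at least $\sigma_*^2$, including singular ones. This cannot rely on a joint density. Instead I would use the union bound over coordinates: the symmetric difference of the two orthants is contained in $\bigcup_j\{G_j\in(z_j-\eta,z_j+\eta]\}$. Each such event has probability at most $2\eta/(\sqrt{2\pi}\sigma_*)$, since $G_j\sim\cN(0,\Sigma_{jj})$ with $\Sigma_{jj}\ge\sigma_*^2$. So $C=2d/(\sqrt{2\pi}\sigma_*)$ works, where $d$ is the fixed dimension. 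Taking expectations over $\mathcal H_n$ then yields \eqref{eq:spectral-stability}.

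For the final claim, take $\eta=\eta_n\to0$ slowly enough that $h_n/(\eta_n\sqrt n)\to0$ and $v_n/(n\eta_n^2)\to0$. This is possible because $h_n/\sqrt n\to0$ and $v_n/n\to0$; for example, let $\eta_n$ be the maximum of $n^{-1/4}$, $(h_n/\sqrt n)^{1/2}$ and $(v_n/n)^{1/3}$.
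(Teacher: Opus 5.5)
Your proof is correct and follows essentially the same route as the paper. You exclude the Markov and Chebyshev tail events for $T_n$ and $R_n$, compare conditionally on $\mathcal H_n$ at the enlarged and reduced thresholds to pick up $\beta_n$, and bound each orthant shift by a union bound over Gaussian marginal strips of width proportional to $\eta$. The only difference is bookkeeping: you split at $\eta\sqrt n/2$ and then rescale $\eta$, whereas the paper lets each threshold move by $2\eta$ and uses strips of width $4\eta$.
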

\begin{proof}
Outside events of probability at most $h_n/(\eta\sqrt n)+v_n/(n\eta^2)$, each score threshold moves by at most $2\eta$. Apply the conditional rectangle comparison at the enlarged and reduced thresholds. A Gaussian marginal with variance at least $\sigma_*^2$ assigns at most $4\eta/(\sqrt{2\pi}\sigma_*)$ mass to a strip of width $4\eta$; a union bound over coordinates proves \eqref{eq:spectral-stability}. No independence between the error terms and $Z_n$ is required.
\end{proof}
Conditional centering of $R_n$ is normally how its second-moment bound is established. A Gaussian covariance may be singular as long as every retained diagonal entry has the displayed positive lower bound. Deterministic coordinates require a separate atom-sensitive argument. The proposition establishes a second-order transfer, not a logarithmic rate remainder.


\section{Distorted Images and Weighted Image Measures}\label{v5:distortion}\label{sec:distortion}
A distortion requirement changes which source words can share one image. The methods in this section retain either the complete reconstruction policy or an inequality for the weighted mass of every successful subset. Both use the actual block acceptance relation. The conditional marks in Theorem~\ref{new:dense} evaluate distortion fluctuations at a fixed source--reconstruction composition. Optimizing over the reconstruction policies requires the additional image construction developed below.

\subsection{Reconstruction policies and a fractional cell bound}
In this subsection $x,y,a,z$ may each denote an entire finite block. Let $p(x)W(y|x)$ be its joint source and side-information law. One encoder observes $x$ and sends a common index in $[M]$. The uninformed decoder produces $a$, while the informed decoder produces $z=\psi(y)$. Let $b_0(x,a),b_1(x,z)\in\{0,1\}$ indicate satisfaction of the two prescribed \emph{actual} distortion constraints. The alphabets and indicators are arbitrary and finite.

A policy $s=(a,\psi)$ has source-dependent success probability
\[
 q_s(x)=b_0(x,a)\sum_yW(y|x)b_1(x,\psi(y)).
\]
For a nonempty cell $C\subseteq\mathcal X$, define
\begin{align}
 v(C)&=\max_s\sum_{x\in C}p(x)q_s(x)\label{v5:cell-value}\\
 &=\max_a\sum_y\max_z\sum_{x\in C}
 p(x)W(y|x)b_0(x,a)b_1(x,z).\nonumber
\end{align}
The order of the maxima is essential: the weak reconstruction is fixed before $y$, and only the informed reconstruction can depend on $y$. Distinct acceptable sets are not required to be disjoint.

For a real budget $m>0$ let
\begin{equation}
 \mathsf F(m)=\max_{w_C\ge0}
 \left\{\sum_Cw_Cv(C):\ \sum_Cw_C\le m,
          \ \sum_{C\ni x}w_C\le1\ \text{for every }x\right\}.
 \label{v5:fractional-program}
\end{equation}
This finite linear program enumerates all nonempty source cells. Its size grows with the number of source subsets; its role here is to characterize the finite-block bound and identify the quantity to evaluate asymptotically.

\begin{theorem}[Distortion-policy relaxation and rounding]\label{v5:policy-lp}
Let $P_{\mathrm c}^*(M)$ be optimal joint success with $M$ messages. For every integer $M\ge1$ and every $c>0$,
\begin{equation}
 P_{\mathrm c}^*(M)\le\mathsf F(M),\qquad
 P_{\mathrm c}^*(\lceil cM\rceil)\ge(1-e^{-c})\mathsf F(M).
 \label{v5:lp-sandwich}
\end{equation}
The rounding assertion holds also for a real input budget $m$ in place of $M$. The exact dual is
\begin{equation}
 \mathsf F(m)=\min_{\alpha_x,\beta\ge0}
 \left\{\sum_x\alpha_x+m\beta:
       \beta+\sum_{x\in C}\alpha_x\ge v(C)\ \forall C\right\}.
 \label{v5:lp-dual}
\end{equation}
Equivalently, its constraints are
\begin{equation}
 \beta\ge\sup_s\sum_x\bigl(p(x)q_s(x)-\alpha_x\bigr)_+.
 \label{v5:policy-dual}
\end{equation}
Let $\mathsf M(\eps)=\inf\{m\ge1:\mathsf F(m)\ge1-\eps\}$. For $0<\delta<\eps<1$,
\begin{equation}
 \mathsf M(\eps)\le M^*(\eps)
 \le\left\lceil\log(1/\delta)\,\mathsf M(\eps-\delta)\right\rceil.
 \label{v5:lp-rate-bracket}
\end{equation}
Infinite values are allowed when the distortion constraints are infeasible.
\end{theorem}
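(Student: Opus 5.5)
The plan has four parts: identify the code with a weight vector for the upper bound, prove rounding by random cell selection, derive the dual from LP duality, and combine the two sides for the rate bracket.

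\emph{Upper bound.} Any code with $M$ messages partitions $\mathcal X$ into at most $M$ nonempty encoder cells $C_1,\dots,C_M$. Given message $i$, the uninformed decoder outputs a fixed $a_i$ and the informed decoder uses a map $\psi_i(y)$. The success probability is therefore $\sum_i\sum_{x\in C_i}p(x)q_{s_i}(x)\le\sum_i v(C_i)$. Setting $w_{C_i}=1$ gives a feasible point of \eqref{v5:fractional-program}: the cells are disjoint, so $\sum_{C\ni x}w_C\le1$, and there are at most $M$ of them. Hence $P_{\mathrm c}^*(M)\le\mathsf F(M)$.

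\emph{Rounding.} Let $w$ be optimal for budget $m$. Draw $N=\lceil cm\rceil$ cells i.i.d.\ from the probability vector $w_C/\sum_Cw_C$, padding with the empty-weight option if $\sum w_C<m$. Concretely, let each draw pick cell $C$ with probability $w_C/m$ and pick nothing with the remaining probability. Each draw comes with the policy $s_C$ attaining $v(C)$. The encoder, seeing $x$, sends the index of a drawn cell containing $x$ whose policy maximizes $q_s(x)$. Selecting the best among several covering draws only helps, so the success probability at $x$ is at least
\[
\E\max_{k:\,x\in C_k}q_{s_{C_k}}(x).
\]
This is where the main technical point lies: we need the inequality
\[
\E\max \ge (1-e^{-c})\frac{\sum_{C\ni x}w_Cq_{s_C}(x)}{\sum_{C\ni x}w_C}\cdot\min\Bigl(1,\sum_{C\ni x}w_C\Bigr),
\]
and in fact the cleaner bound $\E\max\ge(1-e^{-c})\sum_{C\ni x}w_Cq_{s_C}(x)$ once $\sum_{C\ni x}w_C\le1$. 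To prove it, sort the policies covering $x$ by decreasing $q$ value, $q^{(1)}\ge q^{(2)}\ge\cdots$, with weights $w^{(k)}$. Each draw independently hits policy $k$ with probability $w^{(k)}/m$, so
\[
\E\max=\sum_k q^{(k)}\Bigl[\bigl(1-W_{k-1}/m\bigr)^N-\bigl(1-W_k/m\bigr)^N\Bigr],\qquad W_k=\sum_{l\le k}w^{(l)}.
\]
The function $g(W)=1-(1-W/m)^N$ is concave with $g(W)\ge(1-e^{-c})W$ for $W\in[0,1]$, because $(1-W/m)^N\le e^{-cW}$ and $1-e^{-cW}\ge(1-e^{-c})W$. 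Abel summation with decreasing $q^{(k)}$ then gives $\E\max\ge(1-e^{-c})\sum_kq^{(k)}w^{(k)}$. Summing $p(x)$ times this over $x$ yields $(1-e^{-c})\sum_Cw_Cv(C)$. Averaging over the randomness, some realization achieves at least the expectation, which proves the rounding half of \eqref{v5:lp-sandwich}. The argument applies verbatim to a real budget $m$.

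\emph{Dual.} The LP is finite and feasible at $w=0$, so strong duality gives \eqref{v5:lp-dual}, with multipliers $\alpha_x$ for the coverage constraints and $\beta$ for the budget. For fixed $\alpha$, the smallest admissible $\beta$ is $\max_C\bigl(v(C)-\sum_{x\in C}\alpha_x\bigr)$. Expanding $v(C)=\max_s\sum_{x\in C}p(x)q_s(x)$ and exchanging the two maxima, the optimal $C$ for a fixed $s$ keeps exactly those $x$ with positive $p(x)q_s(x)-\alpha_x$. This gives \eqref{v5:policy-dual}; allowing $C$ to be empty contributes the value $0$, which is consistent with $\beta\ge0$.

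\emph{Rate bracket.} The lower inequality in \eqref{v5:lp-rate-bracket} follows from the upper bound: if $P_{\mathrm c}^*(M)\ge1-\eps$, then $\mathsf F(M)\ge1-\eps$, so $M\ge\mathsf M(\eps)$. For the upper inequality, take $m=\mathsf M(\eps-\delta)$; by continuity of the LP value in $m$, $\mathsf F(m)\ge1-\eps+\delta$. Set $c=\log(1/\delta)$. Rounding gives success at least
\[
(1-\delta)(1-\eps+\delta)\ge1-\eps,
\]
since $(1-\delta)(1-\eps+\delta)-(1-\eps)=\delta\eps-\delta^2=\delta(\eps-\delta)\ge0$. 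The number of messages used is $\lceil cm\rceil$, as required. Infeasible distortion constraints correspond to $\mathsf F<1-\eps$ for every $m$, so both sides of the bracket are $+\infty$.
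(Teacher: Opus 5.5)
Your proposal is correct and follows the paper's proof. It uses the same partition-to-weights upper bound, the same random draw of $\lceil cm\rceil$ cell--policy pairs with probabilities $w_C/m$ and best-policy encoding, LP duality with the positive-part reduction, and $c=\log(1/\delta)$ for the bracket. Your Abel summation over policies sorted by $q_{s_C}(x)$ is the discrete form of the paper's integration of $1-(1-u_x(t)/m)^N\ge(1-e^{-c})u_x(t)$ over thresholds $t$, and your continuity of $\mathsf F$ plays the role of the paper's compactness argument for attaining $\mathsf M(\eps-\delta)$.
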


\begin{proof}
An encoder partitions $\mathcal X$ into at most $M$ nonempty cells. Optimizing each pair of decoders separately gives the exact identity
\[
 P_{\mathrm c}^*(M)=\max_{\text{partitions }\{C_j\}_{j\le M}}
                           \sum_jv(C_j).
\]
A partition is feasible in \eqref{v5:fractional-program} by assigning weight one to its cells. This proves the upper bound. Linear programming duality gives \eqref{v5:lp-dual}. Maximizing
$v(C)-\sum_{x\in C}\alpha_x$ first over the policy and then over $C$ selects exactly the positive summands, proving \eqref{v5:policy-dual}.

To prove rounding, fix feasible weights and a maximizing policy $s_C$ for every cell. Independently draw $N=\lceil cm\rceil$ cell-policy pairs, choosing $C$ with probability $w_C/m$ and an empty dummy with the remaining probability. These lists, including the policies, are fixed codebooks. At source $x$, the encoder selects the drawn pair containing $x$ with the largest $q_{s_C}(x)$; a fixed default is used if none contains $x$. This uses $x$ alone, not the realized $y$.

For $t\in[0,1]$ set
\[
 u_x(t)=\sum_{C\ni x:\ q_{s_C}(x)\ge t}w_C\le1.
\]
The probability that a drawn list contains a policy of success at least $t$ for this $x$ is
\[
 1-(1-u_x(t)/m)^N\ge1-e^{-cu_x(t)}
                         \ge(1-e^{-c})u_x(t).
\]
The final inequality is the chord bound for the concave function $1-e^{-cu}$ on $[0,1]$. Integrating over $t$ and summing with weights $p(x)$ shows that the expected success of this code is at least
$(1-e^{-c})\sum_Cw_Cv(C)$. Fixing one list attains at least this average. Maximizing over feasible weights proves rounding.

For \eqref{v5:lp-rate-bracket}, use $c=\log(1/\delta)$ and a fractional solution with success $1-\eps+\delta$. Its rounded success is at least
$(1-\delta)(1-\eps+\delta)\ge1-\eps$, since $\delta\le\eps$. Compactness of the finite program gives the infimum when finite. The lower bound follows from the relaxation.
\end{proof}

For a memoryless block problem, choosing $\delta_n=n^{-2}$ makes the extra logarithmic message budget $O(\log\log n)$. The associated error argument is $\eps-n^{-2}$, so transferring the bound to a limiting coefficient requires control of the profile at this shifted tolerance. In particular, if the fractional programs have a locally uniform, strictly crossing limit
\[
 \mathsf F_n(e^{nR+\sqrt nL})\longrightarrow F(L),
\]
then Theorem~\ref{v5:policy-lp} transfers that same second-order profile to operational coding. Evaluating this limiting fractional-image profile is the remaining step toward a general both-lossy Heegard--Berger second-order formula.

\subsection{Weighted acceptance and successful-subset certificates}
The policy formulation also explains the connection with distortion-tilted information. If an integrable score $j$ satisfies
\begin{equation}
 \sum_xp(x)e^{j(x)}q_s(x)\le1\qquad\text{for every admissible policy }s,
 \label{v5:weighted-acceptance}
\end{equation}
then choosing $\alpha_x=p(x)\one\{j(x)\le\log M+\gamma\}$ and $\beta=e^{-\gamma}/M$ in \eqref{v5:policy-dual} gives
\begin{equation}
 P_{\mathrm c}^*(M)\le\Pr\{j(X)\le\log M+\gamma\}+e^{-\gamma}.
 \label{v5:weighted-spectrum}
\end{equation}
For ordinary lossy coding, the policy is a reproduction and $q_s$ is a distortion-ball indicator. The usual tilted-information exponential certificate implies \eqref{v5:weighted-acceptance}; this is the familiar distortion-sensitive spectrum \cite{rd}. With decoder side information, a reconstruction policy may couple decisions across coordinates. The certificate must therefore hold for every admissible \emph{block policy} under the actual block-distortion constraints.

The following form is particularly useful when an entropy inequality controls every successful subset directly.
\begin{lemma}[Successful-subset spectrum]\label{v5:entropy-spectrum}
Let $\mathcal E$ be the success event under a probability law $P$. Suppose a real budget $b$ and an integrable score $J$ satisfy
\begin{equation}
 b\ge\E[J\mid A]+\log P(A)
 \label{v5:successful-certificate}
\end{equation}
for every measurable $A\subseteq\mathcal E$ of positive probability. Then
\[
 P(\mathcal E)\le P\{J\le b+\gamma\}+e^{-\gamma}\qquad(\gamma>0).
\]
Several certificates may be imposed simultaneously on the same event, with the sum of their exponential penalties.
\end{lemma}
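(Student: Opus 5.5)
Take the single set $A=\mathcal E\cap\{J>b+\gamma\}$ in the certificate \eqref{v5:successful-certificate}. Then
\[
P(\mathcal E)\le P\{J\le b+\gamma\}+P(A),
\]
so it suffices to show $P(A)\le e^{-\gamma}$.

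If $P(A)=0$ there is nothing to prove. Otherwise $A$ is a measurable subset of $\mathcal E$ with positive probability, so the certificate applies to it. On $A$ we have $J>b+\gamma$ pointwise, hence $\E[J\mid A]>b+\gamma$; the conditional expectation is finite because $J$ is integrable and $P(A)>0$. The certificate then gives
\[
b\ge\E[J\mid A]+\log P(A)>b+\gamma+\log P(A),
\]
that is, $\log P(A)<-\gamma$. Thus $P(A)<e^{-\gamma}$, which proves the displayed bound.

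For several certificates $(J_k,b_k)$ imposed on the same event, use $A_k=\mathcal E\cap\{J_k>b_k+\gamma_k\}$ for each $k$. The same argument gives $P(A_k)\le e^{-\gamma_k}$. A union bound yields
\[
P(\mathcal E)\le P\{J_k\le b_k+\gamma_k\ \forall k\}+\sum_k e^{-\gamma_k}.
\]
Every exclusion intersects the one original event $\mathcal E$, so the penalties simply add.

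I expect no serious obstacle. The only points needing care are measurability and finiteness of $\E[J\mid A]$, both covered by the integrability of $J$. A strict inequality $J>b+\gamma$ on a set of positive measure does give a strict inequality for the conditional mean, although a non-strict one would suffice here.
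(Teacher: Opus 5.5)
Your proof is correct and follows the paper's own argument: you apply the certificate to $A=\mathcal E\cap\{J>b+\gamma\}$ to get $\log P(A)<-\gamma$, remove these successful violations, and use a union bound over the exceptional sets for the simultaneous form. Your added remarks on the strict conditional-mean inequality and on the finiteness of $\E[J\mid A]$ fill in details the paper leaves implicit.
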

\begin{proof}
Apply \eqref{v5:successful-certificate} to
$A=\mathcal E\cap\{J>b+\gamma\}$. If it has positive probability, then
$\log P(A)<-\gamma$. Removing these successful violations proves the claim, and a union bound proves the simultaneous form.
\end{proof}



\section{Cell-Dependent References and Reconstruction-Constrained Images}\label{sec:cell-reference}
A fixed output reference describes an ordinary channel image. A reconstruction constraint provides additional structure: each source cell is assigned a reproduction, and the reference output measure can be centered at that reproduction. This section retains that information before taking a Gaussian limit. The same likelihood then accounts for both the number of cells and the actual distortions.

\subsection{Normalized mass of a labelled image}
Let $X\sim P_X$ and $Y\mid X\sim W$, on standard Borel spaces with the densities used below. A deterministic partition $M=f(X)$ has at most $M_0$ nonempty cells. For each label $m$, choose a nonnegative reference density $r_m(y)$ of total mass at most one, positive on the sampled output support. The densities may depend on the partition and its prescribed reconstructions, but are fixed functions before the random experiment. Write $p_{M,Y}$ for the joint mass--density under the original law and define
\begin{equation}
 H_r(M,Y)=\log\frac{r_M(Y)}{p_{M,Y}(M,Y)}.\label{eq:cell-H}
\end{equation}
All density ratios are evaluated on their sampling support.

\begin{lemma}[Cell-dependent image mass]\label{lem:cell-reference}
For every such partition and every $\gamma>0$,
\begin{equation}
 \E e^{H_r}\le M_0,\qquad
 \Pr\{H_r>\log M_0+\gamma\}\le e^{-\gamma}.\label{eq:cell-budget}
\end{equation}
Let $s_{m,y}(x)$ be any normalized conditional density that is positive on the relevant source support, and put
\[
 Z_s=\log\frac{p_{X\mid M,Y}(X\mid M,Y)}{s_{M,Y}(X)}.
\]
Then $\Pr\{Z_s<-\gamma\}\le e^{-\gamma}$ and
\begin{equation}
 H_r=Z_s+\log\frac{s_{M,Y}(X)}{p_X(X)}
                  +\log\frac{r_M(Y)}{W(Y\mid X)}.\label{eq:cell-identity}
\end{equation}
A finite family of reference pairs gives simultaneous tests on one success event, with the sum of their exceptional probabilities. A reference mass smaller than one is kept unnormalized, since its normalization would change the likelihood offset.
\end{lemma}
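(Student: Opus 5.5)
The first claim is a direct computation: write $\E e^{H_r}$ as a sum and integral against the joint mass--density $p_{M,Y}$, which cancels the denominator. Over the sampling support this gives
\[
 \E e^{H_r}=\sum_m\int_{\{p_{M,Y}(m,\cdot)>0\}}r_m(y)\,dy\le\sum_{m:\,P(M=m)>0}\int r_m\le M_0 .
\]
Here nonnegativity of $r_m$ lets us enlarge the domain, and each reference has total mass at most one. Only nonempty cells contribute, and there are at most $M_0$ of them. Markov's inequality applied to $e^{H_r}$ at level $M_0e^{\gamma}$ then gives the tail bound in \eqref{eq:cell-budget}. This is also why a subnormalized $r_m$ must not be rescaled: the mass bound only uses $\int r_m\le1$, and normalizing would change $H_r$ by the additive constant $-\log\int r_m$.

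For $Z_s$ I will use the same change-of-measure argument conditionally on $(M,Y)$. Since
\[
 \E\bigl[e^{-Z_s}\mid M,Y\bigr]=\int s_{M,Y}(x)\,\one\{p_{X\mid M,Y}(x\mid M,Y)>0\}\,dx\le1,
\]
Markov's inequality yields $\Pr\{Z_s<-\gamma\}\le e^{-\gamma}$. The positivity of $s_{m,y}$ on the relevant source support keeps $Z_s$ finite almost surely.

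The identity \eqref{eq:cell-identity} is Bayes' rule for the deterministic label. Because $M=f(X)$, the joint density of $(X,M,Y)$ is $p_X(x)W(y\mid x)\one\{m=f(x)\}$. Hence almost surely
\[
 p_{M,Y}(M,Y)\,p_{X\mid M,Y}(X\mid M,Y)=p_X(X)\,W(Y\mid X).
\]
Substituting this into the definition \eqref{eq:cell-H} of $H_r$, and then multiplying and dividing by $s_{M,Y}(X)$, gives the three-term decomposition. The only care needed is to state everything on the sampling support, where all densities are positive; this is the main, though mild, technical point. It also requires that regular conditional densities exist, which standard Borel spaces guarantee.

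The simultaneous statement follows by applying the two tail bounds to each reference pair in the finite family and taking a union bound on the same original event. The exceptional probabilities then add.
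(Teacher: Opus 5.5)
Your proposal is correct and follows the paper's proof step for step. It uses direct integration plus Markov for $H_r$, the conditional change of measure $\E[e^{-Z_s}\mid M,Y]\le 1$ for $Z_s$, Bayes' rule with $M=f(X)$ for the identity, and a union bound on the original success event for the simultaneous tests. One minor point: $p_{X\mid M,Y}$ is given explicitly by the Bayes formula once the densities are assumed, so you do not need to appeal to standard Borel spaces for its existence.
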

\begin{proof}
Direct integration gives
\[
 \E e^{H_r}=\sum_m\int_{\{p_{M,Y}(m,y)>0\}}r_m(y)\,dy\le M_0.
\]
For each $(m,y)$, integration of $s_{m,y}/p_{X\mid M,Y}$ under the posterior is at most one. Markov's inequality gives both tails. Since $M=f(X)$, Bayes' rule gives
$p_{X\mid M,Y}(x\mid m,y)=p_X(x)W(y\mid x)/p_{M,Y}(m,y)$ on the cell, proving the identity. Apply the union bound only after intersecting each exceptional event with the original success event.
\end{proof}

The choice $r_m=p_Y$ gives $H_r=-\log P(M\mid Y)$. A reference depending on a reconstruction $u(m)$ instead retains a distortion ball inside the likelihood. The proof uses no independence of the source coordinates after restriction to a cell.

\subsection{Images with an agreed reproduction}
Suppose the encoder also selects $U=h(X)$ and the decoder returns $\widehat U=g(M,Y)$. For each $u$, choose a backward source reference $q_{X\mid u}$ and its output $q_{Y\mid u}=q_{X\mid u}W$. Define
\begin{equation}
 g_q(x,u,y)=\log\frac{q_{X\mid u}(x)}{p_X(x)}
                 -\log\frac{q_{Y\mid u}(y)}{p_Y(y)}.\label{eq:cr-reference-score}
\end{equation}
\begin{lemma}[Agreed-reproduction image]\label{lem:agreed-image}
For every $E\subseteq\{U=\widehat U\}$,
\begin{equation}
 \E[e^{g_q(X,U,Y)}\one_E]\le M_0,
 \qquad
 \Pr(E)\le\Pr\{g_q\le\log M_0+\gamma\}+e^{-\gamma}.\label{eq:cr-reference-test}
\end{equation}
More generally, let a coarse reproduction $u_0(m)$ be known from the message. If $v(X)$ is recovered on $E$ and $q_{X\mid v}$ induces $q_{Y\mid v}$, the same conclusion holds for
\begin{equation}
 g(x,u_0,v,y)=\log\frac{q_{X\mid v}(x)}{p_X(x)}
                         +\log\frac{r_{u_0}(y)}{q_{Y\mid v}(y)},\label{eq:nested-agreement-score}
\end{equation}
where $r_{u_0}$ is a probability density for every coarse reconstruction.
\end{lemma}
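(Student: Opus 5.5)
The plan is to prove the first-moment bound $\E[e^{g_q}\one_E]\le M_0$ by direct integration, as in Lemma~\ref{lem:cell-reference}. The tail bound then follows from Markov's inequality.

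\emph{Rewriting the integrand.} On $E$ we have $U=\widehat U=g(M,Y)$. Hence on $E$,
\[
 e^{g_q(X,U,Y)}=\frac{q_{X\mid g(M,Y)}(X)\,p_Y(Y)}{p_X(X)\,q_{Y\mid g(M,Y)}(Y)} .
\]
Since $E\subseteq\{U=g(M,Y)\}$, we may drop $\one_E$ and bound by the full expectation of this expression. It is a function of $(X,M,Y)$ with $M=f(X)$.

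\emph{Integrating out the source.} Write the expectation as $\sum_m\int p_X(x)W(y\mid x)\one\{f(x)=m\}(\cdot)\,dx\,dy$. The factor $p_X(x)$ cancels. Fix $m$ and $y$, and set $u=g(m,y)$. The inner integral over $x$ in the cell is
\[
 \int_{f(x)=m}q_{X\mid u}(x)W(y\mid x)\,dx\le (q_{X\mid u}W)(y)=q_{Y\mid u}(y).
\]
This cancels the denominator, leaving $\int p_Y(y)\,dy=1$ for each $m$. Summing over at most $M_0$ labels gives the bound $M_0$.

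\emph{Tail bound.} Apply Markov's inequality: $\Pr\{g_q>\log M_0+\gamma,E\}\le e^{-\gamma}$. Splitting $E$ according to whether $g_q\le\log M_0+\gamma$ then gives the stated bound on $\Pr(E)$.

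\emph{Nested case.} On $E$, $v(X)$ is recovered, say $v=\widehat v(M,Y)$. Also $u_0=u_0(M)$ is known from the message. The integrand is
\[
 \frac{q_{X\mid \widehat v(m,y)}(x)\,r_{u_0(m)}(y)}{p_X(x)\,q_{Y\mid \widehat v(m,y)}(y)} .
\]
We use the same steps. The factor $p_X$ cancels, and for fixed $(m,y)$ the integral over $x$ in the cell is bounded by $q_{Y\mid \widehat v(m,y)}(y)$. What remains is $\int r_{u_0(m)}(y)\,dy\le1$ for each $m$, giving again the bound $M_0$ and the same tail conclusion.

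\emph{Main obstacle.} The delicate point is support and measurability bookkeeping. We need ratios evaluated only on the sampling support, positivity of $q_{Y\mid u}$ wherever the integrand is nonzero, and densities with respect to a common dominating measure for the standard Borel setting. The key structural observation is that on $E$ the reference index is a function of $(M,Y)$, which justifies replacing $\one_E$ by the decoder-indexed integrand. Without that replacement the $x$-integration would not cancel the output density.
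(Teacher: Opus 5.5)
Your proposal is correct and is essentially the paper's argument: on $E$ the reference index is fixed by $(m,y)$. Integrating the source against $q_{X\mid u}W$ cancels $q_{Y\mid u}(y)$, leaving $p_Y(y)$ (or $r_{u_0(m)}(y)$), which integrates to at most one for each of the $M_0$ labels, and Markov's inequality gives the tail bound. The only cosmetic difference is that you drop $\one_E$ and integrate over the whole cell, whereas the paper integrates over the successful section; both give the same bound.
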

\begin{proof}
For fixed $m,y$, agreement fixes $u=g(m,y)$. Multiplication by the source--channel density in \eqref{eq:cr-reference-score} gives $p_Y(y)q_{X\mid u,Y=y}(x)$. Its integral over the successful source section is at most $p_Y(y)$. Sum over $m$ and integrate over $y$. For \eqref{eq:nested-agreement-score}, the same integral is bounded by $r_{u_0(m)}(y)$, whose integral is one. The tail bound follows by removing successful points with $g>\log M_0+\gamma$.
\end{proof}

For real block vectors, agreement permits an encoder-observable geometric adjustment at logarithmic cost. Project $x^n$ onto the span of the encoder-selected reproduction and quantize the projection coefficient. On agreement, the decoder has the same vector and applies the transmitted coefficient. For nested reconstructions known at the stronger decoder, projection onto nested spans fixes their cross moment as well. Appendix~\ref{app:cr-details} proves the required bounded-header version, including nearly collinear reproductions.

\subsection{Cancellation of common likelihood costs}
The next lemma separates the image reduction from its Gaussian evaluation. The sigma-field $\mathcal F_n$ contains only information fixed before the residual channel experiment.
\begin{lemma}[Matched-cost image reduction]\label{lem:matched-cost}
Let $E_n$ be the original successful event. Suppose there are $\mathcal F_n$-measurable $A_n,c_n$, a likelihood $H_n$, and residuals $J_n,B_n$ such that, on $E_n$ outside events of probability tending to zero,
\begin{align}
 H_n&\le nr_n+c_n+\sqrt nL+o_P(\sqrt n),\label{eq:matched-upper}\\
 H_n&\ge nr_n+c_n+\sqrt n(A_n+J_n+B_n)-o_P(\sqrt n).\label{eq:matched-lower}
\end{align}
Suppose also that
\begin{equation}
 H_n-\E[H_n\mid\mathcal F_n]=\sqrt nJ_n+o_P(\sqrt n)\label{eq:matched-center}
\end{equation}
on a source-measurable localization retaining $E_n$ up to probability $o(1)$. On that localization assume the conditional CDF of $(J_n,B_n)$ converges uniformly over all thresholds to $\Phi_\Sigma$, where $\Sigma\succ0$. If $A_n\Rightarrow A$ under the original source law, then
\begin{equation}
 \limsup_n\Pr(E_n)\le
 \E\left[\sup_{t\in\mathbb R}\Phi_\Sigma(L-A-t,t)\right].\label{eq:matched-envelope}
\end{equation}
No bound $c_n=o(\sqrt n)$ is required.
\end{lemma}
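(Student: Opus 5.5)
Combine the two likelihood inequalities on the success event so that the common cost $nr_n+c_n$ cancels. On $E_n$, outside the exceptional events, \eqref{eq:matched-upper} and \eqref{eq:matched-lower} together give
\[
 A_n+J_n+B_n\le L+o_P(1).
\]
The possibly large analytical cost $c_n$ disappears exactly, which is why no bound $c_n=o(\sqrt n)$ is needed. It remains to bound the probability of this event with $E_n$ dropped, namely $\Pr\{A_n+J_n+B_n\le L+\eta\}+o(1)$ for a fixed $\eta>0$. The $o_P(\sqrt n)$ terms are absorbed by a diagonal choice, exactly as in Corollary~\ref{cor:partition-profile}. This crude bound would give $\Phi$ of a sum, which is weaker than \eqref{eq:matched-envelope}. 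The supremum over $t$ must therefore come from an additional constraint.

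The extra constraint is supplied by the centering identity \eqref{eq:matched-center}. It identifies $\sqrt nJ_n$ with $H_n$ minus an $\mathcal F_n$-measurable quantity. Put $t_n=(\E[H_n\mid\mathcal F_n]-nr_n-c_n)/\sqrt n$, which is $\mathcal F_n$-measurable. Then \eqref{eq:matched-upper} reads $J_n\le L-t_n+o_P(1)$ on $E_n$, and \eqref{eq:matched-lower} reads $A_n+J_n+B_n\le L+o_P(1)$. Substituting $H_n=nr_n+c_n+\sqrt n(t_n+J_n)+o_P(\sqrt n)$ into \eqref{eq:matched-lower} and combining with \eqref{eq:matched-upper} gives two inequalities on $E_n$:
\[
 J_n\le L-t_n+o_P(1),\qquad
 B_n\le t_n-A_n+o_P(1).
\]
Writing $t=L-t_n-A_n$ gives the pair of thresholds $(L-A_n-t,\ t)$ (after the sign conventions are matched to the ordering of coordinates in $\Phi_\Sigma$). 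Since $t_n$ and $A_n$ are $\mathcal F_n$-measurable, conditioning on $\mathcal F_n$ on the localization gives the following bound, using the uniform convergence of the conditional CDF of $(J_n,B_n)$:
\[
 \Pr(E_n\mid\mathcal F_n)\le\Phi_\Sigma(L-A_n-t+\eta,\ t+\eta)+o(1),
\]
where $t$ is the $\mathcal F_n$-measurable value determined by $t_n$. Bounding this by the supremum over $t\in\mathbb R$ removes the unknown $t_n$.

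To finish, take expectations. The map $a\mapsto\sup_t\Phi_\Sigma(L-a-t+\eta,t+\eta)$ is bounded and continuous because $\Sigma\succ0$, so $A_n\Rightarrow A$ gives convergence of its expectation. Letting $\eta\downarrow0$, again using continuity from $\Sigma\succ0$, yields \eqref{eq:matched-envelope}. The localization costs only the $o(1)$ probability of $E_n$ outside it.

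The main obstacle is converting the $o_P$ remainders into deterministic slack uniformly while conditioning on $\mathcal F_n$. I would first pick $\eta$ and restrict to the event where all remainders are below $\eta$; its complement has probability $o(1)$ unconditionally. I would then apply the conditional CDF bound on the remaining event, which is legitimate because the thresholds are $\mathcal F_n$-measurable and the uniformity holds over all thresholds.
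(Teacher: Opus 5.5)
Your argument is the paper's proof: form the $\mathcal F_n$-measurable allocation from $\E[H_n\mid\mathcal F_n]$. Use \eqref{eq:matched-center} to turn \eqref{eq:matched-upper}--\eqref{eq:matched-lower} into the two thresholds $J_n\le L-t_n$ and $B_n\le t_n-A_n$, so that $c_n$ cancels, and condition on $\mathcal F_n$. Then dominate by $h(L-A_n+2\eta)$ with $h(z)=\sup_t\Phi_\Sigma(z-t,t)$, and finish with continuity of $h$, weak convergence of $A_n$, and $\eta\downarrow0$.

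One small slip: the reparametrization that matches $(L-A_n-t,\,t)$ to the $(J_n,B_n)$ thresholds is $t=t_n-A_n$ (the paper's $T_n$), not $t=L-t_n-A_n$.
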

\begin{proof}
Define the initially measurable allocation
\[
 T_n=\frac{\E[H_n\mid\mathcal F_n]-nr_n-c_n}{\sqrt n}-A_n.
\]
Substitution into \eqref{eq:matched-upper}--\eqref{eq:matched-lower} gives, on success,
$J_n\le L-A_n-T_n+o_P(1)$ and $B_n\le T_n+o_P(1)$.
For fixed $\eta>0$, discard the two violations exceeding $\eta$, condition on $\mathcal F_n$, and apply the uniform conditional CDF bound. The resulting upper bound is
$\E h(L-A_n+2\eta)+o(1)$, where
$h(z)=\sup_t\Phi_\Sigma(z-t,t)$.
A positive definite Gaussian law makes $h$ bounded and continuous; its maximizers are finite and locally bounded, since either marginal tail tends to zero as $|t|\to\infty$ on a bounded $z$-window. Weak convergence and then $\eta\downarrow0$ prove the claim. The same $c_n$ cancels in both substitutions, regardless of its magnitude.
\end{proof}

\edit{Let $(J,B)\sim\mathcal N(0,\Sigma)$ be independent of the limiting observable score $A$. A source-observable allocation fixes the two thresholds before $(J,B)$ is realized. With $\Sigma\succ0$ and finite $L$, this restriction gives}
\[
 \E\sup_t\Phi_\Sigma(L-A-t,t)
 <\Pr\{A+J+B\le L\}.
\]
For scalar Gaussian Wyner--Ziv coding, Xu and Zhou~\cite{xuzhou} prove that the left side is the operational optimum. Their posterior smoothing and sampled-description estimates establish the premises of Lemma~\ref{lem:matched-cost}. We next evaluate a cell-dependent reference that also retains an uninformed reconstruction. In its proof, the $o(n)$ second-moment estimate for the centered likelihood and the uniform conditional CDF estimate verify the two distinct requirements of Proposition~\ref{prop:spectral-stability}.

\subsection{A two-level Gaussian distortion image}\label{sec:gaussian-image-local}
Let $X^n\sim\cN(0,pI_n)$ and $Y^n=X^n+Z^n$, with $Z^n\sim\cN(0,qI_n)$ independent. A labelled distortion image consists of at most $M$ source cells, a center $u_m^n$ for each cell, and a measurable reconstruction policy $v_m^n(y^n)$. Its successful mass is
\begin{equation}
 \Pr\{\|X^n-u_M^n\|^2\le na,
          \ \|X^n-v_M^n(Y^n)\|^2\le nb\}.\label{eq:two-level-success}
\end{equation}
Let $\mathcal I_n(M;a,b)$ be the supremum of this mass over all such images. Fix
\begin{equation}
 0<a<p,\qquad 0<b<\frac{aq}{a+q},\qquad
 R_{a,b}=\frac12\log\frac{pq}{b(a+q)}.\label{eq:hb-regime}
\end{equation}
Put $\rho_a=a/(a+q)$, $\beta=b/q$, and
\begin{equation}
 \Sigma_{a,b}=\begin{pmatrix}
 (\rho_a-\beta)(1-(\rho_a-\beta)/2)&-\beta(\rho_a-\beta)/2\\
 -\beta(\rho_a-\beta)/2&\beta(1-\beta/2)
 \end{pmatrix}.\label{eq:hb-covariance}
\end{equation}
For $G_0\sim\cN(0,1/2)$, define
\begin{equation}
 h_{a,b}(z)=\sup_t\Phi_{\Sigma_{a,b}}(z-t,t),\qquad
 F_{a,b}(L)=\E h_{a,b}(L-G_0).\label{eq:hb-profile}
\end{equation}

\begin{theorem}[Gaussian reconstruction-constrained image]\label{thm:gaussian-distortion-image}
Under \eqref{eq:hb-regime}, for every fixed $L\in\mathbb R$ and every budget sequence
$\log M_n=nR_{a,b}+\sqrt nL+o(\sqrt n)$,
\begin{equation}
 \lim_{n\to\infty}\mathcal I_n(M_n;a,b)=F_{a,b}(L).\label{eq:hb-image-limit}
\end{equation}
The profile is continuous and strictly increasing from zero to one. The construction needs only $O(\log n)$ additional description nats at a fixed source cutoff; a diagonal removal of the cutoff preserves the second-order limit.
\end{theorem}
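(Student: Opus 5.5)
The plan is to prove the two inequalities separately and to place both in the framework of Lemma~\ref{lem:matched-cost}. The likelihood will be the cell-dependent $H_r$ of Lemma~\ref{lem:cell-reference}. Its reference is centered at the coarse reconstruction: $r_m=\phi_{(a+q)I_n}(\cdot-u_m^n)$, suitably rescaled to the output law of a backward channel from $u_m^n$ at distortion $a$.

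\textbf{Converse.} Fix an arbitrary labelled image with $M_n$ cells. By \eqref{eq:cell-budget}, on success $H_r\le\log M_n+\gamma$, which gives the upper relation \eqref{eq:matched-upper} with $r_n=R_{a,b}$. For the lower relation I will expand $H_r$ through the identity \eqref{eq:cell-identity}. The source reference $s_{m,y}$ is chosen as the Gaussian posterior ball centered at the informed reconstruction $v_m^n(y^n)$ with variance matched to $b$. On success, the two actual distortion constraints $\|X^n-u^n\|^2\le na$ and $\|X^n-v^n\|^2\le nb$ then enter $\log s/p_X$ and $\log r/W$ as deficits with positive multipliers, as anticipated in the outline.

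The remaining terms split into three parts:
\begin{itemize}
\item $\sqrt n G_0$ with $G_0\sim\mathcal N(0,1/2)$, coming from $\|X^n\|^2/p-n$; this is the source-observable score $A_n$;
\item a source-measurable analytical cost $c_n$, which cancels;
\item two centered channel-noise fluctuations $J_n,B_n$, namely the projections of $Z^n$ and of the residual $X^n-\E[X^n\mid\cdot]$ onto the cell-determined directions.
\end{itemize}
A direct variance computation should give the covariance \eqref{eq:hb-covariance}. The entry $\beta(1-\beta/2)$ comes from the informed deficit. The entry $(\rho_a-\beta)(\cdot)$ comes from the gap between the coarse and fine reconstructions.

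Before conditioning on the source block, I will use the reconstruction repair and smoothing estimates of \cite{xuzhou} to localize. On a localization retaining success up to $o(1)$, this gives the centering \eqref{eq:matched-center}, with an $o(n)$ second moment for the centered likelihood remainder, together with a uniform conditional CDF. Proposition~\ref{prop:spectral-stability} then applies. Lemma~\ref{lem:matched-cost} yields $\limsup\mathcal I_n\le\E h_{a,b}(L-G_0)=F_{a,b}(L)$.

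\textbf{Achievability.} I will use a source-adaptive shell construction, following the Wyner--Ziv inner bound (Lemma~\ref{v5:wz-inner}). The encoder observes the energy $\|x^n\|^2$ and computes the normalized $G_0$. It then picks the allocation $t^*(L-G_0)$ attaining $h_{a,b}$, which has locally bounded maximizers as shown in the proof of Lemma~\ref{lem:matched-cost}, and sends a quantized energy header of $O(\log n)$ nats.

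The coding itself proceeds in two layers:
\begin{itemize}
\item a coarse codebook on the sphere covers $x^n$ at distortion $a$, with a rate chosen so that this coarse-stage fluctuation threshold equals $t^*$;
\item a second Wyner--Ziv binned layer around $u^n$ is decoded with $y^n$, with the threshold $L-G_0-t^*$ left for the informed stage.
\end{itemize}
Conditioned on the energy, the joint success of the two stages is approximately $\Phi_{\Sigma_{a,b}}(L-G_0-t^*,t^*)$, by a bivariate Berry--Esseen bound for the two noise projections. Averaging over $G_0$ gives the lower bound. The construction uses a fixed source-energy cutoff; letting the cutoff grow slowly along a diagonal sequence preserves the second-order limit.

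\textbf{Profile properties.} Continuity and strict monotonicity of $F_{a,b}$ follow because $\Sigma_{a,b}\succ0$ in the regime $\rho_a>\beta$, together with dominated convergence. The limits $0$ and $1$ follow from the marginal tails as $L\to\pm\infty$.

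\textbf{Main obstacle.} The hardest step is the converse lower relation for arbitrary measurable cells and policies. There one must show that the centered part of $H_r$ equals $\sqrt n J_n$ up to $o_P(\sqrt n)$ with a nondegenerate conditional Gaussian limit. Cells need not be shells, so I would first apply the projection and repair argument to reduce to near-shell-shaped sections, and only then compute the covariance.
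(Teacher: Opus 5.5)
\textbf{The converse.} Your converse has the paper's architecture:
\begin{itemize}
\item the cell reference $r_m=\phi_{(a+q)I_n}(\cdot-u_m^n)$, which is already the backward-channel output law and needs no rescaling;
\item the posterior reference centered at $v_m^n(Y^n)$;
\item the identity \eqref{eq:hb-exact-identity} with two positive deficit multipliers;
\item Lemma~\ref{lem:matched-cost}.
\end{itemize}
The gap is in the step you defer, and neither tool you name delivers it.

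The Xu--Zhou regularization cannot be imported unchanged. An entropy relation built on a reference that ignores $u_m^n$ (e.g.\ $r_m=p_Y$, i.e.\ $-\log P(M\mid Y)$) has center $\frac12\log\frac{pq}{b(p+q)}$. This lies $\frac12\log\frac{p+q}{a+q}$ per symbol below $R_{a,b}$, so it bounds the posterior divergence $K_n(0)$ only by $O(n)$. The paper instead takes the expectation of the cell identity with the posterior reference recentered at the repaired mean $m_0$. The coarse reference then puts $nR_{a,b}$ on the right and the nonnegative expected coarse deficit on the left, which yields \eqref{eq:hb-K-bounds}.

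Even $K_n(0)=o(n)$ does not give the $o(n)$ centered second moment. That requires Frobenius flatness of the posterior covariance, which the paper obtains from a vanishing Gaussian enhancement and the I--MMSE derivative \eqref{eq:hb-K-derivative}. The enhancement moves the first-order center by $n(R_{a,b}-R_\nu)$, which need not be $o(\sqrt n)$. It is recovered only through the coarse-referenced gain expansion \eqref{eq:hb-gain-expansion}. So the likelihood fed to Lemma~\ref{lem:matched-cost} is the enhanced $H_{W,\nu}$ with $r_n=R_\nu$. The cancelling cost is the sampled-description term $c_n^{\rm pre}$, of order $n^{3/4}\log n$, which appears in both \eqref{eq:hb-paired-upper} and \eqref{eq:hb-paired-lower}.

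Your proposed remedy, projection repair to near-shell sections, acts on the wrong object. The projection of Appendix~\ref{app:cr-details} works because the decoder output there is an encoder-selected vector. Here $v_m^n(Y^n)$ depends on noise the encoder never sees. The covariance is governed by $\|X^n-u\|^2$, $\langle X^n-u,d_n\rangle$ and $\|d_n\|^2$, where $d_n$ is defined through the decoder's posterior mean. The paper gets \eqref{eq:hb-three-moments} from the relative-entropy comparison with $\mathcal N(m,b_\nu I_n)$, not from cell shape.

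\textbf{The achievability.} You assign the two Gaussian coordinates to the wrong layers. Once the encoder knows its energy branch, covering the coarse shell \eqref{eq:hb-coarse-shell} involves no channel noise. It succeeds with probability $1-n^{-4}$ at budget $\frac n2\log\frac pa+\sqrt nA_n+O(\log n)$. There is therefore no coarse-stage fluctuation for a threshold $t^*$ to act on:
\begin{itemize}
\item a surplus $\sqrt n\,t^*$ there is wasted, consistent with the positive multiplier $1/(2(a+q))$ on the coarse deficit;
\item a deficit makes the covering fail.
\end{itemize}
As described, the scheme cannot attain $F_{a,b}(L)$.

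Both coordinates of $\Sigma_{a,b}$ belong to the refinement. The first is the refinement's decoding-information fluctuation; the second is the informed decoder's actual-distortion fluctuation. The paper gives the refinement all of $nR_1+\sqrt n(L-A_n)$, with $R_1=\frac12\log(s_a/b)$. It realizes $t$ through the refinement-shell radius \eqref{eq:hb-refinement-shell}, which trades the fine information mean $nR_1+\sqrt n\,t$ against the distortion mean $nb-\sqrt n\,t/\lambda$. Decoding uses conditional Poisson matching with Gaussian reference kernels, because the residual source law inside a coarse cell is not Gaussian. A Berry--Esseen step for an idealized Gaussian residual does not cover this.

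Your profile-property argument is fine, although continuity of $h_{a,b}$ itself needs the locally bounded maximizers, not only dominated convergence.
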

\begin{proof}[Proof architecture]
For the converse choose the cell reference
$r_m(y^n)=\phi_{(a+q)I_n}(y^n-u_m^n)$ and the posterior reference
$s_{m,y^n}(x^n)=\phi_{bI_n}(x^n-v_m^n(y^n))$.
Writing $w^n=X^n-u_M^n$, the exact identity is
\begin{align}
 H_r={}&Z_s+nR_{a,b}+\sqrt nA_n+N_{0,n}
       +\frac{na-\|w^n\|^2}{2(a+q)}
       +\frac{nb-\|X^n-v_M^n(Y^n)\|^2}{2b},\label{eq:hb-exact-identity}\\
 A_n={}&\frac{\|X^n\|^2-np}{2p\sqrt n},\qquad
 N_{0,n}=-\frac{\langle w^n,Z^n\rangle}{a+q}
       +\frac{a(\|Z^n\|^2-nq)}{2q(a+q)}.\nonumber
\end{align}
Both distortion deficits have positive coefficients on success. Lemma~\ref{lem:cell-reference} first localizes those deficits. Appendix~\ref{app:hb-image} then proves the two likelihood comparisons, their common sampled-description cost, and the conditional covariance \eqref{eq:hb-covariance}; Lemma~\ref{lem:matched-cost} gives the upper profile.

For the construction, a coarse source-shell cover fixes
$\|x^n-u^n\|^2=na+O(1)$ at budget
$n\log(p/a)/2+\sqrt nA_n(x^n)+O(\log n)$.
A conditional refinement shell allocates the remaining information and actual-distortion thresholds from $L-A_n(x^n)$. Its conditional Gaussian law has covariance \eqref{eq:hb-covariance}. The shell construction and finite-alphabet branch accounting are proved in Appendix~\ref{app:hb-image}. They attain $F_{a,b}(L)$ with logarithmic slack. For any specified $o(\sqrt n)$ perturbation of the budget, apply the construction at $L-\eta$, then let $\eta\downarrow0$.

The determinant is
$\beta(\rho_a-\beta)(1-\rho_a/2)>0$.
The compact-maximizer argument in Lemma~\ref{lem:matched-cost} proves continuity of $h_{a,b}$. Increasing its first threshold strictly increases the Gaussian probability, so $h_{a,b}$ and $F_{a,b}$ are strictly increasing. Their endpoint limits follow by the halfspace upper bound and the feasible choice $t=z/2$.
\end{proof}

The source variance in \eqref{eq:hb-profile} is $1/2$. It arises from the source density in \eqref{eq:hb-exact-identity}. Using a common unconditional output reference would discard the coarse reconstruction and give a different source score. This is why the reference measure is part of the image geometry rather than an arbitrary cosmetic choice.


\section{Private Images and Posterior-Mass Constraints}\label{sec:security}
For extraction, the relevant image is the set of key values generated by high-posterior-mass source points. Such a set can have substantial actual probability but too few points to have comparable probability under the ideal uniform key law. This gives the secrecy counterpart of image counting.

\subsection{Two-sided private-image spectra}
Let $(V,E)$ have any finite joint law. An extractor maps $V$ to $K$ values, deterministically or using an independent public seed $S$ included in the observer's information. Write
\[
 H_{V|E}(V,E)=-\log P_{V|E}(V|E),\quad
 \Delta=\TV(P_{KES},\operatorname{Unif}[K]\times P_{ES}).
\]
Conditioned on the public seed, the key is a deterministic function of the source; the randomness being extracted is that of $V$.

\begin{lemma}[Private-image spectrum]\label{lem:private-spectrum}
For every $\gamma>0$, every extractor satisfies
\begin{equation}
 \Delta\ge\Pr\{H_{V|E}<\log K-\gamma\}-e^{-\gamma}.
 \label{eq:pa-converse}
\end{equation}
A public two-universal hash family satisfies
\begin{equation}
 \Delta\le\Pr\{H_{V|E}<\log K+\gamma\}+\tfrac12e^{-\gamma/2}.
 \label{eq:pa-direct}
\end{equation}
In particular, a fixed deterministic hash attains the upper bound for the given source law.
\end{lemma}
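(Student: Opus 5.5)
For the converse \eqref{eq:pa-converse} I would use a direct image-counting test. Because the seed $S$ is public and independent of $(V,E)$, it suffices to argue conditionally on $S=s$ and then average, so take the key to be $K=\kappa(V)$ deterministic. Write $\mathcal L=\{(v,e):H_{V|E}(v,e)<\log K-\gamma\}$, the high-posterior-mass set. For each $e$ its section $\mathcal L(e)=\{v:P_{V|E}(v|e)>e^{\gamma}/K\}$ contains fewer than $Ke^{-\gamma}$ points. The test event is
\[
 \mathcal T=\{(k,e,s):k\in\kappa_s(\mathcal L(e))\}.
\]
Under the actual law this event has probability at least $\Pr\{(V,E)\in\mathcal L\}$. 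Under $\operatorname{Unif}[K]\times P_{ES}$ it has probability at most $|\mathcal L(e)|/K\le e^{-\gamma}$ for each $(e,s)$. Total variation is at least the difference of these two probabilities, which gives \eqref{eq:pa-converse}. This is the secrecy analogue of the counting argument in Lemma~\ref{comp:hi:lem:ancillary}.

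For the direct bound \eqref{eq:pa-direct} I would smooth the source and apply the leftover hash lemma. Let $\mathcal G=\{H_{V|E}\ge\log K+\gamma\}$ and define the subnormalized law $\tilde P_{VE}=P_{VE}\one_{\mathcal G}$. Replacing $P$ by $\tilde P$ changes the distance by at most $\frac12\|P-\tilde P\|_1$ in each of the two arguments of $\TV$. Here I must be careful that the total contribution is the mass $\Pr\{H_{V|E}<\log K+\gamma\}$ and not twice that. I would verify this by comparing both $P_{KES}$ and $\operatorname{Unif}\times P_{ES}$ with their $\tilde P$ counterparts in a single triangle inequality, using $\tilde P\le P$ pointwise.

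Next comes the collision estimate. For each $e$, $\sum_v\tilde P(v,e)^2\le \tilde P_E(e)\,P_E(e)\,e^{-\gamma}/K$. A two-universal family then gives the bound $\E_S\sum_{k}(\tilde P_{K|e,S}(k)-\tilde P_E(e)/K)^2\le\sum_v\tilde P(v,e)^2$. Cauchy--Schwarz over the $K$ key values, then over $e$ weighted by $P_E$, yields $\frac12\sqrt{K\cdot e^{-\gamma}/K}=\frac12e^{-\gamma/2}$.

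The final assertion follows because the bound \eqref{eq:pa-direct} holds on average over the seed. Hence some fixed seed, that is, one deterministic hash, does at least as well for the given law.

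I expect the main obstacle to be the smoothing bookkeeping with subnormalized measures. Getting exactly the constant $1$ in front of the probability and $\tfrac12$ in front of $e^{-\gamma/2}$ requires using the correct comparison measure $\tilde P_E\times\operatorname{Unif}$ rather than a normalized version.
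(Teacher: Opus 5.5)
Your proposal is correct and matches the paper's proof essentially step for step. The converse uses the same high-posterior-mass sections $\{v:P(v\mid e)>e^{\gamma}/K\}$ and the variational characterization of total variation, applied seed by seed. The direct part uses the same truncation to $P(v\mid e)\le e^{-\gamma}/K$, the two-universal collision bound, Cauchy--Schwarz with weights $P_E(e)$ (total weight $K$), and restoration of the removed mass $\beta$ at cost $\tfrac12\beta+\tfrac12\beta$. Only cosmetic points remain: your $\tilde P_{K|e,S}(k)$ should denote the joint subnormalized mass at $(k,e)$, and Jensen's inequality is needed to move $\E_S$ inside the square root.
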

\begin{proof}
For fixed $e$, let $A_e=\{v:P(v|e)>e^\gamma/K\}$. Its size is less than $Ke^{-\gamma}$. Its image under any extractor therefore has at most that many key values. The actual probability of the event that the key belongs to this image is at least $P\{V\in A_E\}$, while its ideal probability is at most $e^{-\gamma}$. The variational characterization of total variation proves \eqref{eq:pa-converse}. Apply the same argument at each independent seed value to include public seeding.

For achievability truncate the original source to the subprobability measure
\[
 r(v,e)=P(v,e)\one\{P(v|e)\le e^{-\gamma}/K\},
 \qquad \beta=1-\sum_{v,e}r(v,e).
\]
For a two-universal random hash $H$, the collision expansion gives
\[
 \E_H\sum_{k,e}\frac{(r_H(k,e)-r_E(e)/K)^2}{P_E(e)}
 \le\sum_{v,e}\frac{r(v,e)^2}{P_E(e)}\le\frac{e^{-\gamma}}K.
\]
Cauchy--Schwarz yields expected variation at most $\frac12e^{-\gamma/2}$ between the hashed subprobability and its own uniform-key product. Restoring the missing mass contributes at most $\beta$ in total variation, since each restored measure has mass $\beta$ and total variation includes the factor $1/2$. This proves \eqref{eq:pa-direct}. Including the public seed averages over its values; alternatively fix a hash whose variation is no larger than that average.
\end{proof}

Truncation is performed on the full joint source law. Consequently, when extraction is composed with a decoder, the bound includes both agreement and disagreement outcomes.

\subsection{Public messages and private-image composition}
A public message changes the posterior mass available to extraction. Its effect can be retained without assuming that its bins have equal probability.

\begin{lemma}[Extraction after a fixed public message]\label{lem:public-composition}
Let $F=f(V)$ take at most $M$ values. An independent two-universal hash to $K$ values, or a suitable fixed member of the family, satisfies
\begin{equation}
 \TV(P_{KFE},\operatorname{Unif}[K]\times P_{FE})
 \le\Pr\{H_{V|E}<\log M+\log K+\gamma\}+\tfrac12 e^{-\gamma/2}.
 \label{eq:public-pa}
\end{equation}
The public-seed version includes the seed on both sides. The bound holds for every deterministic $f$, with no uniformity assumption on $F$.
\end{lemma}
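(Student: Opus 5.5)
The plan is to reduce this to the direct part \eqref{eq:pa-direct} of Lemma~\ref{lem:private-spectrum}, with the observer's information enlarged from $E$ to $(F,E)$. That lemma applies to any finite joint law, so applying it to $(V,(F,E))$ gives
\[
 \TV(P_{KFE},\operatorname{Unif}[K]\times P_{FE})
 \le\Pr\{H_{V|FE}<\log K+\gamma\}+\tfrac12e^{-\gamma/2},
\]
where $H_{V|FE}=-\log P_{V|FE}(V|F,E)$. The public seed, or a fixed good member of the family, is handled exactly as in that lemma. So the remaining task is to compare the spectrum of $H_{V|FE}$ with that of $H_{V|E}$, with loss $\log M$.

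The naive pointwise bound $P(v|f,e)=P(v|e)/P(f|e)$ on $\{f(v)=f\}$ gives $H_{V|FE}=H_{V|E}+\log P(F|E)$. This is unbounded below when some bins have tiny conditional probability, so it does not yield the claim directly. Avoiding this is exactly why no uniformity of $F$ should be assumed.

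Instead I would redo the truncation step of Lemma~\ref{lem:private-spectrum} with respect to $E$ only. Define
\[
 r(v,e)=P(v,e)\one\{P(v|e)\le e^{-\gamma}/(MK)\},
\]
and keep $F=f(V)$ as part of the hashed subprobability. The collision expansion, now with denominator $P_{FE}(f,e)$ in the $\chi^2$-type quantity, gives
\[
 \sum_{f,e}\frac{1}{K\,P_{FE}(f,e)}\sum_{v:f(v)=f}r(v,e)^2 .
\]
The main obstacle is bounding this sum. Write each summand as $r(v,e)\cdot r(v,e)/P_{FE}(f,e)$, and note that $r(v,e)\le P(v,e)\le P_{FE}(f(v),e)$ is not enough on its own. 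Instead, use $r(v,e)\le P_E(e)e^{-\gamma}/(MK)$ together with $\sum_{v:f(v)=f}r(v,e)\le P_{FE}(f,e)$. This bounds the bin-$f$ contribution by $P_E(e)e^{-\gamma}/(MK^2)\cdot K$ per bin. Summing over at most $M$ bins and over $e$ gives total at most $e^{-\gamma}/K$, the same as in Lemma~\ref{lem:private-spectrum}.

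Cauchy--Schwarz then gives the $\tfrac12e^{-\gamma/2}$ term. Restoring the truncated mass costs $\Pr\{H_{V|E}<\log M+\log K+\gamma\}$, which gives \eqref{eq:public-pa}. Throughout I would check that the uniform comparison measure is $r_{FE}/K$, so that mass restoration is accounted for exactly once.
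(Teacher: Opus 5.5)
Your argument is correct and is essentially the paper's proof. You truncate at $P(v|e)\le e^{-\gamma}/(MK)$ relative to $E$ alone, run the two-universal collision bound inside each fixed public bin, apply Cauchy--Schwarz, and restore the omitted mass $\Pr\{H_{V|E}<\log M+\log K+\gamma\}$; the paper does the same.

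The only difference is the Cauchy--Schwarz weight. The paper uses $P_E(e)/M$ for every $(k,f,e)$, giving $M\sum_{v,e}r(v,e)^2/P_E(e)\le e^{-\gamma}/K$ without dividing by a bin mass, whereas your weight $P_{FE}(f,e)$ is cancelled through $r_{FE}(f,e)\le P_{FE}(f,e)$; both weight families total $K$. With your weight the expected collision sum is $\sum_{f,e}P_{FE}(f,e)^{-1}\sum_{v:f(v)=f}r(v,e)^2$, with no factor $1/K$. The stray $1/K$ in your display and the compensating ``$\cdot K$'' in your per-bin estimate therefore simply cancel, and your total of $e^{-\gamma}/K$ is correct.
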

\begin{proof}
Truncate at $P(v|e)\le e^{-\gamma}/(MK)$ and denote the retained measure by $r$. In each fixed public bin the collision calculation is unchanged. Using weight $P_E(e)/M$ gives
\[
 \E_H\sum_{k,f,e}
 \frac{(r_H(k,f,e)-r_{FE}(f,e)/K)^2}{P_E(e)/M}
 \le M\sum_{v,e}\frac{r(v,e)^2}{P_E(e)}\le\frac{e^{-\gamma}}K.
\]
The sum of these weights over $(k,f,e)$ is $K$. Cauchy--Schwarz and restoration of the omitted mass prove \eqref{eq:public-pa}. Averaging or fixing the hash proceeds as in Lemma~\ref{lem:private-spectrum}.
\end{proof}
The lemma quantifies the effect of the public partition on the extraction budget. Applying it to a reconciliation message gives the public-rate/key-rate construction in Corollary~\ref{cor:reconciliation}.

\subsection{Public refinements measured by their residual information}\label{sec:public-refinement}
An alphabet-size charge is useful without further information about a public message. When the observer already predicts most of that message, the relevant private-image loss is its conditional surprisal. Let $J=f(V)$ be countably valued and put
\[
 H_{V|E}=-\log P(V\mid E),\qquad T_{J|E}=-\log P(J\mid E).
\]
Since $J$ is a deterministic source function, Bayes' rule gives
\begin{equation}
 -\log P(V\mid E,J)=H_{V|E}-T_{J|E}.
 \label{eq:public-spectrum-identity}
\end{equation}
The identity uses the actual joint law, including the public message on every outcome. Write $\Delta(K)$ for total variation from a uniform key independent of $(E,J)$ and, if used, the independent public hash seed; the seed is included on both sides of the comparison.
\begin{proposition}[Private-image bounds after a public refinement]\label{prop:public-refinement}
For finite $V$, every independent-public-seed extractor to $K$ values obeys
\begin{equation}
 \Delta(K)\ge\Pr\{H_{V|E}<\log K-\gamma\}-e^{-\gamma}.
 \label{eq:public-spectrum-converse}
\end{equation}
A two-universal hash, or a suitable fixed member, attains
\begin{align}
 \Delta(K)
 &\le\Pr\{H_{V|E}-T_{J|E}<\log K+\gamma\}
                    +\tfrac12e^{-\gamma/2}\label{eq:public-spectrum-direct}\\
 &\le\Pr\{H_{V|E}<\log K+\gamma+t\}
       +\Pr\{T_{J|E}>t\}+\tfrac12e^{-\gamma/2}.
 \label{eq:public-spectrum-split}
\end{align}
The second probability in \eqref{eq:public-spectrum-split} is at most $H(J\mid E)/t$.
\end{proposition}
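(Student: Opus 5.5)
The converse \eqref{eq:public-spectrum-converse} is Lemma~\ref{lem:private-spectrum} with the public refinement added to the observer side. The plan is to apply that lemma with observer variable $(E,J)$ and then coarsen the resulting event. Fix a seed value $s$ and a value $e$ of $E$. Let
\[
A_e=\{v:P(v\mid e)>e^{\gamma}/K\},
\]
so that $|A_e|<Ke^{-\gamma}$ and hence $|f_s(A_e)|<Ke^{-\gamma}$. Use the test event $\{K_{\rm key}\in f_S(A_E)\}$, which is measurable with respect to $(K_{\rm key},E,S)$ and therefore also with respect to $(K_{\rm key},E,J,S)$. Its actual probability is at least $\Pr\{V\in A_E\}=\Pr\{H_{V|E}<\log K-\gamma\}$. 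Its probability under $\operatorname{Unif}[K]\times P_{EJS}$ is at most $e^{-\gamma}$. The variational characterization of total variation then gives the bound. Including $J$ in the observer information can only enlarge the distance, so no use of $J$ beyond measurability is needed.

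For the direct bound \eqref{eq:public-spectrum-direct}, I would apply Lemma~\ref{lem:private-spectrum} with $E$ replaced by $E'=(E,J)$. Its achievability part holds for an arbitrary finite joint law of $(V,E')$. The countable alphabet of $J$ causes no difficulty, since only finitely many $j$ have $P(j)>0$ when $V$ is finite. That lemma bounds $\Delta(K)$ by $\Pr\{H_{V|E'}<\log K+\gamma\}+\frac12e^{-\gamma/2}$. By the Bayes identity \eqref{eq:public-spectrum-identity}, $H_{V|E'}=H_{V|E}-T_{J|E}$ on every outcome. The seed is included on both sides as in Lemma~\ref{lem:private-spectrum}, and a fixed member of the hash family is chosen by averaging.

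The split \eqref{eq:public-spectrum-split} follows from the set inclusion
\[
\{H_{V|E}-T_{J|E}<\log K+\gamma\}\subseteq\{H_{V|E}<\log K+\gamma+t\}\cup\{T_{J|E}>t\}
\]
and a union bound. For the final claim, note that $T_{J|E}\ge0$ and $\E T_{J|E}=H(J\mid E)$, so Markov's inequality gives $\Pr\{T_{J|E}>t\}\le H(J\mid E)/t$.

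The only point needing care is the converse, because the statement's $\Delta(K)$ measures distance from a key independent of $(E,J)$, while the spectrum uses only $H_{V|E}$. This is why the test event must be built from $A_E$ and not from $A_{(E,J)}$. Coarsening in this way is legitimate because total variation is monotone under marginalization.
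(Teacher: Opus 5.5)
Your proof is correct. The direct bound, the split, and the Markov estimate match the paper's proof: apply Lemma~\ref{lem:private-spectrum} with observer $(E,J)$, substitute \eqref{eq:public-spectrum-identity}, split on $\{T_{J|E}\le t\}$, and use $T_{J|E}\ge0$ with $\E T_{J|E}=H(J\mid E)$.

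For the converse you take a slightly different route. You rerun the counting argument with the coarse sets $A_E$, defined from $P(v\mid E)$. You then note that a test event not depending on $J$ is still admissible in the variational formula for $\Delta(K)$, which is monotonicity of total variation under marginalization. The paper instead applies the lemma's converse directly with observer $(E,J)$, which gives $\Delta(K)\ge\Pr\{H_{V|E}-T_{J|E}<\log K-\gamma\}-e^{-\gamma}$. It then uses $T_{J|E}\ge0$ to pass to the smaller event $\{H_{V|E}<\log K-\gamma\}$.

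Your closing remark that the test \emph{must} be built from $A_E$ rather than $A_{(E,J)}$ is therefore not right. Since $P(V\mid E,J)=P(V\mid E)/P(J\mid E)\ge P(V\mid E)$, the sets $A_{(E,J)}=\{v:P(v\mid E,J)>e^{\gamma}/K\}$ still have fewer than $Ke^{-\gamma}$ elements. They also satisfy $\{V\in A_E\}\subseteq\{V\in A_{(E,J)}\}$, so the finer test gives a bound at least as strong. Your opening sentence also announces the $(E,J)$ route, but the argument you write is the $E$-only one.

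Both versions are valid. The paper's is a one-line reduction to the lemma and yields the sharper intermediate inequality. Yours avoids the Bayes identity in the converse at the cost of re-deriving the counting step.
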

\begin{proof}
Apply Lemma~\ref{lem:private-spectrum} with observer $(E,J)$ and substitute \eqref{eq:public-spectrum-identity}. For the lower bound, $T_{J|E}\ge0$, so the event $H_{V|E}<\log K-\gamma$ is contained in its conditional-posterior counterpart. For the upper bound, split according to $T_{J|E}\le t$. Markov's inequality yields the final estimate. The seed is independent and public throughout.
\end{proof}
In particular, a public alphabet can be exponential while $T_{J|E}=o_P(\sqrt n)$. Conversely, a label cheaply recoverable by a legitimate decoder need not be predictable by the observer. These are different conditional spectra.


\section{Tight Second-Order Rate Applications}\label{sec:tight}
We first apply the reconstruction-constrained image theorem to Gaussian coding. We then use partition stability for a structured finite-alphabet source, the joint posterior bound for compatible helper constraints, and posterior-mass bounds for extraction. In every case, the construction respects the same observations as the converse.


\subsection{Gaussian Heegard--Berger coding with one uninformed decoder}\label{sec:hb-gaussian-rate}
An encoder observing $X^n\sim\cN(0,pI_n)$ sends one common message. The first decoder reconstructs $U^n(M)$; the second observes $Y^n=X^n+Z^n$, $Z^n\sim\cN(0,qI_n)$, and reconstructs $V^n(M,Y^n)$. The joint excess-distortion probability is the probability that either constraint in \eqref{eq:two-level-success} fails. This is the Heegard--Berger architecture in which side information may be absent~\cite{heegard,nit}; the encoder does not observe $Y^n$.

\begin{corollary}[Both-active Gaussian Heegard--Berger rate]\label{cor:hb-gaussian-rate}
For the fixed interior regime \eqref{eq:hb-regime} and $0<\eps<1$,
\begin{equation}
 \log M^*_{\rm HB}(n,a,b,\eps)
 =nR_{a,b}+\sqrt nF_{a,b}^{-1}(1-\eps)+o(\sqrt n).
 \label{eq:hb-rate-exact}
\end{equation}
\end{corollary}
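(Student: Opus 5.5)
The corollary should follow from Theorem~\ref{thm:gaussian-distortion-image} once we check that the operational code and the labelled distortion image are the same object. A Heegard--Berger code with $M$ messages is a deterministic encoder $f$ partitioning $\mathbb R^n$ into at most $M$ cells, together with a reconstruction $U^n(m)$ and a measurable policy $V^n(m,\cdot)$. This is precisely a labelled distortion image with centers $u_m^n=U^n(m)$ and policies $v_m^n=V^n(m,\cdot)$. Its joint success probability equals the mass \eqref{eq:two-level-success}. Conversely, any labelled image is realized by letting the encoder, which observes $X^n$, send the cell label, since both reconstructions use only what each decoder observes. Hence the optimal success probability with $M$ messages is exactly $\mathcal I_n(M;a,b)$. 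Nothing is lost by restricting to deterministic encoders: under any shared randomness, the success probability is an average over deterministic codes, so the deterministic supremum is not exceeded. Consequently $M^*_{\rm HB}(n,a,b,\eps)=\min\{M:\mathcal I_n(M;a,b)\ge1-\eps\}$.

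Next I would invert the profile. By Theorem~\ref{thm:gaussian-distortion-image}, $F_{a,b}$ is continuous and strictly increasing from $0$ to $1$, so $L^*=F_{a,b}^{-1}(1-\eps)$ is well defined. Fix $\eta>0$ and argue in two directions.

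For achievability, take $\log M_n=nR_{a,b}+\sqrt n(L^*+\eta)$. The theorem gives $\mathcal I_n(M_n)\to F_{a,b}(L^*+\eta)>1-\eps$, so for large $n$ we have $M^*\le M_n$. The integer rounding of $M_n$ is an $o(\sqrt n)$ perturbation, which the theorem allows.

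For the converse, suppose along a subsequence that $\log M^*\le nR_{a,b}+\sqrt n(L^*-\eta)$. Monotonicity of $\mathcal I_n$ in $M$ and the theorem applied at $L^*-\eta$ give $\limsup\mathcal I_n(M^*)\le F_{a,b}(L^*-\eta)<1-\eps$. This contradicts the fact that $M^*$ is feasible.

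Letting $\eta\downarrow0$ yields \eqref{eq:hb-rate-exact}.

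The delicate point is the uniformity needed in the converse. The theorem is stated for budgets of the form $nR+\sqrt nL+o(\sqrt n)$ with $L$ fixed. Monotonicity in $M$ removes this concern: compare $M^*$ with the fixed sequence at $L^*-\eta$ rather than applying the theorem to $M^*$ directly. I expect no further obstacle, because the substantive work, namely the matched-cost converse and the shell construction, is already contained in Theorem~\ref{thm:gaussian-distortion-image}.
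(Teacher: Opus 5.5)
Your proposal is correct and follows the paper's own argument. You identify Heegard--Berger codes with labelled distortion images, then invert the continuous strictly increasing profile $F_{a,b}$ at offsets $L^*\pm\eta$; your monotonicity comparison in the converse is the right way to avoid applying Theorem~\ref{thm:gaussian-distortion-image} to $M^*$ directly.

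The only imprecision is the identity $M^*_{\rm HB}=\min\{M:\mathcal I_n(M;a,b)\ge1-\eps\}$, which presumes the supremum is attained. Your two-sided argument never needs it: achievability uses only the strict inequality $\mathcal I_n(M_n;a,b)>1-\eps$, and the converse uses only $\mathcal I_n(M^*_{\rm HB};a,b)\ge1-\eps$.
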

\begin{proof}
Every code gives a labelled image as in \eqref{eq:two-level-success}, and every such image specifies an encoder and both decoders. Theorem~\ref{thm:gaussian-distortion-image} therefore bounds the same operational success probability in both directions. Its continuous strictly increasing profile has a unique inverse. Offsets strictly above that inverse achieve a strict error margin; offsets below it violate the converse. A diagonal favorable adjustment proves the fixed-$\eps$ expansion.
\end{proof}

The refinement boundary is $b<s_a$, not merely $b<a$. The side information alone can refine a Gaussian coarse residual of variance $a$ to $s_a=aq/(a+q)$. At $b=s_a$ the refinement layer disappears and the positive-residual-variance argument changes. All results in this subsection use fixed strict inequalities.

For $p=4$, $q=1$, $a=1$, and $b=1/4$, one has $R_{a,b}=\frac12\log8$ and
\[
 \Sigma_{a,b}=\begin{pmatrix}7/32&-1/32\\-1/32&7/32\end{pmatrix}.
\]
Equal diagonal entries make $J+B$ and $J-B$ independent. Conditionally on their sum, the rectangle of largest probability is centered at $J-B=0$, so $t=z/2$. With independent standard Gaussians $G,H$,
\begin{equation}
 F_{a,b}(L)=\Pr\left\{\sqrt{7/8}\,G+\frac{|H|}{\sqrt2}\le L\right\}.
 \label{eq:hb-folded-example}
\end{equation}
At $\eps=0.1$ its inverse is approximately $1.88793694$ nats. The source and noise sum has variance $7/8$, but its scalar halfspace quantile is not the inverse of \eqref{eq:hb-folded-example}. The absolute-value term records the restriction that the allocation is chosen before the decoder noise.

\subsection{Gaussian common reconstruction}\label{sec:cr-gaussian}
A common-reconstruction requirement asks the decoder to reproduce the encoder-selected reconstruction, rather than to choose an arbitrary reproduction using its side information~\cite{steinbergcr,ahmadi}. We use one fixed joint error allowance: disagreement or violation of any required distortion constitutes error. \edit{Thus agreement and distortion share the prescribed block-error allowance; a separately vanishing disagreement requirement defines a different criterion.}

For one informed decoder, let $U^n=h(X^n)$ be selected by the encoder and let $\widehat U^n=g(M,Y^n)$. Success requires $U^n=\widehat U^n$ and $\|X^n-U^n\|^2\le nD$.
For the two-decoder variant, the weak decoder reconstructs $U_0^n(M)$ without side information. The encoder selects a fine reconstruction $U_1^n=h(X^n)$ and the informed decoder must agree with it. Success requires both squared-error bounds $na,nb$, as well as fine agreement.

\begin{theorem}[Gaussian agreed-reconstruction rates]\label{thm:cr-rates}
Let $p,q>0$ and $0<\eps<1$.
For the single-reconstruction problem with $0<D<p$, put
\begin{align}
 R_{\rm CR}(D)&=\frac12\log\frac{p(D+q)}{D(p+q)},\nonumber\\
 c&=\frac q{p+q},\quad d_q=\frac q{D+q},\quad
 V_{\rm CR}=\frac{c^2}{2}+(d_q-c)-\frac{(d_q-c)^2}{2}.
 \label{eq:cr-rate-parameters}
\end{align}
Then
\begin{equation}
 \log M^*_{\rm CR}(n,D,\eps)
 =nR_{\rm CR}(D)+\sqrt{nV_{\rm CR}}\,Q^{-1}(\eps)+o(\sqrt n).
 \label{eq:cr-rate}
\end{equation}
For the two-decoder problem with $0<b<a<p$, put
\begin{equation}
 R_{\rm HB,CR}=\frac12\log\frac{p(b+q)}{b(a+q)},\quad
 \delta=\frac q{b+q}-\frac q{a+q},\quad
 V_{\rm HB,CR}=\frac12+\delta-\frac{\delta^2}{2}.
 \label{eq:hbcr-parameters}
\end{equation}
Its minimum message size satisfies
\begin{equation}
 \log M^*_{\rm HB,CR}(n,a,b,\eps)
 =nR_{\rm HB,CR}+\sqrt{nV_{\rm HB,CR}}\,Q^{-1}(\eps)+o(\sqrt n).
 \label{eq:hbcr-rate}
\end{equation}
\end{theorem}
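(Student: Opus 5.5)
We prove both expansions with the same scheme: a converse from the agreed-reproduction image of Lemma~\ref{lem:agreed-image}, and a matching shell construction. The key point, in contrast to Theorem~\ref{thm:gaussian-distortion-image}, is that agreement makes the decoder's reconstruction a function of the source. The two remaining fluctuations (source and decoder noise) then enter one scalar likelihood additively. No allocation before the decoder noise is needed, so the profile is a scalar halfspace $\Phi(L/\sqrt V)$ rather than the envelope $F_{a,b}$.

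\emph{Converse, single decoder.} In \eqref{eq:cr-reference-score} take the backward reference $q_{X\mid u}=\phi_{DI_n}(\cdot-u^n)$. Its output is $q_{Y\mid u}=\phi_{(D+q)I_n}(\cdot-u^n)$, and $p_X=\phi_{pI_n}$, $p_Y=\phi_{(p+q)I_n}$. Writing $w^n=X^n-U^n$, $g_q$ is an explicit quadratic form in $(X^n,w^n,Z^n)$. Expanding $\|Y^n-U^n\|^2=\|w^n\|^2+2\langle w^n,Z^n\rangle+\|Z^n\|^2$ gives
\[
 g_q=nR_{\rm CR}(D)+\tfrac{nD-\|w^n\|^2}{2D}\bigl(1-\tfrac{D}{D+q}\bigr)+\text{[centered terms]}.
\]
On success the distortion deficit is nonnegative with a positive coefficient. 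The centered terms are $c(\|X^n\|^2-np)/(2p)$-type source fluctuations plus $d_q$-weighted noise terms $\langle w^n,Z^n\rangle/(D+q)$ and $(\|Z^n\|^2-nq)/(2(D+q))$. A quadratic in $Y^n$ couples $X^n$ and $Z^n$ through the coefficient $c=q/(p+q)$. We will check, with $\|w^n\|^2\approx nD$ on the successful localization, that their variances combine into $V_{\rm CR}$ as in \eqref{eq:cr-rate-parameters}. The remaining cross term $\langle X^n,Z^n\rangle$ has conditional variance proportional to $\|X^n\|^2\approx np$.

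The inner product $\langle w^n,Z^n\rangle$ is controlled by conditioning on $X^n$: $U^n=h(X^n)$ is then fixed, and $Z^n$ is an independent isotropic Gaussian. A large negative deficit $\|w^n\|^2<nD-\Omega(\sqrt n)$ only helps the bound. Since the deficit has a positive sign, it can simply be dropped in the upper bound, giving $g_q\le\log M_0+\gamma$ with $g_q\ge nR_{\rm CR}+\sqrt n\,G_n-o_P(\sqrt n)$. Here $G_n$ is asymptotically $\mathcal N(0,V_{\rm CR})$ uniformly over reconstruction maps, because conditionally on $X^n$ it is a Gaussian quadratic form with variance depending on $w^n$ only through $\|w^n\|^2/n$. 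Where $\|w^n\|^2/n$ is not near $D$ the deficit term dominates. \textbf{This is the main obstacle:} making the dropped deficit and the $w$-dependence of the variance compatible. I would first localize $\|w^n\|^2/n\in[D-\eta,D]$ by splitting off the complementary successful event, on which the positive deficit of order $\eta n$ makes $g_q$ exceed the budget. Then $\gamma=\log n$ and $\eta\downarrow0$ yield $\Pr(E)\le\Phi(L/\sqrt{V_{\rm CR}})+o(1)$.

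\emph{Two decoders, converse and achievability.} For the two-decoder converse use the nested score \eqref{eq:nested-agreement-score} with $v=U_1^n$, $q_{X\mid v}=\phi_{bI_n}(\cdot-v)$ and $r_{u_0}=\phi_{(a+q)I_n}(\cdot-u_0^n)$. Both distortion deficits appear with positive coefficients, exactly as in \eqref{eq:hb-exact-identity}. The source term now has coefficient $1$ (variance $1/2$), and the noise terms carry $\delta=q/(b+q)-q/(a+q)$, giving $V_{\rm HB,CR}$. The Appendix~\ref{app:cr-details} projection header fixes the cross moments $\langle x^n,u_0^n\rangle$ and $\langle u_0^n,u_1^n\rangle$ at $O(\log n)$ cost.

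For achievability in both problems, use a source-shell cover. The encoder selects $U^n$ (respectively $U_0^n,U_1^n$) on a sphere with the prescribed distortion up to $O(1)$ and random binning of the fine layer. The decoder picks the unique bin member with reference likelihood score below threshold. The error event is then $\{g_q>\log M_n-O(\log n)\}$ plus a $e^{-\gamma}$ collision term. By the same conditional CLT this has probability $Q(L/\sqrt V)+o(1)$, with uniformity handled via Proposition~\ref{prop:spectral-stability}. Inverting the continuous strictly increasing $\Phi(L/\sqrt V)$ gives \eqref{eq:cr-rate} and \eqref{eq:hbcr-rate}.
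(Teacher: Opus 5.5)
Your architecture matches the paper's: an agreed-image test with positive deficit multipliers, localization of the residual energy, and conditional Gaussian evaluation giving a scalar halfspace profile. Two steps fail as written, however.

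\textbf{Single-decoder converse.} The gap is your claim that, given $X^n$, the centered score has variance depending on $w^n$ only through $\|w^n\|^2/n$. The channel part is $\langle w^n/(D+q)-X^n/(p+q),Z^n\rangle+\frac12(\frac1{D+q}-\frac1{p+q})(\|Z^n\|^2-nq)$. Its conditional variance contains the term $-2q\langle X^n,w^n\rangle/[(D+q)(p+q)]$, which is the covariance of your $\langle w^n,Z^n\rangle$ and $\langle X^n,Z^n\rangle$ pieces. Moreover $\langle X^n,w^n\rangle=\|w^n\|^2+\langle U^n,w^n\rangle$.

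The value $(d_q-c)-(d_q-c)^2/2$ therefore requires $\langle U^n,w^n\rangle=o(n)$. Neither success nor your localization enforces this, since the conditional mean of $g_q$ does not involve $\langle U^n,w^n\rangle$. Because $D<p$, reproductions with $\langle U^n,w^n\rangle\ge\kappa n$, $\kappa>0$, are compatible with $\|w^n\|^2\le nD$. They give a strictly smaller variance and hence, for $L>0$, a strictly weaker bound than $\Phi(L/\sqrt{V_{\rm CR}})$.

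The missing step is the encoder-observable repair of Appendix~\ref{app:cr-details}:
\begin{itemize}
\item Send a quantized least-squares scale $\alpha\approx\langle x^n,u^n\rangle/\|u^n\|^2$, using grid spacing $n^{-2}$, a grid containing $1$, and $O(\log n)$ nats.
\item The decoder applies this scale to its agreed copy.
\item Agreement is preserved, distortion cannot increase, and the repaired residual $\widetilde w^n$ satisfies $\langle x^n,\widetilde w^n\rangle=\|\widetilde w^n\|^2+O(n^{-1})$.
\item The converse is then applied to the repaired code.
\end{itemize}
In the two-decoder case you do cite the projection header, but the quantity the noise variance depends on is $\langle w_0^n,w_1^n\rangle$, which must be localized at $nb$. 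The nested projection onto $\operatorname{span}(u_0^n)\subseteq\operatorname{span}(u_0^n,u_1^n)$ supplies this. It is implementable because the strong decoder knows $u_0^n$ from the message and $u_1^n$ on agreement.

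\textbf{Achievability.} Your construction does not yield the stated error event. A fixed covering codebook of size $N$, uniform binning into $M$ bins, and a fixed decoding threshold give the union of two failures, each at its own threshold:
\begin{itemize}
\item a covering failure, $\frac n2\log\frac pD+\frac{\|X^n\|^2-np}{2p}\gtrsim\log N$;
\item a decoding failure, $\log[q_{Y\mid U}(Y^n\mid U^n)/p_Y(Y^n)]\lesssim\log(N/M)$.
\end{itemize}
For every choice of $N$, this union is asymptotically larger than the probability of $\{g_q>\log M_n-O(\log n)\}$. The reason is that the source fluctuation of the covering cost is no longer added to the channel fluctuation in one score.

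The paper obtains the combined score from conditional Poisson matching, with the shell-restricted backward kernel and decoder reference $Q_{U\mid Y}$; its bound involves $g$ minus the log shell probability. In the nested case it also uses coarse shells with branches on $A_n$, so the fine layer receives the source-dependent budget $\sqrt n(L-A_n)$, as in Appendix~\ref{app:hb-image}. You need to supply some such source-adaptive mechanism, for example a branch header or an index-dependent decoding threshold.
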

\begin{proof}[Image argument]
For \eqref{eq:cr-rate}, apply Lemma~\ref{lem:agreed-image} with
$q_{X\mid u}=\cN(u,DI_n)$. An encoder-observable projection adjustment costs $O(\log n)$ and enforces
$\langle X^n,X^n-U^n\rangle=\|X^n-U^n\|^2+o(n)$ on the retained source set. The positive distortion multiplier localizes the residual norm at $nD$. Conditional Gaussian evaluation then gives variance \eqref{eq:cr-rate-parameters}. A backward Gaussian shell fixes actual distortion and attains the same score.

For \eqref{eq:hbcr-rate}, choose $r_{u_0}=\cN(u_0,(a+q)I_n)$ and $q_{X\mid u_1}=\cN(u_1,bI_n)$ in \eqref{eq:nested-agreement-score}. Projections onto the coarse reproduction span and the nested coarse--fine span are implementable because the strong decoder knows the coarse reproduction from the common message. Their quantized amplitudes cost $O(\log n)$. The image bound localizes the two residual energies and their inner product at $(a,b,b)$ per coordinate. Its source fluctuation has variance $1/2$ and its conditional channel fluctuation has variance $\delta-\delta^2/2$. Nested shells attain that scalar spectrum. Appendix~\ref{app:cr-details} provides the projection, likelihood, and construction details for both claims.
\end{proof}

At $(p,q,a,b)=(4,1,1,1/4)$, the common-reconstruction center and variance are $\frac12\log10$ and $0.755$. Its coefficient at $\eps=0.1$ is approximately $1.11354959$ nats. Comparison with \eqref{eq:hb-folded-example} uses different first-order centers. The smaller common-reconstruction coefficient does not imply a smaller total rate.

\subsection{Successive refinement with side information absent at the first decoder}\label{sec:sr-image}
Let the weak decoder receive $M_1$ and the informed decoder receive $(M_1,M_2,Y^n)$. At the both-active corner in \eqref{eq:hb-regime}, set
\[
 R_1=\frac12\log\frac pa,\qquad
 R_2=\frac12\log\frac{s_a}{b}.
\]
The criterion is the probability that either actual distortion constraint fails. This is a specialization of successive refinement with degraded decoder side information~\cite{steinbergmerhav}.

\begin{corollary}[Joint two-stage image region]\label{cor:sr-image}
At this corner the exact closed second-order region is
\begin{equation}
 \E\left[\one\{G_0\le L_1\}\,
 h_{a,b}(L_1+L_2-G_0)\right]\ge1-\eps,
 \qquad G_0\sim\cN(0,1/2).
 \label{eq:sr-image-region}
\end{equation}
\end{corollary}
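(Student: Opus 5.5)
The plan is to reuse the architecture of Theorem~\ref{thm:gaussian-distortion-image}, adding a separate budget constraint for the first layer. Both the converse and the construction then split the source score $A_n$ and the conditional pair $(J_n,B_n)$ between the two messages. The guiding observation is that $M_1$ alone identifies the coarse cell. The coarse budget therefore constrains only the source fluctuation $A_n$, since the weak decoder has no side information and no channel noise enters. The total budget $\log M_1+\log M_2=n(R_1+R_2)+\sqrt n(L_1+L_2)$ equals $nR_{a,b}+\sqrt n(L_1+L_2)$, because $R_1+R_2=R_{a,b}$. So the total budget faces exactly the two-level image of Theorem~\ref{thm:gaussian-distortion-image}.

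\emph{Converse.} First apply the coarse image alone. The message $M_1$ labels at most $e^{nR_1+\sqrt nL_1}$ cells whose centers $u_{m_1}^n$ satisfy $\|X^n-u^n\|^2\le na$. Use the reference $\phi_{aI_n}(x^n-u_{m_1}^n)$ against $p_X$ in Lemma~\ref{lem:cell-reference} (the case without output), or equivalently Lemma~\ref{v5:entropy-spectrum}. The log-likelihood is $nR_1+\sqrt nA_n$ plus a nonnegative multiple of the distortion deficit. This gives, on success and outside probability $e^{-\gamma}$, the bound $A_n\le L_1+o_P(1)$; here $A_n$ is the same source-measurable score as in \eqref{eq:hb-exact-identity}. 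Next, for the pair $(M_1,M_2)$, rerun the converse of Theorem~\ref{thm:gaussian-distortion-image} with label $M=(M_1,M_2)$ and budget $M_0=M_1M_2$. Lemma~\ref{lem:matched-cost} then gives $J_n\le L_1+L_2-A_n-T_n$ and $B_n\le T_n$ with an $\mathcal F_n$-measurable $T_n$. Both constraints sit on one success event, and $A_n$ is $\mathcal F_n$-measurable. So conditioning on $\mathcal F_n$ yields the integrand $\one\{A_n\le L_1+\eta\}\,h_{a,b}(L_1+L_2-A_n+2\eta)$. Letting $A_n\Rightarrow G_0\sim\cN(0,1/2)$ and $\eta\downarrow0$ gives the necessity of \eqref{eq:sr-image-region}. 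The indicator is continuous almost surely under the limit.

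\emph{Achievability.} Use the coarse source-shell cover of the construction as the first message, now restricted to sources with $A_n(x^n)\le L_1-\eta$. That restriction guarantees budget $nR_1+\sqrt nL_1$ up to $O(\log n)$. Sources violating it are declared errors. Send the conditional refinement shell in $M_2$, with thresholds allocated from $L_1+L_2-\eta-A_n(x^n)$ as in Appendix~\ref{app:hb-image}. The refinement uses only $\log M_2$ beyond the coarse layer, because the coarse part actually spent $\sqrt nA_n$ rather than $\sqrt nL_1$. The conditional success probability given $x^n$ is then $h_{a,b}(L_1+L_2-\eta-A_n)+o(1)$. Averaging gives the left side of \eqref{eq:sr-image-region} at shifted offsets. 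Closedness of the region then follows from continuity of $h_{a,b}$ and of the Gaussian law of $G_0$.

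\emph{Main obstacle.} The delicate point is that the refinement's residual budget depends on the realized $A_n$, while $M_2$ has a fixed size. I would handle this by padding. The fixed $\log M_2$ suffices whenever $A_n\le L_1$, since the slack $L_1-A_n$ can simply be left unused: the allocation sees the total $L_1+L_2-A_n$. The converse must also confirm that the coarse bound uses a reference free of $Y^n$, so that no channel term enters the first constraint. It must further check that the two exceptional events are union-bounded on the same success event, as in Lemma~\ref{lem:cell-reference}.
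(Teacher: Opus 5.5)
Your converse is sound and follows the paper. A coarse image bound on $M_1$ alone gives $A_n\le L_1+o_P(1)$ on joint success. That source-measurable restriction is then kept inside the converse of Theorem~\ref{thm:gaussian-distortion-image}, applied to $(M_1,M_2)$, before the final conditional Gaussian expectation. The only difference is the weighting: the paper weights by $e^{\jmath_a}$ and computes the ball mass $\omega_n\le1$, while your reference $\phi_{aI_n}(x^n-u^n_{m_1})$ adds only the nonnegative deficit $(na-\|X^n-u^n\|^2)/(2a)$ on success. Both give the same bound.

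The gap is in the direct part. In your scheme the first message carries only the branch and coarse index, and all refinement information goes into $M_2$, whose alphabet is fixed at $e^{nR_2+\sqrt nL_2}$. The refinement shell of Appendix~\ref{app:hb-image} with that alphabet has offset $L_2$, not $L_1+L_2-A_n$. So the conditional success is $h_{a,b}(L_2)+o(1)$, and the total is about $\Pr\{G_0\le L_1\}\,h_{a,b}(L_2)$. This is strictly smaller than the left side of \eqref{eq:sr-image-region}, because $h_{a,b}$ is strictly increasing and $L_1-G_0>0$ almost surely on the indicator.

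Your padding remedy is self-contradictory. The slack $\sqrt n(L_1-A_n)$ you propose to leave unused is exactly the description capacity that an allocation from $L_1+L_2-A_n$ requires. The remark that the coarse layer ``actually spent $\sqrt nA_n$'' transfers budget automatically only for a single common message, as in Theorem~\ref{thm:gaussian-distortion-image}. With two messages, unused capacity in $M_1$ is lost unless refinement data is placed in it.

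The missing idea is to place refinement data there, which is possible because the weak decoder reads only the branch and coarse index. Let $B_n$ be the polynomial number of branches and $N_j$ the coarse alphabet size in branch $j$.
\begin{itemize}
\item Let $M_1$ enumerate triples: a branch $j$, a coarse index, and one of $H_j=\lfloor M_1/(B_nN_j)\rfloor$ refinement subindices. The total count is at most $M_1$.
\item Let $M_2$ carry the remaining refinement subindex.
\end{itemize}
The refinement alphabet is then $H_jM_2$, with $\log(H_jM_2)\ge nR_2+\sqrt n(L_1+L_2-\alpha_j)-O(\log n)$. The strict branch margin $\alpha_j\le L_1-\eta$ guarantees $H_j\ge1$ for large $n$, and a fixed positive second-order margin absorbs the $O(\log n)$ loss. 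After that, your averaging over branches and your limits in the margin, $\eta$, and the source window go through.
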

\begin{proof}
\edit{For the converse, put
$\jmath_a(x^n)=nR_1+(\|x^n\|^2-np)/(2p)$.
Multiplying the Gaussian source density by $e^{\jmath_a}$ gives the constant $(2\pi ea)^{-n/2}$. The weighted mass of a radius-$\sqrt{na}$ distortion ball is therefore
\[
 \omega_n=\frac{(n/(2e))^{n/2}}{\Gamma(n/2+1)}\le1.
\]
The first message selects at most $M_1$ such balls, so the mass of joint success with $\jmath_a(X^n)>\log M_1+\gamma$ is at most $\omega_ne^{-\gamma}$. Taking $\gamma\to\infty$ with $\gamma=o(\sqrt n)$ gives $A_n\le L_1+o_P(1)$ on joint success.} Apply the arbitrary-image converse of Theorem~\ref{thm:gaussian-distortion-image} to the total message $(M_1,M_2)$, retaining that source-measurable restriction before its final conditional Gaussian expectation. This proves the upper profile in \eqref{eq:sr-image-region}.

\edit{For the direct part, retain a fixed source-energy window and branches with upper endpoints $\alpha_j\le L_1-\eta$, where $\eta>0$. Let $B_n$ be the polynomial number of branches and $N_j$ the coarse alphabet in branch $j$. Define
\[
 H_j=\left\lfloor\frac{M_1}{B_nN_j}\right\rfloor.
\]
The first message enumerates the triples consisting of a branch, a coarse index, and one of $H_j$ refinement subindices. Their total number is at most $M_1$. The second message carries the remaining refinement subindex in an alphabet of size $M_2$. The strict branch margin ensures $H_j\ge1$ for large $n$, and
\[
 \log(H_jM_2)\ge nR_2+
 \sqrt n(L_1+L_2-\alpha_j)-O(\log n).
\]
Apply the construction in Appendix~\ref{app:hb-image} with a fixed positive second-order margin to absorb this logarithmic overhead. It attains the profile in \eqref{eq:sr-image-region} with that margin, averaged over the retained source branches. The weak decoder reads only the branch and coarse index. Let the rate margin decrease, then $\eta\downarrow0$, and expand the source window; a diagonal sequence gives the asserted closed region. All branch and refinement codebooks are fixed before transmission.}
\end{proof}

For the parameters of \eqref{eq:hb-folded-example}, $\eps=0.1$, and $L_1=1.2$, the boundary has $L_1+L_2\approx2.03890685$. When $L_1=Q^{-1}(0.1)/\sqrt2$, the coarse decoder alone uses the whole error allowance, and no finite $L_1+L_2$ attains joint success $0.9$. The common source score must remain inside the joint probability rather than being replaced by two separately optimized quantiles.



\subsection{Structured source cells with a recoverable label}\label{sec:structured-cells}
The partition theorem becomes useful when a whole source cell fails both product-reference tests of Section~\ref{sec:conditional}, but its components have regular conditional compositions. The following construction has exponentially many components, so a bound charging the full label alphabet would change the first-order rate.

Let $h$ denote binary entropy in nats. Fix rational parameters
\[
 0<t,q<1/2,\quad 2(t+q)<\delta<1/2,
 \qquad 0<R<\log2-h(\delta).
\]
For compatible even $n$, choose $L_n=\lfloor e^{nR}\rfloor$ balanced binary centers $a^n(j)$ at pairwise Hamming distance greater than $n\delta$. Such a sequence exists by constant-composition greedy packing. Around each center, form the conditional-type cell $\mathcal A_j$ by flipping exactly a fraction $t$ of its zeros and a fraction $t$ of its ones. Choose $J$ uniformly, $X^n$ uniformly on $\mathcal A_J$, and observe $C^n=X^n\oplus N^n$ through a BSC of crossover $q$. These disjoint cells make $J$ source-measurable. An encoder observes $X^n$ and a decoder must recover it from one fixed-length message and $C^n$.

Set $v=t+q-2tq$, $Q_A=\operatorname{Bern}(1/2)$, $Q_{X|A}=\operatorname{BSC}(t)$, and $W_{C|X}=\operatorname{BSC}(q)$. Define
\[
 H_*=h(t)+h(q)-h(v),\qquad
 V_*=\E_{Q_{AX}}\Var_{W(\cdot|X)}[-\log Q_{X|C,A}(X\mid C,A)].
\]
\begin{proposition}[A matching irregular-cell application]\label{prop:structured-cells}
For this source sequence, $V_*>0$ and
\begin{equation}
 \log M_n^*(\eps)=nH_*+\sqrt{nV_*}\,Q^{-1}(\eps)+o(\sqrt n).
 \label{eq:structured-rate}
\end{equation}
Although $H(J)=nR+O(1)$, its conditional entropy obeys
$H(J\mid C^n)=O(ne^{-cn})$ for some $c>0$. Relative to the fair-input product reference, both the full source-cell density deficit and the conditional image divergence grow linearly in $n$.
\end{proposition}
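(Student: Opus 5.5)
The plan is to reduce the coding problem to the component images with Theorem~\ref{thm:partition-stability}. Then evaluate each component with Theorem~\ref{new:dense}, and show that the label cost is negligible by decoding $J$ from $C^n$ at minimum distance.

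\emph{Operational image.} A fixed-length code with $M$ messages, whose decoder recovers $X^n$ from the message and $C^n$, corresponds to an output set $\mathcal D$ of pairs $(x^n,c^n)$. Each section $\{x^n:(x^n,c^n)\in\mathcal D\}$ has size at most $M$. Conversely, every such set is realized by a code: binning within each section is the same as assigning labels so that the labels are injective on every section, which is possible with $M$ labels. To make this exact, I would use the output $B=(X,C)$ through the identity-times-BSC channel, with the single row $V=X$ and complement $C$. Then the optimal success probability is $\Gamma(\log M)$ in the notation of \eqref{eq:partition-values}, with no cost event.

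\emph{Components.} Given $J=j$, the source is uniform on the full conditional type class $\mathcal T_n(Q_{X|A}\mid a^n(j))$, so $\delta_n=0$. Theorem~\ref{new:dense} applies with $u^n=a^n(j)$. The image-coordinate mean is
\[
 H_Q(X|C,A)=H(X|A)+H(C|X)-H(C|A)=h(t)+h(q)-h(v)=H_*,
\]
and the variance is $V_*$. For $b_n=nH_*+\sqrt nL$, the component success probability therefore converges to $\Phi(L/\sqrt{V_*})$.

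Three points need checking here.
\begin{itemize}
\item All components have the same joint type up to a coordinate permutation, so the limit is the same for every $j$ and uniform in $j$. Since $\Gamma^J$ is the average of the component optima, it has the same limit.
\item $V_*>0$. The score $-\log Q_{X|C,A}$ takes different values on the agreement and disagreement patterns of $(A,X,C)$, and $0<t,q<1/2$ gives both patterns positive conditional probability. So the variance is nonzero.
\item The limit $\Phi(L/\sqrt{V_*})$ is continuous and strictly increasing, as Corollary~\ref{cor:partition-profile} requires.
\end{itemize}

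\emph{Label cost.} The decoder takes $\hat J$ to be the center nearest to $C^n$ in Hamming distance. Conditionally on $J$, the distance $d(C^n,a^n(J))$ is a sum of flips concentrating at $nv$, and $v\le t+q<\delta/2$. A Chernoff or hypergeometric bound gives
\[
 \Pr\{d(C^n,a^n(J))\ge n\delta/2\}\le e^{-cn}.
\]
Because the centers are pairwise more than $n\delta$ apart, the ball of radius $n\delta/2$ around $a^n(J)$ contains no other center. Hence $\Pr\{\hat J\ne J\}\le e^{-cn}$, and Fano's inequality gives
\[
 H(J\mid C^n)\le h(e^{-cn})+e^{-cn}nR=O(ne^{-cn}).
\]
Corollary~\ref{cor:partition-profile} then yields $\Gamma_n(b_n)\to\Phi(L/\sqrt{V_*})$. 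Inverting this continuous, strictly increasing profile, as in Corollary~\ref{cor:hb-gaussian-rate}, gives \eqref{eq:structured-rate}.

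For achievability one can also argue directly: the decoder computes $\hat J$ itself, so no selector header is needed, and it then runs the component code. The direct part in Theorem~\ref{new:dense} applies to the set $\mathcal D_0$ with threshold $J$ replaced by $\hat J$, at an added error of $e^{-cn}$.

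\emph{Remaining assertions.} The statement $H(J)=nR+O(1)$ is immediate from $J$ uniform on $\lfloor e^{nR}\rfloor$ values.

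The density deficit relative to the balanced type class is
\[
 \log\frac{|T_n(\tfrac12)|}{L_n|\mathcal T_n(Q_{X|A}\mid a^n)|}=n(\log2-R-h(t))+O(\log n).
\]
This is linear in $n$ because $R<\log2-h(\delta)<\log2-h(t)$.

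For the conditional divergence \eqref{v5:conditional-deficit}, I would write
\[
 K_n=\E\ell_n=D(P_{XC}\|R_{XC})-D(P_C\|R_C)
\]
and use that $C^n$ determines $J$ up to an exponentially small error. Given $C^n$, the posterior of $X^n$ is supported on a single component of size about $e^{nh(t)}$. Under the fair product reference, the corresponding conditional law is spread over about $e^{nH(X|C)}$ sequences with $H(X|C)=h(q)$ in the fair reference. Comparing these two conditional entropies gives the lower bound
\[
 K_n\ge n\bigl(h(q)-H_*\bigr)-o(n)=n\bigl(h(v)-h(t)\bigr)-o(n),
\]
which is positive because $v>t$.

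I expect the main obstacle to be this last computation of the conditional divergence, namely making the entropy comparison a rigorous lower bound while handling the exponentially rare mis-decoded outcomes. A secondary point needing care is confirming that the fixed-length code-to-image correspondence is exact, so that Theorem~\ref{thm:partition-stability} bounds the operational optimum in both directions.
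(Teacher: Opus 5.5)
\textbf{Gap in the achievability half.} Your route is the paper's. Each $\mathcal A_j$ is a full conditional type class, so Theorem~\ref{new:dense} with $\delta_n=0$ gives the component profile $\Phi(L/\sqrt{V_*})$. Fano bounds $H(J\mid C^n)$, Corollary~\ref{cor:partition-profile} removes the label, and the two global deficits are entropy bookkeeping.

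The gap is in the achievability half of \eqref{eq:structured-rate}. The point you call secondary is false as stated. A set $\mathcal D$ whose $c^n$-sections have at most $M$ elements is realized by an $M$-message code only if one encoder is injective on every section simultaneously. Equivalently, the conflict graph (join $x^n$ and $\tilde x^n$ when they lie in a common section) must be $M$-colourable. Bounded cliques do not bound the chromatic number: sections $\{x_1,x_2\}$, $\{x_2,x_3\}$, $\{x_1,x_3\}$ at three observations with $M=2$ admit no such encoder. For the threshold image $\mathcal D_0$, a source word conflicts with every word in the sections of all its typical observations, which is far more than $e^{b}$ words. So you only have that the best code success is at most $\Gamma(\log M)$, not equal to it. Your alternative of running ``the component code'' from the direct part of Theorem~\ref{new:dense} makes the same mistake of treating an image lower bound as a code.

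\textbf{The missing binning step.} This is the step the paper supplies.
\begin{enumerate}
\item Use a random bin map into $[M]$, with the same alphabet in every component.
\item The decoder forms the nearest center $\widehat J$ from $C^n$. It then outputs the word of $\mathcal A_{\widehat J}$ in the received bin with the smallest score $S=\sum_i s(a_i(\widehat J),x_i,C_i)$, or equivalently the largest actual component posterior.
\item At most $e^{S}$ words have reference conditional probability at least $e^{-S}$. Hence the error is at most $\Pr\{S>\log M-\gamma\}+e^{-\gamma}+e^{-cn}$.
\item Averaging fixes one map. With $\gamma_n=3\log(n+1)$ and the exact component CLT, this attains $\Phi(L/\sqrt{V_*})$ at $\log M=nH_*+\sqrt nL+o(\sqrt n)$.
\end{enumerate}
The label then enters achievability only through $\widehat J$. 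Your converse is correct, and it needs only the trivial inequality $\Gamma\le\Gamma^J$.

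\textbf{Two smaller points.}
\begin{enumerate}
\item Nearest-center decoding succeeds by the triangle inequality: $d(C^n,a^n(j'))>n\delta-d(C^n,a^n(J))>d(C^n,a^n(J))$ whenever $d(C^n,a^n(J))<n\delta/2$. The fact that the ball contains no other center is not by itself the reason.
\item The conditional divergence is not an obstacle. $P_{C^n\mid X^n}$ equals the reference channel and both reference marginals are uniform, so $K_n=H(C^n)-H(X^n)$ exactly, which is equivalent to your $nh(q)-H(X^n\mid C^n)$. It is then evaluated using $H(C^n\mid J)=nh(v)+O(\log n)$ and the Fano bound on $H(J\mid C^n)$.
\end{enumerate}
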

\begin{proof}[Mechanism]
The observation identifies the center with exponentially small error because the center distance exceeds twice the typical source-plus-channel error weight. Conditioned on that center, the source is a regular conditional type. Its conditional image spectrum has mean $nH_*$, variance $nV_*$, and an $O_P(\log n)$ reference correction. The component profile is therefore Gaussian. Corollary~\ref{cor:partition-profile} removes the label without second-order loss. Conditional binning, with the same alphabet reused for every center, realizes the image bound. Appendix~\ref{app:structured-example} proves the packing, likelihood, and rate statements and computes the two linear global deficits.
\end{proof}
This is a characterization of the specified structured source sequence, rather than an assertion for every iid helper network. Its purpose is to exhibit an image class strictly beyond the whole-cell small-deficit hypotheses. In a helper network, the same reduction applies only when the responsible encoder can form the partition and all active section budgets satisfy the recovery condition.


\subsection{Several sources with one rate-limited helper}\label{sec:helper-app}

The helper observes $A^n$, encoder $j$ observes $B_j^n$, and a
single decoder reconstructs all $B_1^n,\ldots,B_k^n$.  Write
$r(S)=\sum_{j\in S}r_j$ and $L(S)=\sum_{j\in S}L_j$.  The
first-order region is the union, over $U-A-B$, of
\begin{equation}
 \begin{aligned}
 r_h&\ge I(U;A),\\
 r(S)&\ge H(B_S|B_{S^c},U)
             \quad(\varnothing\ne S\subseteq[k]).
 \end{aligned}
 \label{comp:hi:eq:multi-wak-first-order}
\end{equation}
Fix the informative exposed aggregate pair
\eqref{comp:hi:eq:informative}, and fix target rates summing to
$r_\Sigma$.  The following hypotheses specify which active sections
are included.

\begin{enumerate}
\item For every $S\in\mathscr A$, assume $A-B_{S^c}-B_S$ and
      $r(S)=H(B_S|B_{S^c})$.
\item For every $S\in\mathscr W$, assume $U\perp B_{S^c}$ for
      every posterior measure in $\mathcal M_{\boldsymbol r}$, and
      $r(S)=r_\Sigma-H(B_{S^c})$.
\item The two collections are disjoint.  Every other proper subset
      has a uniform positive margin:
      \begin{equation}
       \inf_{\mu\in\mathcal M_{\boldsymbol r}}
          [r(S)-H_\mu(B_S|B_{S^c},U)]>0.
       \label{comp:hi:eq:uniform-slack}
      \end{equation}
\end{enumerate}

The mean and covariance vectors $K$ and $\Sigma_\mu$ are those in
\eqref{comp:hi:eq:source-vector} and \eqref{comp:hi:eq:mark-covariance}.  Define
\begin{equation}
 \begin{aligned}
 b(\boldsymbol L)&=\left(L_h+cL([k]),\
       (L(S))_{S\in\mathscr A},\
       (L_h+cL(S))_{S\in\mathscr W}\right),\\
 G&\sim\cN(0,\Cov K(A)),\\
 F(\boldsymbol L)&=\E_G\sup_{\mu\in\mathcal M_{\boldsymbol r}}
                \Phi_{\Sigma_\mu}(b(\boldsymbol L)-G).
 \end{aligned}
 \label{comp:hi:eq:mixed-profile}
\end{equation}

\begin{theorem}[Mixed ancillary and weighted helper intersections]
\label{comp:hi:thm:mixed}
Assume that all retained diagonal entries of $\Sigma_\mu$ are positive for every $\mu\in\mathcal M_{\boldsymbol r}$. Assume also \eqref{comp:hi:eq:positive},
\eqref{comp:hi:eq:informative}, the three hypotheses above, and the hypotheses of
Lemma~\ref{comp:hi:thm:input}.  The closed local second-order region
for joint reconstruction error at most $\eps$ is
\begin{equation}
 \mathcal L_\eps=\{\boldsymbol L:F(\boldsymbol L)\ge1-\eps\}.
 \label{comp:hi:eq:mixed-region}
\end{equation}
Every code has limiting upper success bounded by $F(\boldsymbol L)$, and this profile is attained. The operational restrictions at zero message rates are included. The joint covariance may be singular.
\end{theorem}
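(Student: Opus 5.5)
The proof has two halves. For the converse I would apply Theorem~\ref{comp:hi:lem:vector-transfer}; for the direct part I would use a type-dependent helper choice with fixed local binning, as in Lemma~\ref{comp:hi:lem:implementation}.

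\textbf{Converse.} I begin with an arbitrary code and fix the helper message $m$ and the helper source type. The helper cell intersected with a type class is a set $\mathcal A_n\subseteq T_n(p_n)$. Given $m$, the target encoders' messages define a correct decoding set $\mathcal D_n\subseteq\mathcal B^n$. Its total cardinality is at most $\prod_j M_j$, and by \eqref{comp:hi:eq:section-cardinality} each section $S$ has cardinality at most $\prod_{j\in S}M_j$.

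Cells whose $\log|\mathcal A_n|$ falls well below the typical level carry vanishing helper mass, by the usual counting over at most $M_h$ cells. On the remaining cells, the threshold $s_{0,n}$ of \eqref{comp:hi:eq:vector-thresholds} equals $L_h+cL([k])$ minus the source fluctuation $\sqrt n\langle j,p_n-p\rangle$, up to $o(1)$. Here I use $cb_{0,n}-\log|\mathcal A_n|$ together with $\log|\mathcal A_n|\le$ (type entropy) $-\log M_h+O(\sqrt n)$, averaged over cells. The ancillary and weighted thresholds are rewritten the same way, with $m_S$ and $j-cm_{C_S}$ in place of $j$. The mean identities come from hypotheses 1--2 and Lemma~\ref{comp:hi:lem:mean-compatibility}.

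So every threshold vector has the form $b(\boldsymbol L)-\sqrt n\langle K,p_n-p\rangle$. Since $\sqrt n(p_n-p)$ is the empirical fluctuation of the source type, this term converges in law to $G\sim\cN(0,\Cov K(A))$. Sections with the uniform margin \eqref{comp:hi:eq:uniform-slack} have thresholds tending to $+\infty$, so they are dropped.

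The main obstacle is that Theorem~\ref{comp:hi:lem:vector-transfer} needs cell-by-cell uniformity, while the helper cell size is not fixed at the correct second-order level. I would handle this with a Markov/averaging argument over cells, in the style of Lemma~\ref{comp:hi:lem:weighted-spectrum}. Cells whose size deficit is positive on the $\sqrt n$ scale only increase the thresholds $s$. By monotonicity of $\Phi_{\Sigma_\mu}$, they are bounded by the worst case, which is the cells at full size. Averaging the uniform bound \eqref{comp:hi:eq:vector-image} over types, with the supremum over $\mu$ taken inside, gives $\limsup\PP(\text{success})\le F(\boldsymbol L)$. Zero-rate messages need no special treatment, because the section counting bounds hold for $L_i\ge0$.

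\textbf{Achievability.} I observe the helper type $\widehat p$, which is $O(\log n)$ header nats. For each retained value $G=\sqrt n\langle K,\widehat p-p\rangle$, I select an auxiliary $\mu$ nearly attaining the supremum in $F$, chosen from a fixed finite family of optimizer branches. This selection uses only helper observations. I then cover the helper type class at rate $I(U;A)$, with a $\sqrt n$ adjustment matched to $L_h$.

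The target encoders use random binning, with the same bin maps for every branch. Each target decodes by a joint threshold test on the conditional likelihoods. Lemma~\ref{comp:hi:lem:implementation} gives equality of those likelihoods at second order with $\Sigma_\mu$. The error is then the probability that some active score exceeds its budget, plus binning collisions of order $e^{-\gamma}$. The slack sections contribute vanishing error.

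Conditioned on the source type, the multivariate Berry--Esseen theorem gives $\Phi_{\Sigma_\mu}(b(\boldsymbol L)-G)-o(1)$. Taking the expectation over $G$ attains $F(\boldsymbol L)$. Positive diagonals make the profile continuous, so the closed region follows by $\eta$-shifts and a diagonal argument.
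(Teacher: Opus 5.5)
Your threshold bookkeeping and your achievability outline follow the paper. However, the converse step that places each helper cell inside the scope of Theorem~\ref{comp:hi:lem:vector-transfer} fails as written. That theorem requires $n^{-1}\log|\mathcal A_n|\to a_0$ and a bounded threshold vector, and an arbitrary helper code guarantees neither. Cells can be far larger than $|T_n(p_n)|/M_h$. For example, a helper that uses only one index gives $n^{-1}\log|\mathcal A_n|\to H(p)\ne a_0$. That cell sits at a different rate pair with a different optimizer fiber, and the theorem says nothing about it. Your inequality $\log|\mathcal A_n|\le nH(p_n)-\log M_h+O(\sqrt n)$ is false for such cells. It also points the wrong way: $\log|\mathcal A_n|$ enters the aggregate and weighted thresholds with a minus sign, so the converse needs a \emph{lower} bound on cell size. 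Your monotonicity step is reversed as well. A size deficit raises the thresholds and therefore raises $\Phi_{\Sigma_\mu}$, so deficient cells are not dominated by full-size cells. They must instead be removed by the mass count: cells below $|T_n(p_n)|e^{-\gamma_n}/M_h$ carry total mass at most $e^{-\gamma_n}$. Oversized cells would have smaller thresholds, but monotonicity of $\Phi$ cannot be applied to a bound the theorem never supplies for them.

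The missing idea is monotonicity in the cell rather than in the threshold. Put $N_n=\lfloor|T_n(p_n)|e^{-\gamma_n}/M_h\rfloor$ with $\gamma_n=3\log(n+1)$ and discard smaller cells. In every other cell, keep only the $N_n$ source words with the largest $W^n(\mathcal D_m\mid a^n)$. Because $\mathcal D_m$ is fixed before trimming, all section bounds, the physical channel, and the ancillary Markov laws are unchanged. The trimmed average success also dominates the original cell average. Every retained cell then satisfies $\log N_n=na_0+\sqrt n[-\langle\log p,z_n\rangle-L_h]+o(\sqrt n)$ exactly, with $z_n=\sqrt n(p_n-p)$. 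After rebuilding the posteriors on it, the theorem applies uniformly, and your row thresholds $b(\boldsymbol L)-\langle K,z_n\rangle+o(1)$ follow.

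There is a smaller omission in the direct part. $\mathcal M_{\boldsymbol r}$ may contain measures of infinite support. Before invoking Lemma~\ref{comp:hi:lem:implementation}, you need compactness to approximate the envelope by a finite family on a bounded type window. You also need a Carath\'eodory reduction, preserving the barycenter, the entropy, and the covariance entries, to realize each member by a finite auxiliary.
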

\begin{proof}
\emph{Converse.}
For fixed helper index $m$, let $\mathcal D_m$ be the actual set of
tuples reconstructed correctly by the decoder.  The encoder-index
tuple is injective on this set, so
\begin{equation}
 |\mathcal D_m|\le\prod_jM_j,
 \qquad
 |\{b_S^n:(b_S^n,b_{S^c}^n)\in\mathcal D_m\}|
                  \le\prod_{j\in S}M_j.
 \label{comp:hi:eq:network-sections}
\end{equation}
Condition on a helper type $p_n$ in a bounded $\sqrt n(p_n-p)$ window. With $\gamma_n=3\log(n+1)$ put
\begin{equation}
 N_n=\left\lfloor\frac{|T_n(p_n)|e^{-\gamma_n}}{M_h}\right\rfloor.
 \label{comp:hi:eq:trim-size}
\end{equation}
Cells smaller than $N_n$ contain at most an $e^{-\gamma_n}$ fraction of the type. In every other cell keep the $N_n$ source words having the largest actual values $W^n(\mathcal D_m\mid a^n)$. The set $\mathcal D_m$ is fixed before trimming, so its section bounds and the physical channel are unchanged. The retained average success bounds the original cell average from above. Since $a_0>0$, $N_n$ is exponentially large and
\begin{equation}
 \log N_n=na_0+\sqrt n[-\langle\log p,z_n\rangle-L_h]+o(\sqrt n),
 \quad z_n=\sqrt n(p_n-p).
 \label{comp:hi:eq:trim-expansion}
\end{equation}
Rebuild the source posteriors on each retained cell. This also preserves each ancillary Markov law.
Theorem~\ref{comp:hi:lem:vector-transfer} therefore applies to the retained
set in every relevant cell.

The normalization of each row matters.  Substitute
\eqref{comp:hi:eq:trim-expansion} and the message budgets into
\eqref{comp:hi:eq:vector-thresholds}.  The aggregate row gives
$L_h+cL([k])-\langle j,z_n\rangle+o(1)$.  An ancillary row gives
$L(S)-\langle m_S,z_n\rangle+o(1)$.  A weighted row gives
\begin{equation}
 L_h+cL(S)-\langle j-cm_{C_S},z_n\rangle+o(1),
 \label{comp:hi:eq:weighted-type-budget}
\end{equation}
because its order-$n$ term is
$c[r(S)-r_\Sigma+H(C_S)]=0$.  Thus every weighted row contains
the same helper offset $L_h$.  No section introduces an additional
allocation variable.  The joint bound is consequently the integrand
of \eqref{comp:hi:eq:mixed-profile}, evaluated at the empirical source
fluctuation.

Under the memoryless source,
$n^{-1/2}\sum_i[K(A_i)-\E K(A)]\Rightarrow G$.  Average the
uniform bound on bounded source-type windows and enlarge the
windows.  The full fixed-pair covariance family remains inside the
expectation because the converse permits the source history to
select any member of that family.  This proves the outer bound.

\emph{Achievability.}
On a bounded source-type window, approximate the covariance envelope
by a finite family of posterior measures. This is possible uniformly by compactness and continuity of Gaussian orthant probabilities under the positive-diagonal hypothesis.  Each covariance
matrix has a finite posterior representation: preserve the
$|\mathcal A|-1$ independent barycenter coordinates, the entropy,
and the $d(d+1)/2$ covariance entries, where $d$ is the number of
rows in \eqref{comp:hi:eq:mark-vector}, and apply the finite-dimensional
convex hull theorem.  At most
$|\mathcal A|+d(d+1)/2+1$ contacts suffice.  Thus the entire finite
family can be implemented with finite auxiliary alphabets.

Select its member using only the observed $A$-type and apply
Lemma~\ref{comp:hi:lem:implementation}.  The decoder learns $u^n$, its
joint type with $A^n$, and the branch from the helper message.  It
uses the product reference $Q_n(b^n|u^n)$.  Every target encoder
uses one independent random binning map, fixed across all helper
branches and all source types.  The decoder maximizes the joint
reference probability among tuples having the received indices.

For a transmitted tuple, classify competitors according to the
nonempty set $S$ of coordinates that differ.  The complement is
fixed.  There are at most
$1/Q_n(b_S^n|b_{S^c}^n,u^n)$ competitors with at least the transmitted
reference probability, and the probability of a simultaneous bin
collision for their differing coordinates is $1/M(S)$.  Therefore
the average error outside
\begin{equation}
 -\log Q_n(B_S^n|B_{S^c}^n,U^n)
                   \le\log M(S)-\gamma_n
       \quad(\varnothing\ne S\subseteq[k])
 \label{comp:hi:eq:joint-binning-event}
\end{equation}
is bounded by the probability that this joint event fails plus
$(2^k-1)e^{-\gamma_n}$.  This keeps all active thresholds on the
same event and requires no target encoder to know the helper type.

For $S\in\mathscr A$, the conditional reference in
\eqref{comp:hi:eq:joint-binning-event} is exactly
$P_{B_S|B_{S^c}}^n$, for every rounded helper type.  Its mean is
$n\langle m_S,p_n\rangle$.  For $S\in\mathscr W$, the exact
single-letter identity is
\begin{equation}
 \begin{split}
 -\log Q_n(B_S|C_S,U)
  ={}&-\log Q_n(B|U)+\log P_{C_S}(C_S)\\
     &+\log\frac{Q_n(C_S|U)}{P_{C_S}(C_S)}.
 \end{split}
 \label{comp:hi:eq:weighted-reference-split}
\end{equation}
Lemma~\ref{comp:hi:lem:implementation} makes the sum of the last term
$o_{\PP}(\sqrt n)$, uniformly on the window.  Subtracting the mean
$cn\langle m_{C_S},p_n\rangle$ from
\eqref{comp:hi:eq:aggregate-mean-ach} gives precisely
\eqref{comp:hi:eq:weighted-type-budget}.  The aggregate, ancillary, and
weighted centered channel vectors have covariance
$\Sigma_\mu+o(1)$ under the fixed compatible $(a^n,u^n)$.

For every omitted subset, \eqref{comp:hi:eq:uniform-slack} gives a fixed
positive gap between its rate and its conditional entropy, uniformly
over the chosen family and nearby types.  Its conditional score
increments are bounded and independent once $(a^n,u^n)$ is fixed;
Chebyshev's inequality already makes its violation probability
$O(n^{-1})$.  The retained rows have the conditional joint Gaussian
limit in \eqref{comp:hi:eq:mixed-profile}.  Averaging over source types
therefore attains the finite-family approximation to that profile.

Choose deterministic bin maps with average joint error no greater
than the random ensemble average.  Finally enlarge the source-type
window, refine the finite covariance family, and remove inward rate
shifts along a blocklength diagonal.  All headers and type-rounding
losses are $O(\log n)$ for each fixed stage, so they vanish after
division by $\sqrt n$.  This proves the closed region and, in the
regular case, the asserted profile equality.
\end{proof}

The covariance supremum in the converse ranges over the complete optimizer fiber. Accordingly, the weighted-section condition $U\perp B_{S^c}$ is required throughout that fiber. A smaller family gives the same characterization whenever it attains the same supremum at every relevant threshold.


\subsection{Privacy amplification from a conditional-type image}\label{sec:pa-app}
Fix public $u^n$ and compatible joint types $Q_n\to Q$ on finite alphabets. The source $X^n$ is uniform on $\mathcal T_n(Q_{n,X|U}\mid u^n)$, and the observer sees $Z^n$ through $W_{Z|X}^{\otimes n}$. The retained conditional support is fixed and its positive $Q_{n,X|U}$ entries are bounded away from zero. A fixed hash or an independent public-seed extractor maps $X^n$ to $K$ key values. Its total variation is measured against a uniform key independent of $(Z^n,u^n)$ and the seed. Let $K_n^*(\delta)$ be the largest admissible alphabet at variation at most $\delta$.

Define
\begin{equation}
 V_Z(Q)=\sum_{u,x}Q(u,x)\Var_{W(\cdot|x)}
       \log\frac{W(Z|x)}{(Q_{X|u}W)(Z)}.
 \label{eq:private-variance}
\end{equation}
This is the conditional information variance remaining after the source composition has been fixed, rather than the unconditional information variance of an iid pair.

\begin{theorem}[Conditional-type extraction rate]\label{thm:type-extraction}
For $0<\delta<1$,
\begin{equation}
 \log K_n^*(\delta)
 =nH_{Q_nW}(X|Z,U)+\sqrt{nV_Z(Q)}\,\Phi^{-1}(\delta)+o(\sqrt n).
 \label{eq:type-extraction}
\end{equation}
When $V_Z(Q)=0$ the second-order coefficient is zero, under the coefficient-closure convention. Channel zeros are allowed.
\end{theorem}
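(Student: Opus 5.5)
The plan is to sandwich $\log K_n^*(\delta)$ between the two spectra of Lemma~\ref{lem:private-spectrum}, with observer $E=(Z^n,u^n)$, and then evaluate the conditional surprisal
\[
 H_n=-\log P(X^n\mid Z^n)
\]
under the conditional-type law. The converse \eqref{eq:pa-converse} gives $\delta\ge\Pr\{H_n<\log K-\gamma\}-e^{-\gamma}$. The direct bound \eqref{eq:pa-direct} gives $\delta\le\Pr\{H_n<\log K+\gamma\}+\tfrac12e^{-\gamma/2}$. Take $\gamma_n=3\log(n+1)$. The second-order expansion then follows once we know that $(H_n-nH_{Q_nW}(X|Z,U))/\sqrt n$ converges in distribution to $\cN(0,V_Z(Q))$, uniformly enough for locally perturbed thresholds. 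Since the Gaussian CDF is continuous and strictly increasing when $V_Z>0$, inversion gives the coefficient $\sqrt{V_Z}\,\Phi^{-1}(\delta)$. When $V_Z=0$, both bounds collapse to the coefficient-closure statement with zero coefficient.

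For the evaluation, split the surprisal with Bayes' rule against the unconditioned product reference $R=\prod_iQ_{n,X|U=u_i}W$. On the type class the input is uniform, so
\[
 H_n=\log|\mathcal T_n|-\sum_i\log W(Z_i|X_i)+\log P^{\mathcal A}(Z^n),
\]
and $P^{\mathcal A}(Z^n)=R(Z^n)\,\widetilde P(\mathsf A\mid Z^n)/\widetilde P(\mathsf A)$. Hence
\[
 H_n=\log|\mathcal T_n|+\log\frac{1}{\widetilde P(\mathsf A)}
   -\sum_i\log\frac{W(Z_i|X_i)}{(Q_{X|u_i}W)(Z_i)}
   +\log\widetilde P(\mathsf A\mid Z^n).
\]
The first two terms together equal $nH_{Q_n}(X|U)$ exactly, because each type-class member has reference probability $e^{-nH_Q(X|U)}$. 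The third term is, for every fixed $x^n$ in the class, a sum of independent bounded channel increments. Its mean is $nI_{Q_nW}(X;Z|U)$ and its variance is $nV_Z(Q_n)$; channel zeros are harmless, since scores are only evaluated on the sampling support. The Berry--Esseen or Lindeberg argument of Theorem~\ref{new:dense} then gives the Gaussian limit uniformly in $x^n$, with continuity of $V_Z$ in $Q_n$.

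The remaining term $\ell_n=\log\widetilde P(\mathsf A\mid Z^n)$ is nonpositive, and this step is the main obstacle. We need $\ell_n=o_P(\sqrt n)$, which is the conditional-likelihood control of Theorem~\ref{v5:screened} for a total row with complement $Z^n$. A crude density bound fails: $-\log\widetilde P(\mathsf A)=\Theta(\log n)$, but conditioning on $Z^n$ can in principle reduce this probability further. To handle it, first use $\E[-\ell_n]\ge0$ and $\E\ell_n\le$ relative-entropy identities. More precisely, note that $-\ell_n\le-\log\widetilde P(\mathsf A\mid Z^n)$, and that its expectation under $P^{\mathcal A}$ equals $D(P^{\mathcal A}_{Z^n}\|R_{Z^n})-\log(1/\widetilde P(\mathsf A))$ with a sign adjustment. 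This is bounded by $|\mathcal X||\mathcal U|\log(n+1)$ by the data-processing inequality, since the input divergence is at most the density deficit $\Delta_n$ with $\delta_n=0$. The one-sided tail argument of Theorem~\ref{v5:screened} supplies the other direction. Together with Markov's inequality, these give $\ell_n=O_P(\log n)$.

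If this expectation bookkeeping is off by sign, the fallback is to compute $\widetilde P(\mathsf A\mid Z^n=z^n)$ directly. It is the probability that independent rows drawn from the backward channels produce the prescribed conditional type. A local CLT for the multinomial type count, using the bounded-away positive entries of $Q_{n,X|U}$, gives the bound $n^{-O(1)}$ on typical $z^n$.

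Finally, combine these pieces and optimize $K$. Take $\log K=nH_{Q_nW}(X|Z,U)+\sqrt n\,L$ and apply Proposition~\ref{prop:spectral-stability}, or a direct threshold-shift argument, with the $O_P(\log n)$ errors and $\gamma_n$. This shows that both bounds converge to $\Phi(L/\sqrt{V_Z(Q)})$, which yields \eqref{eq:type-extraction}. The seeded and fixed-hash versions coincide, as noted after Lemma~\ref{lem:private-spectrum}.
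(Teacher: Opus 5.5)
Your proposal is correct and follows the paper's proof: the same Bayes decomposition against the product output reference $R_{Z^n}$, the same independent-increment central limit theorem for the middle sum, and the same two tests of Lemma~\ref{lem:private-spectrum} at $\gamma=3\log(n+1)$, followed by inversion. The step you call the main obstacle takes one line in the paper. Write $\pi=\widetilde P(\mathsf A)$; then $\ell_n=\log(P_{Z^n}/R_{Z^n})-\log(1/\pi)$, with $\log(1/\pi)\le|\mathcal X||\mathcal U|\log(n+1)$ and $\Pr\{\log(P_{Z^n}/R_{Z^n})<-\gamma\}\le e^{-\gamma}$. Your Markov route also works, but the inequality it needs is $D(P_{Z^n}\|R_{Z^n})\ge0$, which gives $\E[-\ell_n]\le\log(1/\pi)$; data processing only gives the trivial direction $\E[-\ell_n]\ge0$.
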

\begin{proof}
Write $T=\mathcal T_n(Q_{n,X|U}\mid u^n)$, and let $R_{Z^n}=\prod_i(Q_{n,X|u_i}W)$ be the product output reference. Under the product input, the probability $\pi$ of $T$ is at least $(n+1)^{-|\mathcal X||\mathcal U|}$. The output-image likelihood identity is
\[
 \frac{P_{Z^n}(z^n)}{R_{Z^n}(z^n)}
 =\frac{\widetilde P\{X^n\in T\mid Z^n=z^n\}}\pi
 \le(n+1)^{|\mathcal X||\mathcal U|}.
\]
Its logarithm also has lower tail at most $e^{-\gamma}$. Thus this correction is $O_{\PP}(\log n)$, by the same conditioning comparison as in Theorem~\ref{new:dense}. Bayes' rule now gives the private-image spectrum exactly:
\begin{align}
 -\log P_{X^n|Z^n,u^n}(X^n|Z^n,u^n)
 &=\log|T|-\sum_i\log\frac{W(Z_i|X_i)}{(Q_{n,X|u_i}W)(Z_i)}
       +\log\frac{P_{Z^n}(Z^n)}{R_{Z^n}(Z^n)}.\label{eq:type-private-identity}
\end{align}
The first term is $nH_{Q_n}(X|U)+O(\log n)$. Given any source word in $T$, the middle sum has the same independent-increment law, with mean $nI_{Q_nW}(X;Z|U)$ and variance $nV_Z(Q_n)$. The smallest positive channel probability and the positive conditional input probabilities bound all sampled scores on the retained support. Consequently the centered expression in \eqref{eq:type-private-identity}, divided by $\sqrt n$, converges to $\mathcal N(0,V_Z(Q))$.

Apply the lower and upper private-image tests of Lemma~\ref{lem:private-spectrum} with $\gamma=3\log(n+1)$. At positive variance the scalar Gaussian distribution is continuous and strictly increasing, so coefficients below and above $\sqrt{V_Z(Q)}\Phi^{-1}(\delta)$ give matching direct and converse bounds. If the variance is zero, Chebyshev's inequality and coefficients $-\eta$ and $\eta$ give coefficient zero after $\eta\downarrow0$.
\end{proof}

For a binary constant-composition source with half zeros and half ones and a BSC observer of crossover $p\in(0,1/2)$, the formula is
\[
 \log K_n^*(\delta)
 =nh(p)+\sqrt{np(1-p)}\log\frac{1-p}{p}\,\Phi^{-1}(\delta)+o(\sqrt n)
\]
along compatible blocklengths, where $h(p)=-p\log p-(1-p)\log(1-p)$. The observer and any public seed are included in the secrecy criterion. This is an exact extraction theorem for the prescribed source experiment; reliable agreement from arbitrary iid observations additionally needs an admissible reconciliation step.

\begin{corollary}[Extraction invariant under predictable public refinement]\label{cor:public-extraction}
Let $(V_n,E_n)$ be finite-source experiments with
\[
 \frac{-\log P(V_n\mid E_n)-a_n}{\sqrt n}
       \Rightarrow\cN(0,V),\qquad V>0,
\]
and let $J_n=f_n(V_n)$ be public with
$-\log P(J_n\mid E_n)=o_P(\sqrt n)$. Then
\begin{equation}
 \log K_n^*(\delta;J_n)
       =a_n+\sqrt{nV}\,\Phi^{-1}(\delta)+o(\sqrt n).
 \label{eq:public-extraction-rate}
\end{equation}
A sufficient condition is $H(J_n\mid E_n)=o(\sqrt n)$.
\end{corollary}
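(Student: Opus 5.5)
The plan is to sandwich $\Delta(K)$ between the two bounds of Proposition~\ref{prop:public-refinement}, with observer $E_n$ and public refinement $J_n$, and read off the second-order coefficient from the assumed Gaussian limit of the normalized surprisal. Write $H_n=-\log P(V_n\mid E_n)$ and $T_n=-\log P(J_n\mid E_n)$. By hypothesis $(H_n-a_n)/\sqrt n\Rightarrow\cN(0,V)$ with $V>0$, so the limit CDF $\Phi(\cdot/\sqrt V)$ is continuous and strictly increasing. Moreover $T_n\ge 0$ and $T_n=o_P(\sqrt n)$. The operational quantity $K_n^*(\delta;J_n)$ is the largest $K$ with $\Delta(K)\le\delta$, where the criterion includes $J_n$, $E_n$ and any public seed.

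\emph{Converse.} Take $\log K=a_n+\sqrt n(\sqrt V\Phi^{-1}(\delta)+\eta)$ with $\eta>0$ and $\gamma_n=\log n$. By \eqref{eq:public-spectrum-converse},
\[
\Delta(K)\ge\Pr\{H_n<\log K-\gamma_n\}-e^{-\gamma_n}.
\]
Since $\gamma_n/\sqrt n\to0$, the right side tends to $\Phi(\Phi^{-1}(\delta)+\eta/\sqrt V)>\delta$. Hence $\Delta(K)>\delta$ for large $n$, and every admissible $K$ eventually lies below this value. This bound does not use the refinement at all.

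\emph{Achievability.} Take $\log K=a_n+\sqrt n(\sqrt V\Phi^{-1}(\delta)-\eta)$. Choose a deterministic $t_n=o(\sqrt n)$ with $\Pr\{T_n>t_n\}\to0$, by the diagonal argument used in Corollary~\ref{cor:partition-profile}, and set $\gamma_n=\log n$. Then \eqref{eq:public-spectrum-split} gives
\[
\Delta(K)\le\Pr\{H_n<\log K+\gamma_n+t_n\}+\Pr\{T_n>t_n\}+\tfrac12e^{-\gamma_n/2}.
\]
The right side tends to $\Phi(\Phi^{-1}(\delta)-\eta/\sqrt V)<\delta$, so this $K$ is admissible for large $n$. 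Letting $\eta\downarrow0$ along a diagonal sequence yields \eqref{eq:public-extraction-rate}. For the sufficient condition, Markov's inequality gives $\Pr\{T_n>\tau\sqrt n\}\le H(J_n\mid E_n)/(\tau\sqrt n)\to0$ for each fixed $\tau>0$.

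The main obstacle is the deterministic diagonal choice of $t_n$ from convergence in probability. A secondary point is uniformity: weak convergence at the shifted thresholds $\Phi^{-1}(\delta)\pm\eta/\sqrt V+o(1)$ must be justified. This follows from continuity of the Gaussian limit CDF, since the CDFs then converge uniformly by P\'olya's theorem. One must also note that the converse bound already holds with $J_n$ present, because \eqref{eq:public-spectrum-converse} is stated for the observer $(E_n,J_n)$.
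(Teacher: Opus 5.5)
Your proposal is correct and follows essentially the paper's argument: both directions come from Proposition~\ref{prop:public-refinement}, using the converse \eqref{eq:public-spectrum-converse} and the split bound \eqref{eq:public-spectrum-split} at coefficients $\sqrt V\,\Phi^{-1}(\delta)\pm\eta$, with a deterministic $t_n=o(\sqrt n)$, a slowly growing $\gamma_n=o(\sqrt n)$, and Markov's inequality for the entropy condition. One step is left implicit: every $K$ above your converse value is also inadmissible, which holds immediately because the converse lower bound is nondecreasing in $K$.
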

\begin{proof}
Choose $t_n=o(\sqrt n)$ with $P(T_{J_n|E_n}>t_n)\to0$, and choose $\gamma_n\to\infty$ with $\gamma_n=o(\sqrt n)$. The bounds of Proposition~\ref{prop:public-refinement} at coefficients on either side of $\sqrt V\Phi^{-1}(\delta)$ converge to the same strictly increasing normal CDF. A slow favorable diagonal adjustment gives the fixed-$\delta$ optimum. Markov's inequality proves the sufficient condition.
\end{proof}
At positive conditional varentropy, applying this corollary to Theorem~\ref{thm:type-extraction} preserves its center and conditional varentropy after every such public refinement. For example, let $V_n=(A^n,B^n)$ with independent components, let $E_n=A^n$, and publish $J_n=A^n$. The message may have exponentially many values, but its conditional surprisal is zero and extraction from $B^n$ is unchanged. The same second-order conclusion holds if $E_n$ reveals all but $o(\sqrt n)$ symbols of $A^n$ on a fixed finite alphabet. If instead a public header has only $L_n$ values, counting gives
\[
 P\{-\log P(J_n\mid E_n)>\log L_n+\gamma\}\le e^{-\gamma}.
\]
Thus an $o(\sqrt n)$-nat selector header also preserves the extraction coefficient. These statements concern extraction; agreement and a prescribed public-rate budget remain additional constraints in key generation.


\section{Comparison with Gaussian Coding Theorems}\label{sec:benchmarks}
The preceding applications use an image constraint as the main converse object. Gaussian Wyner--Ziv and Gray--Wyner coding provide useful comparisons: the former requires a source-observable allocation, whereas the latter admits a supporting-weight characterization on its positive-common-rate boundary.

\subsection{Bivariate Gaussian Gray--Wyner coding}\label{sec:gw-comparison}
\edit{A centralized encoder can observe the empirical covariance of a Gaussian source pair. This enables a covariance-shell construction, whereas the decoder-only observation in Wyner--Ziv coding leaves a residual allocation problem. The distinction explains why the two exact second-order formulas have different forms.}

Let $Z_t=(X_t,Y_t)^{\mathsf T}\sim\mathcal N(0,K)$ independently, where $K\succ0$, and let the two quadratic distortion thresholds be $D_1,D_2>0$. The encoder sends a common message and one private message to each decoder. At a boundary point with positive common rate, Zhou and Yu~\cite{zhouyu} characterize the closed second-order region by
\begin{equation}
 L_0+\lambda_1 L_1+\lambda_2 L_2
 \ge\sqrt{V_{\rm GW}}\,Q^{-1}(\varepsilon),\qquad
 V_{\rm GW}=\frac{1+(\lambda_1+\lambda_2-1)^2}{2},
 \label{eq:v10-gw-benchmark}
\end{equation}
together with nonnegativity of $L_i$ when its first-order private rate is zero. Here $\lambda_i$ is the supporting weight associated with private rate $r_i$, and error means that either actual distortion constraint fails.

\edit{The weighted-image calculation expresses the same variance through a quadratic source score. Put $a_i=D_ie^{2r_i}$, and let $S_*$ maximize $\log\det S$ over $0\prec S\preceq K$ and $S_{ii}\le a_i$. For supporting multipliers $\Lambda\succeq0$ and $\mu_i\ge0$, the stationarity and complementarity relations are}
\begin{equation}
 S_*^{-1}=\Lambda+\operatorname{diag}(\mu),\qquad
 \Lambda(K-S_*)=0,\qquad \mu_i(a_i-S_{*,ii})=0.
 \label{eq:v10-gw-kkt}
\end{equation}
Set $\lambda_i=\mu_i a_i=\mu_iS_{*,ii}$ and $A=(K^{-1}-\Lambda)/2$. The centered score
$\operatorname{tr}[A(ZZ^{\mathsf T}-K)]$ has variance $2\operatorname{tr}[(AK)^2]$. At a positive-common-rate point $K-S_*\ne0$, so \eqref{eq:v10-gw-kkt} implies $\operatorname{rank}\Lambda\le1$. Thus one eigenvalue of
$H=K^{1/2}AK^{1/2}$ is $1/2$. Moreover,
\[
 \operatorname{tr}(\Lambda K)=\operatorname{tr}(\Lambda S_*)
       =2-\lambda_1-\lambda_2,
 \qquad \operatorname{tr}H=\frac{\lambda_1+\lambda_2}{2}.
\]
The other eigenvalue is $(\lambda_1+\lambda_2-1)/2$, proving
\begin{equation}
 2\operatorname{tr}[(AK)^2]
 =\frac{1+(\lambda_1+\lambda_2-1)^2}{2}.
 \label{eq:v10-gw-equivalence}
\end{equation}
\edit{The identities compare two representations of the same bivariate fluctuation. A weighted-image converse retains this source score on every successful subset, as in Lemma~\ref{v5:entropy-spectrum}; the coding theorem additionally requires a construction. The uniform covariance-shell covering lemma of Zhou and Yu~\cite{zhouyu} supplies that construction and explicitly resolves the bivariate boundary regimes.}

For example, $K=\left(\begin{smallmatrix}1&0.7\\0.7&1\end{smallmatrix}\right)$, $D_1=D_2=0.2$, and $r_1=r_2=\tfrac12\log3$ give
$S_*=\left(\begin{smallmatrix}0.6&0.3\\0.3&0.6\end{smallmatrix}\right)$ and
$\lambda_1=\lambda_2=2/3$. Both expressions yield $V_{\rm GW}=5/9$. The agreement in \eqref{eq:v10-gw-equivalence} concerns the positive-common-rate bivariate model; limits in which a rate vanishes require the corresponding boundary convention in the coding theorem.



\subsection{The scalar Gaussian Wyner--Ziv benchmark}\label{v5:wz-section}
The reconstruction image of Section~\ref{sec:cell-reference} retains an uninformed reproduction in addition to a decoder using side information. The scalar Wyner--Ziv benchmark isolates the information--distortion constraint when only the informed reconstruction is required. Xu and Zhou~\cite{xuzhou} establish its exact profile by a source-dependent shell construction and an arbitrary-code converse.

Use the scaled observation model
\begin{equation}
 X\sim\cN(0,P),\quad Y=(1-c)X+N,\quad
 N\perp X,\quad N\sim\cN(0,v(1-c)),\quad c=v/P\in(0,1).
 \label{v5:wz-model}
\end{equation}
Then $\E[X\mid Y]=Y$ and $\Var(X\mid Y)=v$. The encoder observes $X^n$ only, and the decoder uses its message and $Y^n$. Fix $0<D<v$, write $d=D/v$, $\rho=1-c$, $\beta=\rho d$, and $R=\frac12\log(v/D)$. Define
\begin{equation}
 V_s=c^2/2,\qquad
 \Sigma=\begin{pmatrix}
 (\rho-\beta)(1-(\rho-\beta)/2)&-\beta(\rho-\beta)/2\\
 -\beta(\rho-\beta)/2&\beta(1-\beta/2)
 \end{pmatrix}.
 \label{v5:wz-Sigma}
\end{equation}
The second coordinate is the distortion fluctuation multiplied by $1/(2D)$. For independent $G_s\sim\cN(0,V_s)$ and $(J,B)\sim\cN(0,\Sigma)$, put
\begin{equation}
 F_{\rm G}(L)=\E\sup_t\Phi_\Sigma(L-G_s-t,t),\qquad
 L_{\rm G}(\eps)=F_{\rm G}^{-1}(1-\eps).
 \label{v5:wz-profile}
\end{equation}
\begin{lemma}[Scalar Gaussian Wyner--Ziv characterization {\cite{xuzhou}}]\label{v5:wz-inner}
For fixed interior parameters in \eqref{v5:wz-model},
\begin{equation}
 \log M^*(n,D,\eps)=nR+\sqrt nL_{\rm G}(\eps)+o(\sqrt n),\qquad 0<\eps<1.
 \label{v5:wz-upper}
\end{equation}
The profile is continuous and strictly increasing, and
\begin{equation}
 F_{\rm G}(L)<\Phi(\sqrt2L),\qquad
 L_{\rm G}(\eps)>Q^{-1}(\eps)/\sqrt2.
 \label{v5:wz-interval}
\end{equation}
\end{lemma}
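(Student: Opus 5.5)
The plan is to obtain Lemma~\ref{v5:wz-inner} as the degenerate case of Theorem~\ref{thm:gaussian-distortion-image} in which the uninformed reconstruction is removed. The operational identity \eqref{v5:wz-upper} itself is the result of Xu and Zhou~\cite{xuzhou}. I would recover it from Lemma~\ref{lem:matched-cost} applied with the cell reference $r_m=p_{Y^n}$, for which $H_r=-\log P(M\mid Y^n)$, and the posterior reference $s_{m,y^n}(x^n)=\phi_{DI_n}(x^n-v_m^n(y^n))$.

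\textbf{Step 1: the exact identity.} Using \eqref{eq:cell-identity}, and the fact that $X^n\mid Y^n\sim\cN(Y^n,vI_n)$ in model \eqref{v5:wz-model}, the identity takes the form
\[
H_r=Z_s+nR+\sqrt n A_n+N_n+\frac{nD-\|X^n-v_M^n(Y^n)\|^2}{2D}.
\]
Here $A_n$ is a source energy score whose coefficient produces the variance $V_s=c^2/2$, and $N_n$ collects the channel terms. This is the analogue of \eqref{eq:hb-exact-identity}. The distortion deficit carries the positive coefficient $1/(2D)$, which localizes it on success. Lemma~\ref{lem:cell-reference} supplies the upper inequality \eqref{eq:matched-upper}.

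\textbf{Step 2: the matched lower comparison.} I would take $\mathcal F_n$ to be the source block, and verify \eqref{eq:matched-lower}--\eqref{eq:matched-center} with a conditional covariance equal to $\Sigma$ in \eqref{v5:wz-Sigma}. The $o(n)$ second-moment bound and the uniform conditional CDF estimate are exactly the two inputs of Proposition~\ref{prop:spectral-stability}. This step is the main obstacle. It requires the posterior smoothing and sampled-description estimates of \cite{xuzhou}, which show that an arbitrary cell can be regularized to a source-dependent shell, and I would import them rather than reprove them. With these in hand, Lemma~\ref{lem:matched-cost} gives $\limsup\Pr(E_n)\le F_{\rm G}(L)$. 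Achievability uses the source-dependent shell construction of \cite{xuzhou}, which is the same as the refinement layer in Appendix~\ref{app:hb-image} once the coarse layer is dropped.

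\textbf{Step 3: profile properties and the strict inequality.} Continuity and strict monotonicity of $F_{\rm G}$ follow exactly as in Theorem~\ref{thm:gaussian-distortion-image}. The determinant of $\Sigma$ is $\beta(\rho-\beta)(1-\rho/2)>0$, the compact-maximizer argument applies, and increasing the first threshold strictly increases the probability.

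For \eqref{v5:wz-interval} I would use
\[
\sup_t\Phi_\Sigma(z-t,t)<\Pr\{J+B\le z\}
\]
for every $z$. The inequality holds because a rectangle is a strict subset of the half-plane, which has positive Gaussian mass outside any rectangle, and the supremum is attained at finite $t$. Averaging over $G_s$ then gives $F_{\rm G}(L)<\Pr\{G_s+J+B\le L\}$.

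It remains to compute the total variance:
\[
V_s+\mathbf 1^{\mathsf T}\Sigma\mathbf 1
= \frac{c^2}{2}+\rho-\frac{\rho^2}{2}
= \frac{c^2+2\rho-\rho^2}{2}.
\]
With $\rho=1-c$ this equals $\tfrac12$. Hence $\Pr\{G_s+J+B\le L\}=\Phi(\sqrt2L)$. Inverting the strict inequality via monotonicity then gives $L_{\rm G}(\eps)>Q^{-1}(\eps)/\sqrt2$.
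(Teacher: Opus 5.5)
Your treatment of \eqref{v5:wz-interval} matches the paper's argument. Each finite maximizing rectangle misses positive Gaussian mass from its containing halfplane, and $V_s+\mathbf 1^{\mathsf T}\Sigma\mathbf 1=c^2/2+\rho-\rho^2/2=1/2$. You also write out the determinant, the variance algebra, and the final inversion, which the paper leaves implicit.

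For \eqref{v5:wz-upper} your route differs mainly in packaging. The paper re-derives nothing. It rescales $Y$ by $1/(1-c)$ to Xu and Zhou's additive model, with noise variance $v/\rho$. This gives regression coefficient $P/(P+v/\rho)=\rho$ and distortion-to-noise ratio $D\rho/v=\beta$, so \eqref{v5:wz-Sigma}--\eqref{v5:wz-profile} are literally their Theorem~1 and Eq.~(23). The paper then notes that strict offsets and a slow favorable adjustment give the fixed-$\eps$ expansion.

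Your detour through Lemma~\ref{lem:matched-cost} with $r_m=p_{Y^n}$ correctly explains where $V_s$ comes from: the identity gives $\sqrt nA_n=c(\|X^n\|^2-nP)/(2P)$. However, you import the smoothing, sampled-description, and shell estimates anyway, so the detour does not reduce the dependence on \cite{xuzhou}. You also only assert, rather than compute, that the conditional covariance equals $\Sigma$. So you still need exactly the parameter identification above for the citation to apply. The inversion from the limiting profile to the fixed-$\eps$ expansion is also missing, though it is routine given continuity and strict monotonicity.

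Two framing remarks are slightly off. First, the lemma is not a degenerate case of Theorem~\ref{thm:gaussian-distortion-image}. That theorem requires $a<p$, and its source variance stays $1/2$ because of the coarse-cell reference, which you correctly replace by $p_{Y^n}$. Second, the refinement layer of Appendix~\ref{app:hb-image} acts on a residual whose energy the coarse shell has already fixed at $na+O(1)$. In Wyner--Ziv the fluctuating source energy must itself be absorbed by a source-dependent shell. Neither point affects your proof, since you defer to \cite{xuzhou} for both directions.
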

\begin{proof}[Identification and interpretation]
Rescaling the observed variable to the additive model of~\cite{xuzhou} gives regression coefficient $\rho$ and distortion-to-noise ratio $\beta$. Equations~\eqref{v5:wz-Sigma}--\eqref{v5:wz-profile} are therefore its Theorem~1 and Eq.~(23). A strictly larger offset gives a strict error margin; a slow favorable adjustment converts the asymptotic definition to the fixed-$\eps$ expansion. Each finite maximizing rectangle excludes positive Gaussian mass from its containing halfspace. Since $\Var(G_s+J+B)=1/2$, this proves the strict comparison, as in~\cite[Corollary~1]{xuzhou}.
\end{proof}
For $(P,v,D,\eps)=(2,1,0.2,0.1)$ the profile gives $L_{\rm G}\approx1.5121711120$, whereas the both-terminal coefficient is $0.9061938024$~\cite{conditionalrd,rd}. At $D=1/2$ with $P=2,v=1$, it simplifies to $F_{\rm G}(L)=\Phi(L)^2$~\cite[App.~B]{xuzhou}. The optimization is performed after observing the source score and before observing $(J,B)$. A union over allocations after the residual noise would describe the different halfspace experiment.

The two-level image in Theorem~\ref{thm:gaussian-distortion-image} has the same form of residual optimization but source variance $1/2$. That variance follows from retaining the coarse distortion in the cell-reference identity. \edit{Its proof therefore requires the coarse-distortion entropy and enhancement identities in Appendix~\ref{app:hb-image}, in addition to the residual Gaussian analysis.}


\section{Nonmatching Rate Bounds}\label{sec:nontight}
General helper cells introduce a conditional likelihood loss; unrestricted lossy reconstruction leads to a fractional-image inverse; two noisy reconstruction stages require compatible residual geometry; and reliable-key generation couples decoding to a public key partition. The following bounds preserve those objects and state the additional estimate needed for each model.

\subsection{Subset-rate bounds for general helper cells}\label{sec:nested-bounds}
For a helper code, fix its helper index and any source-type restriction before forming the input law. Let $A^n$ have the resulting law and pass it through $W_{B|A}^{\otimes n}$. Write $B=(D,C)$ for a target subset and its complement, and let $\mathcal D$ be the actual correctly reconstructed set. Separate encoders imply at most $M(S)=\prod_{j\in S}M_j$ target words in each section indexed by $c^n$.

The original-law source predictors are $q_i=P(A_i\in\cdot\mid A^{i-1})$. Define
\begin{align}
 T_{S,n}&=\sum_i-\log\frac{(q_iW)(B_i)}{(q_iW_C)(C_i)},\label{eq:nested-score}\\
 \Lambda_{C,n}&=\sum_i\log\frac{(q_iW_C)(C_i)}{P(C_i\mid C^{i-1})},\qquad
 \mathcal I_{C,n}=\sum_i I(C_i;A^{i-1}\mid C^{i-1}).\label{eq:predictive-deficit}
\end{align}
For $\gamma,\eta>0$ the subset-rate converse is
\begin{equation}
 P\{B^n\in\mathcal D\}
 \le P\{T_{S,n}\le\log M(S)+2\gamma+\eta\}
        +2e^{-\gamma}+\frac{1+\mathcal I_{C,n}}\eta.
 \label{eq:nested-rate-bound}
\end{equation}
Several such constraints can be retained on the same joint success event, with the corresponding sum of exceptional probabilities.

To verify \eqref{eq:nested-rate-bound}, let $\Lambda_{B,n}$ be the analogous joint-output prediction correction. The exact identity is
\[
 T_{S,n}=-\log P(D^n\mid C^n)-\Lambda_{B,n}+\Lambda_{C,n}.
\]
Section counting controls the first term on success; the source-past prediction supermartingale in Lemma~\ref{comp:hi:lem:weighted-spectrum} gives $P\{\Lambda_{B,n}<-\gamma\}\le e^{-\gamma}$. The same comparison for $C$ gives $P\{\Lambda_{C,n}<-t\}\le e^{-t}$. Moreover the original memoryless channel yields $\E\Lambda_{C,n}=\mathcal I_{C,n}$, so $\E(\Lambda_{C,n})_+\le1+\mathcal I_{C,n}$. Markov's inequality proves the displayed rate bound. All identities and tail estimates are evaluated under the original cell-induced source--channel law.

Taking $\gamma=3\log(n+1)$ and $\eta=t\sqrt n$ exhibits a subset-rate loss of $t\sqrt n+O(\log n)$ and an exceptional probability at most $(1+\mathcal I_{C,n})/(t\sqrt n)+o(1)$. If the predictive deficit is uniformly $o(\sqrt n)$ on the relevant cells, a vanishing normalized threshold loss is possible. When a conditional type is available, Theorem~\ref{v5:screened} instead uses the conditional divergence $K_{j,n}$ and can be substantially sharper. A matching second-order characterization for arbitrary helper codes requires a uniform small-deficit estimate over code-induced cells and a common encoder-observable choice realizing the conditional means. Alternatively, Corollary~\ref{cor:partition-profile} permits regular component images whenever one source-measurable refinement label has negligible conditional information at every active section. Proposition~\ref{prop:structured-cells} demonstrates that this condition can hold when the whole-cell conditional divergence is linear. A general helper-code theorem still has to construct that partition uniformly, preserve the complete first-order means, and implement any header at the correct encoder. Equation~\eqref{eq:nested-rate-bound} isolates the prediction loss for the direct, unpartitioned route.

\subsection{General finite-alphabet both-lossy Heegard--Berger coding}\label{sec:lossy-hb-bounds}
Let $(X_i,Y_i)$ be iid on finite alphabets. The encoder observes $X^n$ and sends one common message. The uninformed decoder reconstructs within distortion $D_0$, and the informed decoder uses $Y^n$ to reconstruct within distortion $D_1$. Error is the probability that either block constraint fails. Encoder knowledge of $Y^n$ gives the distinct Kaspi setting~\cite{kaspi}. The Gaussian evaluation in Corollary~\ref{cor:hb-gaussian-rate} uses quadratic geometry; here the exact operational reduction is the policy-image inverse.

Use these two acceptance indicators in the reconstruction-policy image of Theorem~\ref{v5:policy-lp}. Denote its blocklength-$n$ value by $\mathsf F_n$ and its inverse by $\mathsf M_n(\varepsilon)$. For every feasible blocklength and $0<\delta_n<\varepsilon$,
\begin{equation}
 \log\mathsf M_n(\varepsilon)
 \le\log M_n^*(D_0,D_1,\varepsilon)
 \le\log\mathsf M_n(\varepsilon-\delta_n)
                  +\log\log(1/\delta_n)+O(1).
 \label{eq:hb-operational-inverse}
\end{equation}
The constant absorbs the integer ceiling. For any chosen first-order center $R$, put
\[
 L_n^{\rm frac}(\varepsilon)
 =\frac{\log\mathsf M_n(\varepsilon)-nR}{\sqrt n},\qquad
 L_n^*(\varepsilon)
 =\frac{\log M_n^*(D_0,D_1,\varepsilon)-nR}{\sqrt n}.
\]
With $\delta_n=n^{-2}$,
\begin{equation}
 L_n^{\rm frac}(\varepsilon)
 \le L_n^*(\varepsilon)
 \le L_n^{\rm frac}(\varepsilon-n^{-2})
                   +O\!\left(\frac{\log\log n}{\sqrt n}\right).
 \label{eq:hb-second-order-bracket}
\end{equation}
Taking $R$ to be the model's first-order optimal rate centers these inequalities at its coding boundary. The fractional images retain the complete block reconstruction policies and the overlap between acceptable sets. The remaining asymptotic question is whether their success profiles have a locally uniform, strictly crossing limit. Such a limit would identify the operational second-order coefficient. Comparisons with refined direct bounds such as~\cite{lili} use the same observation pattern and joint excess-distortion criterion.



\subsection{Nested images for two degraded observations}\label{sec:degraded-images}
Consider a common message and two physically degraded observations $X-Y_2-Y_1$. The encoder selects reproductions $U_0=h_0(X)$ and $U_1=h_1(X)$, which must agree with the two decoder outputs on success. A reference joint law $Q_{X,U_0,U_1}$ induces its observation marginals through the original degraded channel. Define
\begin{align}
 \imath_0&=\log\frac{q_{X\mid U_0}(X\mid U_0)}{p_X(X)}
           -\log\frac{q_{Y_1\mid U_0}(Y_1\mid U_0)}{p_{Y_1}(Y_1)},\nonumber\\
 \imath_1&=\log\frac{q_{X\mid U_0,U_1}(X\mid U_0,U_1)}{q_{X\mid U_0}(X\mid U_0)}
           -\log\frac{q_{Y_2\mid U_0,U_1}(Y_2\mid U_0,U_1)}{q_{Y_2\mid U_0}(Y_2\mid U_0)}.
 \label{eq:degraded-nested-score}
\end{align}
\begin{proposition}[Nested agreement-image test]\label{prop:degraded-image}
For every joint agreement event $E$ of an $M$-message code,
\begin{equation}
 \E[e^{\imath_0+\imath_1}\one_E]\le M,
 \qquad
 \Pr(E)\le\Pr\{\imath_0+\imath_1\le\log M+\gamma\}+e^{-\gamma}.
 \label{eq:degraded-image-bound}
\end{equation}
\end{proposition}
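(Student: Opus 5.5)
The plan follows the proof of Lemma~\ref{lem:agreed-image}: compute $e^{\imath_0+\imath_1}$ times the source--channel density, integrate over the successful source section for each fixed message and pair of observations, and then sum over messages. The tail bound then follows as before, by deleting successful points with $\imath_0+\imath_1>\log M+\gamma$. On that set $e^{\imath_0+\imath_1}>Me^{\gamma}$, so its probability is at most $\E[e^{\imath_0+\imath_1}\one_E]/(Me^\gamma)\le e^{-\gamma}$.

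\emph{Telescoping.} Write the joint density as $p_X(x)P(y_2\mid x)P(y_1\mid y_2)$, using physical degradedness. Adding the two scores gives
\[
 \imath_0+\imath_1=\log\frac{q_{X\mid U_0U_1}(x\mid u_0,u_1)}{p_X(x)}
 -\log\frac{q_{Y_2\mid U_0U_1}(y_2\mid u_0,u_1)}{q_{Y_2\mid U_0}(y_2\mid u_0)}
 -\log\frac{q_{Y_1\mid U_0}(y_1\mid u_0)}{p_{Y_1}(y_1)}.
\]
Multiplying by the joint density, the first two factors combine into $q_{X\mid U_0U_1,Y_2}(x\mid u_0,u_1,y_2)\,q_{Y_2\mid U_0}(y_2\mid u_0)$. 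This holds because the reference output is induced through the original channel, so $q_{X\mid u_0u_1}(x)P(y_2\mid x)=q_{Y_2\mid u_0u_1}(y_2)\,q_{X\mid u_0u_1,y_2}(x)$. The whole integrand is therefore
\[
 q_{X\mid u_0u_1,y_2}(x)\;q_{Y_2\mid u_0}(y_2)\,P(y_1\mid y_2)\;\frac{p_{Y_1}(y_1)}{q_{Y_1\mid u_0}(y_1)}.
\]
Since the reference is also degraded, $q_{Y_2\mid u_0}(y_2)P(y_1\mid y_2)=q_{Y_1\mid u_0}(y_1)\,q_{Y_2\mid u_0,y_1}(y_2)$. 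The integrand thus reduces to
\[
 p_{Y_1}(y_1)\,q_{Y_2\mid u_0,y_1}(y_2)\,q_{X\mid u_0u_1,y_2}(x).
\]

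\emph{Summation.} Fix the message $m$ and the pair $(y_1,y_2)$. On the agreement event, $u_0=g_0(m,y_1)$, and $u_1=g_1(m,y_2)$, with the stronger decoder using $(m,y_2)$. Both reproductions are therefore fixed functions of $(m,y_1,y_2)$ before $x$ is integrated. The successful $x$-section is a subset of a probability space, so the integral over $x$ of $q_{X\mid u_0u_1,y_2}$ is at most one. Next, $u_0$ depends only on $(m,y_1)$, and $q_{Y_2\mid u_0,y_1}$ is a normalized density in $y_2$, so the integral over $y_2$ is at most one. Integrating $p_{Y_1}$ gives one, and summing over the $M$ messages yields the bound $M$.

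\emph{Main obstacle.} The delicate point is the order of integration. $u_1$ may depend on $y_2$, so it must be frozen before the $x$-integral. $u_0$ must depend only on $(m,y_1)$, otherwise the $y_2$-integral of $q_{Y_2\mid u_0,y_1}$ would not be a normalized density. The agreement of $U_0$ with the weak decoder output, which is a function of $(m,y_1)$, is exactly what guarantees this. I would also check that the reference $Y_1$-marginal is induced through $P(y_1\mid y_2)$, as the definition of the reference says, and that all ratios are evaluated on the sampling support. Infinite or zero values on null sets should cause no trouble.
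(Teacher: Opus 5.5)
Your proof is correct and follows the paper's argument. You use the same factorization $p_{Y_1}(y_1)\,q_{X\mid U_0,U_1,Y_2}(x\mid u_0,u_1,y_2)\,q_{Y_2\mid U_0,Y_1}(y_2\mid u_0,y_1)$ from the channel-induced degraded reference, integrate over the successful $x$-section, then over $y_2$ and $y_1$, and sum over messages; you additionally write out the two Bayes steps and the Markov tail estimate that the paper leaves implicit.
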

\begin{proof}
Multiply the original density by the exponential score. The result is
\[
 p_{Y_1}(y_1)\,
 q_{X\mid U_0,U_1,Y_2}(x\mid u_0,u_1,y_2)\,
 q_{Y_2\mid U_0,Y_1}(y_2\mid u_0,y_1).
\]
For fixed $(m,y_1)$, agreement fixes $u_0$. For fixed $y_2$, agreement fixes $u_1$, so integrating over its successful source section gives at most one. The remaining conditional density of $Y_2$ integrates to one. Finally integrate over $Y_1$ and sum over $m$. The degraded reference channel is essential to this factorization.
\end{proof}

For two noisy Gaussian observations, evaluating this test still requires the cross moment of the two code-induced reconstruction residuals. Even if individual repairs give
$\|w_j\|^2/n\to D_j$ and $\langle X,w_j\rangle/n\to D_j$, Cauchy--Schwarz only restricts their limiting cross moment to
\begin{equation}
 \frac{D_0D_1}{p}\ \pm\
 \sqrt{\left(D_0-\frac{D_0^2}{p}\right)
       \left(D_1-\frac{D_1^2}{p}\right)}.\label{eq:degraded-cross-interval}
\end{equation}
A stronger decoder need not know the weaker decoder's selected reproduction. Hence the nested projection used in Appendix~\ref{app:cr-details} is not automatically implementable in this model. The missing estimate is a joint cell-geometry characterization under the original observations, not a separate marginal Gaussian approximation.

For ordinary successive refinement with $Y_j=X+Z_j$ and physical degradation $q_1>q_2$, giving $Y_1^n$ also to the encoder and stronger decoder gives an explicit outer image region. Subtract the shared conditional mean and use the independent innovation with parameters
\begin{equation}
 p_{\rm eff}=\frac{pq_1}{p+q_1},\qquad
 q_{\rm eff}=\frac{q_1q_2}{q_1-q_2}.\label{eq:degraded-genie}
\end{equation}
The stronger innovation observation is obtained from
$(Y_2-(q_2/q_1)Y_1)/(1-q_2/q_1)$ by subtracting $\E[X\mid Y_1]$.
Its added noise is independent of the source innovation and has variance $q_{\rm eff}$. If $0<a<p_{\rm eff}$ and $0<b<aq_{\rm eff}/(a+q_{\rm eff})$, Corollary~\ref{cor:sr-image} with these parameters bounds the original two-noisy-decoder region from outside. A matching original-code construction and converse need a nested, source-observable allocation for both noisy reconstructions.


\subsection{Reconciliation followed by extraction}\label{sec:reconcile-app}
Consider the prescribed conditional-type source of Section~\ref{sec:pa-app}, now with a joint channel $W_{YZ|X}$. Alice observes $X^n$, Bob observes $Y^n$, and the observer sees $Z^n$. The sequence $u^n$ is fixed and public. Alice sends one public message and both parties use the same fixed hash; the observer sees that message. No encoder sees either channel output. Let
\[
 H_Y(Q)=H_{QW}(X|Y,U),\quad H_Z(Q)=H_{QW}(X|Z,U),
\]
and let $V_Y(Q),V_Z(Q)$ be the two conditional variances in \eqref{eq:private-variance}, using the respective marginal channels. Assume the same fixed-support regularity and $H_Z(Q)>H_Y(Q)$.

\begin{corollary}[A direct public-rate/key-rate inner bound]\label{cor:reconciliation}
For $0<\varepsilon,\delta<1$, the following offsets are achievable by reconstruction followed by hashing:
\begin{align}
 \log M_n&=nH_Y(Q_n)+\sqrt n A+o(\sqrt n),\nonumber\\
 \log K_n&=n[H_Z(Q_n)-H_Y(Q_n)]+\sqrt n L+o(\sqrt n),\label{eq:reconcile-budgets}\\
 A&\ge\sqrt{V_Y(Q)}\,Q^{-1}(\varepsilon),\qquad
 A+L\le\sqrt{V_Z(Q)}\,\Phi^{-1}(\delta).\label{eq:reconcile-region}
\end{align}
The key-disagreement probability is at most $\varepsilon$ and the total variation of Alice's actual key from uniformity, independent of the observer and public message, is at most $\delta$. Favorable sub-$\sqrt n$ adjustments apply at boundary coefficients. The achievable region corresponds to public reconstruction followed by hashing.
\end{corollary}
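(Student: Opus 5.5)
We prove Corollary~\ref{cor:reconciliation} as a direct composition: a reconciliation code (reliability) followed by hashing, analysed with Lemma~\ref{lem:public-composition} (secrecy).

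\emph{Reconciliation.} Alice hashes $X^n$ with a random binning map into $M_n$ public bins. Bob, who knows $u^n$ and $Y^n$, decodes by maximizing the conditional likelihood $P(x^n\mid Y^n,u^n)$ within the received bin. The usual binning argument with list threshold $\gamma_n=3\log(n+1)$ bounds the error by
\[
 \Pr\{-\log P(X^n\mid Y^n,u^n)>\log M_n-\gamma_n\}+e^{-\gamma_n}.
\]
Applying identity~\eqref{eq:type-private-identity} with $Y$ in place of $Z$ gives $-\log P(X^n\mid Y^n,u^n)=nH_Y(Q_n)+\sqrt n G_Y+O_P(\log n)$, where $G_Y\Rightarrow\cN(0,V_Y(Q))$. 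The output likelihood correction is $O_P(\log n)$ for the same reason as in Theorem~\ref{thm:type-extraction}. Hence $A>\sqrt{V_Y(Q)}Q^{-1}(\varepsilon)$ makes the asymptotic error smaller than $\varepsilon$. A deterministic bin map is then fixed by averaging.

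\emph{Extraction.} Both parties apply a common fixed hash to Alice's $X^n$ and Bob's estimate, so the key disagreement probability is at most the decoding error. Lemma~\ref{lem:public-composition} is applied with $V=X^n$, $E=(Z^n,u^n)$ and $F$ the public bin index, which takes at most $M_n$ values. It bounds the total variation of Alice's actual key, jointly with $(F,Z^n,u^n)$, by
\[
 \Pr\{-\log P(X^n\mid Z^n,u^n)<\log M_n+\log K_n+\gamma\}+\tfrac12e^{-\gamma/2}.
\]
The budgets in \eqref{eq:reconcile-budgets} give
\[
 \log M_n+\log K_n=nH_Z(Q_n)+\sqrt n(A+L)+o(\sqrt n).
\]
Theorem~\ref{thm:type-extraction}'s spectrum gives $-\log P(X^n\mid Z^n,u^n)=nH_Z(Q_n)+\sqrt nG_Z+O_P(\log n)$, where $G_Z\Rightarrow\cN(0,V_Z(Q))$. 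With $\gamma=3\log(n+1)$, the bound therefore tends to $\Phi\bigl((A+L)/\sqrt{V_Z(Q)}\bigr)$. This limit is strictly below $\delta$ when $A+L<\sqrt{V_Z(Q)}\Phi^{-1}(\delta)$. If a variance vanishes, Chebyshev's inequality with margins $\pm\eta$ replaces the Gaussian limit.

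\emph{Simultaneous choice and boundary cases.} The bin map and the hash must be fixed simultaneously. The secrecy bound of Lemma~\ref{lem:public-composition} holds for \emph{every} deterministic $f$, so we first fix a good bin map and then choose a hash member for that fixed $f$; this ordering is the only coupling step needing care. Boundary equalities in \eqref{eq:reconcile-region} follow from strict offsets together with the favorable $o(\sqrt n)$ diagonal adjustment. The assumption $H_Z>H_Y$ only ensures a positive first-order key rate.

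The main obstacle I anticipate is the $O_P(\log n)$ control of the conditional-type likelihood correction for the marginal channels $W_{Y|X}$ and $W_{Z|X}$, including channel zeros. This is handled exactly as in the proof of Theorem~\ref{thm:type-extraction}, using the fixed-support regularity.
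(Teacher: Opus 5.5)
Your proposal is correct and follows essentially the same route as the paper. Both use random binning with maximum-posterior decoding, fixed by averaging and controlled through the conditional-type identity for Bob's output. Both then apply Lemma~\ref{lem:public-composition} to that fixed bin map with the observer's spectrum, use a logarithmic $\gamma$, and take closed-coefficient limits at the boundary.
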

\begin{proof}
For an arbitrary finite pair $(V,Y)$, independent uniform bin labels and maximum-posterior decoding give a deterministic reconciliation map with error at most
\[
 \Pr\{-\log P_{V|Y}(V|Y)>\log M-\gamma\}+e^{-\gamma}.
\]
Indeed, for fixed $(v,y)$, at most $1/P(v|y)$ source words have conditional probability at least $P(v|y)$. A collision with any such competing word has probability $1/M$. On the complementary information event their total collision probability is at most $e^{-\gamma}$. Average and fix one map.

Apply this statement to $(X^n,Y^n)$ with public $u^n$. The conditional-image identity \eqref{eq:type-private-identity}, now for Bob's output, gives the first Gaussian inequality in \eqref{eq:reconcile-region}. For the chosen reconciliation map, Lemma~\ref{lem:public-composition} applies uniformly without any assumption on its public-bin masses. Applying \eqref{eq:type-private-identity} for the observer gives the second inequality. Fix a hash attaining the variation bound. If Bob reconstructs $X^n$ correctly, his hash agrees with Alice's, so hashing does not increase the disagreement probability. The secrecy calculation includes every outcome, not only successful reconstruction. Choose logarithmic $\gamma$ and take the closed coefficient limit.
\end{proof}
Reliability and secrecy are measured separately, which gives the two spectral inequalities in \eqref{eq:reconcile-region}. Establishing tightness for the reliable-key task requires a converse that also accounts for source preprocessing and the full public transcript.


\section{Discussion and Conclusion}\label{sec:discussion}
\subsection{What the image must retain}
\edit{The results distinguish an image bound from the code that realizes it. A section-cardinality constraint records what a decoder can distinguish after its complement is fixed. A reconstruction-dependent reference records the actual reproduction assigned to a source cell. A private image records the probability mass remaining after public information. The corresponding second-order expressions differ because these operational objects permit different choices before the residual experiment.}

\edit{Recoverable partitions enlarge the class of cells that can be analyzed without requiring the whole cell to resemble a single product law. The structured-source example makes the distinction explicit: both global deficits grow linearly, yet the component label is recoverable from the observation. The finite-block merging bound charges the missing conditional information and preserves the component spectrum. A total-image constraint still pays for the label when no complement is present.}

\subsection{Conclusions and remaining image constraints}
\edit{Cell-dependent Gaussian references retain two active distortion deficits in one likelihood. The paired converse comparisons cancel their common analytical costs and lead to the same source-observable profile as the shell construction. This yields the stated Heegard--Berger, common-reconstruction, and successive-refinement results under their respective joint error criteria. Posterior-mass images give a parallel refinement for extraction: the relevant public cost is the label's conditional information, not necessarily the logarithm of its alphabet.}

\edit{The nonmatching models identify the limits of these mechanisms. Arbitrary helper codes require regular components with compatible encoder operations; two noisy reconstruction stages require joint residual geometry for one admissible description; and reliable-key generation requires one partition satisfying both reliability and secrecy. Establishing those properties, rather than substituting an independently optimized dispersion for each constraint, is the remaining operational step.}


\appendices
\renewcommand{\thetheorem}{\thesection.\arabic{theorem}}
\renewcommand{\theequation}{\thesection.\arabic{equation}}
\makeatletter
\@addtoreset{theorem}{section}
\@addtoreset{equation}{section}
\makeatother
\renewcommand{\theHtheorem}{appendix.\thesection.\arabic{theorem}}
\renewcommand{\theHequation}{appendix.\thesection.\arabic{equation}}

\section{Posterior Localization and Helper Implementation}\label{app:posterior}
\subsection{The simultaneous image bound}
\begin{proof}[Proof of Theorem~\ref{comp:hi:lem:vector-transfer}]
We give the posterior localization argument because it is the step
that permits additional marks to be appended to the scalar theorem.
Write
\begin{equation}
 \begin{aligned}
 r(q,a)&=\log\frac{q(a)}{p(a)}+cm(q,a)-j(a)-d(q),\\
 D_n&=\sum_i d(q_i),\qquad R_n=\sum_i r(q_i,A_i),\\
 N_n&=\frac1{\sqrt n}\sum_i
       \big[\boldsymbol\psi(q_i,B_i)-\E[\boldsymbol\psi(q_i,B_i)|A_i,q_i]\big].
 \end{aligned}
 \label{comp:hi:eq:posterior-sums}
\end{equation}
The source innovation has mean zero under $q$, and it vanishes on
$\mathcal C$.  Since $\prod_iq_i(A_i)=|\mathcal A_n|^{-1}$ and the
source type is $p_n$, the aggregate score satisfies the exact identity
\begin{equation}
 \begin{aligned}
 c\sum_i h(q_i,B_i)&=\log|\mathcal A_n|
       +n\langle j+\log p,p_n\rangle\\
 &\quad+D_n+R_n+\sqrt nN_{0,n}.
 \end{aligned}
 \label{comp:hi:eq:aggregate-identity}
\end{equation}
Subtracting $c\sum_it_{C_S}(B_i)$ gives each weighted row, with
exactly the same $D_n$ and $R_n$.  Every ancillary row has deterministic
mean $n\langle m_S,p_n\rangle$.  Apply the ordinary total-cardinality
spectrum and Lemmas~\ref{comp:hi:lem:ancillary} and
\ref{comp:hi:lem:weighted-spectrum} before any conditioning on success.
With $\gamma_n=3\log(n+1)$, they imply that success, outside a
probability of at most
$(2+|\mathscr A|+|\mathscr W|)e^{-\gamma_n}$, lies in
\begin{equation}
 \begin{cases}
 N_{0,n}+(D_n+R_n)/\sqrt n\le s_{0,n}+o(1),\\
 N_{S,n}^{\rm a}\le s_{S,n}^{\rm a}+o(1),&S\in\mathscr A,\\
 N_{S,n}^{\rm w}+(D_n+R_n)/\sqrt n
                      \le s_{S,n}^{\rm w}+o(1),&S\in\mathscr W.
 \end{cases}
 \label{comp:hi:eq:original-event}
\end{equation}

First control the source residual on low-gap histories.  The function
$v_r(q)=\sum_aq(a)r(q,a)^2$ is bounded and continuous, including on
simplex faces because $t\log^2t\to0$.  It vanishes on $\mathcal C$.
Retain innovation $i$ while $\sum_{j<i}d(q_j)\le K\sqrt n$; this
decision is measurable from $A^{i-1}$.  If $R_n^{(K)}$ is the retained
sum, source martingale orthogonality gives, for every $u>0$,
\begin{equation}
 \begin{gathered}
 \frac1n\E[(R_n^{(K)})^2]
 \le\omega(u)+\frac{C(K\sqrt n+d_{\max})}{nu},\\
 \omega(u)=\sup_{d(q)\le u}v_r(q)\downarrow0.
 \end{gathered}
 \label{comp:hi:eq:stopped-residual}
\end{equation}
Indeed, retained indices with $d(q_i)>u$ number at most the retained
gap divided by $u$; all other indices contribute at most $\omega(u)$
each.  Letting $n\to\infty$ and then $u\downarrow0$ proves
$R_n=o_{\PP}(\sqrt n)$ on $\{D_n\le K\sqrt n\}$, under the
original source law.

Next retain every first-order moment needed to identify the fiber.
Set $\widehat\mu_n=n^{-1}\sum_i\delta_{q_i}$.  Source martingale
differences give
\begin{equation}
 \begin{aligned}
 \int q\,d\widehat\mu_n-p_n&=o_{\PP}(1),\\
 \int H(q)\,d\widehat\mu_n-n^{-1}\log|\mathcal A_n|
      &=o_{\PP}(1),\\
 \Cov(N_n|A^n)-\int V(q)\,d\widehat\mu_n&=o_{\PP}(1).
 \end{aligned}
 \label{comp:hi:eq:full-moments}
\end{equation}
Each error has second moment $O(n^{-1})$.  For the second line, subtract
$H(q_i)$ from $-\log q_i(A_i)$ and use the exact source chain rule.
For the last line, subtract the $q_i$-average of each bounded covariance
entry from its value at $A_i$.  These are predictable source
martingale differences.  On $D_n\le K\sqrt n$,
$\int d\,d\widehat\mu_n\le K/\sqrt n$.  Any weak subsequential
limit of the empirical measures therefore has barycenter $p$, entropy
$a_0$, and support on $\mathcal C$.  It belongs to
$\mathcal M_{\boldsymbol r}$.  Compactness makes this conclusion
uniform, and hence
\begin{equation}
 \operatorname{dist}\left(\Cov(N_n|A^n),
       \{\Sigma_\mu:\mu\in\mathcal M_{\boldsymbol r}\}\right)
       =o_{\PP}(1)
 \label{comp:hi:eq:cov-localization}
\end{equation}
on the low-gap event.  Keeping only the weighted optimality condition
would not yield the entropy restriction in this conclusion.

Condition now on the entire $A^n$.  The $q_i$ are fixed and the
coordinates $B_i$ are independent.  The centered vector increments in
\eqref{comp:hi:eq:posterior-sums} are uniformly bounded, so their
characteristic-function expansion has a uniform $O(n^{-1/2})$ cubic
remainder at each fixed argument.  Compactness of the covariance family
then gives a uniform conditional multivariate central limit theorem.
If the limiting marginal variances are positive, every boundary of a
lower orthant has zero probability, even when the joint covariance is
singular.  Gaussian orthant probabilities are consequently continuous
uniformly on the compact covariance and threshold windows.  If a
marginal can vanish, apply the comparison first with a positive
threshold enlargement and a continuous approximation to the orthant;
send the enlargement to zero only in the closed coefficient region.

On low-gap histories, remove $R_n$ using
\eqref{comp:hi:eq:stopped-residual} and drop the nonnegative $D_n$ in all
rows in which it occurs.  This enlarges the joint lower-orthant event.
Equations \eqref{comp:hi:eq:cov-localization} and the conditional Gaussian
comparison give the right side of \eqref{comp:hi:eq:vector-image}.
Finally, $\E R_n^2\le Cn$ and $\E N_{0,n}^2\le C$.  If
$|s_{0,n}|\le B$, the high-gap part of the aggregate constraint with
$D_n>K\sqrt n$ has probability at most $C/(K-B-1)^2$.  Let $n$ tend
to infinity at fixed $K$, remove the small threshold enlargement, and
then let $K\to\infty$.  This proves the bound and its asserted
uniformity.  No Gaussian comparison has been conditioned on the
terminal success event.
\end{proof}
\subsection{Observable helper branches}
\begin{lemma}[Local helper implementation and weighted likelihood matching]
\label{comp:hi:lem:implementation}
Fix finitely many finite auxiliaries with posterior measures in
$\mathcal M_{\boldsymbol r}$.  On any bounded window
$z_n=\sqrt n(p_n-p)$, each auxiliary admits conditional types $T_n$
satisfying
\begin{equation}
 T_n=T+O(n^{-1/2}),\qquad
 I(p_n,T_n)=r_h+\frac{L_h}{\sqrt n}+O(n^{-1}).
 \label{comp:hi:eq:helper-prescription}
\end{equation}
The helper can select the auxiliary from its observed type, cover the
type class by compatible $u^n$, and describe that choice using
$nr_h+\sqrt nL_h+O(\log n)$ nats.  Under
$Q_n=p_nT_nW$,
\begin{equation}
 cnH_{Q_n}(B|U)=ncr_\Sigma+
       \sqrt n\,[\langle j,z_n\rangle-L_h]+O(1).
 \label{comp:hi:eq:aggregate-mean-ach}
\end{equation}
If $U\perp C$ at the limiting auxiliary, then for every compatible
$(a^n,u^n)$,
\begin{equation}
 \sum_i\log\frac{Q_n(C_i|u_i)}{P_C(C_i)}
                  =o_{\PP}(\sqrt n)
 \label{comp:hi:eq:weighted-matching}
\end{equation}
under the original channel $W^n(\cdot|a^n)$, uniformly on the window.
\end{lemma}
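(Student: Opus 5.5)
The plan is to prove the four assertions in order: the prescription \eqref{comp:hi:eq:helper-prescription}, then the covering and description cost, then the aggregate mean \eqref{comp:hi:eq:aggregate-mean-ach}, then the weighted matching \eqref{comp:hi:eq:weighted-matching}. Throughout, a single finite auxiliary $T$ with posterior measure $\mu_T\in\mathcal M_{\boldsymbol r}$ is fixed. Finiteness of the family then gives uniformity, since every estimate below has constants depending only on $T$ and the window bound.

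\emph{Step 1 (prescription).} Write $I(p',T')$ for the mutual information of the input $p'$ through $T'$. Keep $T$ fixed first. Smoothness on the support of $T$ gives
\[
 I(p_n,T)=r_h+\langle g,z_n\rangle/\sqrt n+O(n^{-1}),
 \qquad g(a)=D(T(\cdot|a)\Vert pT)-r_h .
\]
To correct the offset to $L_h$, pick a direction $\Delta$ that keeps rows stochastic and supports inside $\supp T$, and along which the derivative of $I$ is nonzero. Mixing $T$ toward its output marginal $pT$ is one such direction; it works because $r_h>0$ forces $T$ to be nonconstant in $a$. Then:
\begin{enumerate}
\item The implicit function theorem gives a scalar $\theta_n=O(n^{-1/2})$ such that $I(p_n,T+\theta_n\Delta)=r_h+L_h/\sqrt n$ exactly.
\item Round $T+\theta_n\Delta$ to a conditional $n$-type compatible with $p_n$, keeping the same support. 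This moves each entry by $O(n^{-1})$.
\item Since $I$ is Lipschitz in $T$ on that support, the prescription error is $O(n^{-1})$. Alternatively, absorb it by a second one-dimensional adjustment on the type lattice.
\end{enumerate}
I expect this step to be the main technical obstacle: maintaining lattice compatibility, a fixed support, and a nondegenerate derivative uniformly on the window. I would handle it with the explicit mixing direction and a compactness argument.

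\emph{Step 2 (covering).} Apply the standard type-covering lemma. The class $T_n(p_n)$ is covered by $\exp\{nI(p_n,T_n)+O(\log n)\}$ sequences $u^n$, each having joint type $p_nT_n$ with the covered $a^n$. The helper observes the type of $a^n$ and hence the branch. Describing the type, the branch and the covering index costs $nr_h+\sqrt nL_h+O(\log n)$ nats.

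\emph{Step 3 (aggregate mean).} Let $\mu_n$ be the posterior measure of $(p_n,T_n)$. Use $I(p_n,T_n)=\int D(q\Vert p)\,d\mu_n-D(p_n\Vert p)$ and the definition $d=F_c-\langle\cdot,j\rangle$. These give the exact identity
\[
 I(p_n,T_n)+cH_{Q_n}(B|U)=\langle p_n,j\rangle-D(p_n\Vert p)+\int d\,d\mu_n .
\]
The posteriors of $T_n$ lie within $O(n^{-1/2})$ of contacts, which are interior by Lemma~\ref{comp:hi:thm:input}. At a contact $d$ is smooth and attains its minimum $0$, so $d$ vanishes quadratically there. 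Hence $\int d\,d\mu_n=O(n^{-1})$, and also $D(p_n\Vert p)=O(n^{-1})$. Now substitute $\langle p,j\rangle=\phi_c=r_h+cr_\Sigma$ and the prescription from Step 1, and multiply by $n$. This yields \eqref{comp:hi:eq:aggregate-mean-ach} with an $O(1)$ remainder.

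\emph{Step 4 (weighted matching).} The hypothesis $U\perp C$ at the limit means $q_uW_C=pW_C=P_C$ for each posterior $q_u$ of $T$. Under $T_n$ the posteriors move by $O(n^{-1/2})$, so $\log\bigl(Q_n(c|u)/P_C(c)\bigr)=O(n^{-1/2})$ uniformly. Fix a compatible $(a^n,u^n)$. The joint type of $(u^n,a^n)$ is exactly $p_nT_n$, so the conditional mean of the sum in \eqref{comp:hi:eq:weighted-matching} is
\[
 n\sum_uQ_n(u)\,D\bigl(Q_n(\cdot|u)\Vert P_C\bigr)=O(1),
\]
because relative entropy between laws $O(n^{-1/2})$ apart is $O(n^{-1})$. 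The summands are independent given $(a^n,u^n)$, each of size $O(n^{-1/2})$, so the conditional variance is $O(1)$. Chebyshev's inequality then gives $O_{\PP}(1)$, which is stronger than $o_{\PP}(\sqrt n)$, uniformly on the window.
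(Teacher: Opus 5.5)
Your proposal is correct and follows the paper's proof step for step. It uses the same four ingredients: an implicit-function prescription with $O(n^{-1})$ row rounding; type covering with $O(\log n)$ headers; a first-order expansion around the interior contacts for \eqref{comp:hi:eq:aggregate-mean-ach}; and, for \eqref{comp:hi:eq:weighted-matching}, the uniform $O(n^{-1/2})$ bound on $\log[Q_n(c|u)/P_C(c)]$, the mean $n\sum_uQ_n(u)D(Q_n(\cdot|u)\Vert P_C)=O(1)$, conditional variance $O(1)$, and Chebyshev.

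There are two local variations. In Step 1 you exhibit an explicit nondegenerate direction, mixing toward $P_U$, whose derivative is $-r_h-\sum_ap(a)D(P_U\Vert T(\cdot|a))<0$; the paper instead argues existence by contradiction. You should state that interior contacts make every active entry of $T$ positive, since that is what keeps your mixing direction inside $\supp T$. In Step 3 you use the exact identity $I(p_n,T_n)+cH_{Q_n}(B|U)=\langle p_n,j\rangle-D(p_n\Vert p)+\int d\,d\mu_n$ together with quadratic vanishing of $d$ at contacts. The paper instead differentiates the fixed-channel objective, which amounts to the same computation.
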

\begin{proof}
All active entries of $T$ are positive, by
Lemma~\ref{comp:hi:thm:input} and Bayes' rule.  There is a row-sum-zero
direction $H$ with $D_TI(p,T)[H]\ne0$.  Otherwise, transferring mass
between any two active entries of each row would force
$\log[T(u|a)/P_U(u)]$ to be constant in $u$ for each $a$.  Row
normalization would then imply $T(u|a)=P_U(u)$ and $I(U;A)=0$,
contradicting \eqref{comp:hi:eq:informative}.  The implicit function theorem
along $H$ sets the mutual information in
\eqref{comp:hi:eq:helper-prescription}; row-wise rounding changes a positive
channel by $O(n^{-1})$ and its mutual information by $O(n^{-1})$.
These estimates are uniform over the fixed finite auxiliary family.

For the resulting joint type, a random auxiliary word of its marginal
type is compatible with a fixed $a^n$ with probability at least
$\exp[-nI(p_n,T_n)-C\log(n+1)]$.  Draw sufficiently many such words
that this probability times the number of draws exceeds
$\log|T_n(p_n)|+1$.  The union bound leaves positive probability that
every source word is covered.  Thus a deterministic cover of size at
most $\exp[nI(p_n,T_n)+C'\log(n+1)]$ exists.  A type header costs
$O(\log n)$, and a finite branch index costs $O(1)$.

On this positive face, ordinary differentiation of the fixed-channel
objective gives
\begin{equation}
 I(p_n,T_n)+cH_{p_nT_nW}(B|U)
 =\phi_c+n^{-1/2}\langle j,z_n\rangle+O(n^{-1}).
 \label{comp:hi:eq:fixed-objective-expansion}
\end{equation}
Its source tangent derivative is $j$ by
\eqref{comp:hi:eq:contact}, and its channel derivative vanishes at an
interior optimizer.  This calculation differentiates the fixed
positive test channel, not an optimized value in a neighborhood.
Subtracting \eqref{comp:hi:eq:helper-prescription} proves
\eqref{comp:hi:eq:aggregate-mean-ach}.

Finally, set $\rho_n(u,c)=\log[Q_n(c|u)/P_C(c)]$.  Independence at
the limiting law and strict positivity imply
$\sup_{u,c}|\rho_n(u,c)|=O(n^{-1/2})$.  For a compatible pair,
the conditional expectation of its sum is exactly
\begin{equation}
 n\E_{Q_n}\rho_n(U,C)
   =n\{I_{Q_n}(U;C)+D(Q_{n,C}\Vert P_C)\}=O(1).
 \label{comp:hi:eq:weighted-correction-mean}
\end{equation}
The last equality follows from the quadratic local expansion of
relative entropy at a positive independent law.  Conditional on
$(a^n,u^n)$, the summands are independent and their total variance is
$O(1)$.  Chebyshev's inequality proves
\eqref{comp:hi:eq:weighted-matching}.  This is stronger than equality of
first-order entropies: it controls the actual conditional likelihood
used by the decoder at order $\sqrt n$.
\end{proof}



\section{Proof of the Gaussian Reconstruction-Constrained Image Theorem}\label{app:hb-image}
This appendix proves Theorem~\ref{thm:gaussian-distortion-image}. All constants may depend on the fixed interior parameters $(p,q,a,b)$, but not on the source realization. The Gaussian posterior regularization and sampled-description device follow the construction of Xu and Zhou~\cite[Apps.~E--G]{xuzhou}. The cell-dependent likelihood and its coarse-distortion accounting are derived below; these are needed in addition to their scalar Wyner--Ziv theorem.

\subsection{Exact image identity and eventwise localization}
Fix a sequence of labelled images with
$\log M_n\le nR_{a,b}+\sqrt nL+o(\sqrt n)$, and first clip both reproductions coordinatewise to $[-h_n,h_n]$, where $h_n=\sqrt{8p\log n}$. On $\|X^n\|_\infty\le h_n$ clipping cannot increase either loss, and the omitted source mass tends to zero. \edit{Write $E_n$ for the success event of this clipped image. Its probability bounds that of the original code up to $o(1)$. It is enough to consider a subsequence with $\Pr(E_n)\ge c_0>0$.} Let $u=u(M)$ and $v=v(M,Y^n)$ denote the clipped reproductions.

The densities in the proof of Theorem~\ref{thm:gaussian-distortion-image} give
\begin{align}
 g(x,u,v,y)
 &:=\log\frac{\phi_{bI_n}(x-v)}{\phi_{pI_n}(x)}
       +\log\frac{\phi_{(a+q)I_n}(y-u)}{\phi_{qI_n}(y-x)}\nonumber\\
 &=nR_{a,b}+\frac{\|x\|^2}{2p}-\frac{\|x-v\|^2}{2b}
       -\frac{\|y-u\|^2}{2(a+q)}+\frac{\|y-x\|^2}{2q}.
 \label{eq:hb-g-expanded}
\end{align}
Writing $w=X^n-u$ and $Y^n=X^n+Z^n$ proves \eqref{eq:hb-exact-identity}, including all constants. Conditionally on $X^n$, the noise term $N_{0,n}$ in that identity has mean zero and variance
\begin{equation}
 \frac{q\|w\|^2}{(a+q)^2}+\frac{na^2}{2(a+q)^2}.\label{eq:hb-N0-variance}
\end{equation}
On the coarse distortion event this is at most $C n$. The source score $A_n$ is tight under the original law. Lemma~\ref{lem:cell-reference} bounds $H_r$ above and $Z_s$ below by using $\gamma=3\log(n+1)$. Both deficits in \eqref{eq:hb-exact-identity} are nonnegative on $E_n$. Therefore, in the eventwise sense,
\begin{equation}
 \one_{E_n}(na-\|w\|^2)=O_P(\sqrt n),\qquad
 \one_{E_n}(nb-\|X^n-v\|^2)=O_P(\sqrt n).\label{eq:hb-event-gaps}
\end{equation}
Precisely, for either deficit divided by $\sqrt n$, the probability of exceeding $K$ jointly with $E_n$ tends to zero as $K\to\infty$, uniformly in sufficiently large $n$. No conditional independence after conditioning on $E_n$ is invoked.

Let $s_n(x)=\Pr(E_n\mid X^n=x)$ and define the source-only flag
\begin{equation}
 F_n(x)=\one\{s_n(x)\ge n^{-1},\ \|x\|^2\le2pn,
                      \ \|x\|_\infty\le h_n\}.\label{eq:hb-flag}
\end{equation}
Since a source failing the coarse distortion has $s_n(x)=0$, the flag implies $\|x-u(f(x))\|^2\le na$. Gaussian tail bounds give
$\Pr(E_n,F_n=0)=o(1)$. Write $\beta_n=\Pr(F_n=1)$, which is bounded away from zero on the chosen subsequence, and $\widetilde P=P(\cdot\mid F_n=1)$. Under $\widetilde P$, the channel $Y^n\mid X^n$ is still $\cN(X^n,qI_n)$.

Append the flag to the message, $M_0=(M,F_n)$. Let $\Pi$ be an independent uniform coordinate permutation, set $k=\lceil n^{3/4}\rceil$, and quantize $X_{\Pi(1)},\ldots,X_{\Pi(k)}$ into intervals of width $\Delta_n=n^{-2}$, with two overflow labels. Denote this sampled description by $V^{\rm pre}$ and put $W=(M_0,V^{\rm pre})$. Conditional on $\Pi$, its alphabet satisfies
\begin{equation}
 \log|\mathcal W_\Pi|\le\log(2M_n)+O(k\log n).\label{eq:hb-prefix-alphabet}
\end{equation}
The source-only flag remains a coordinate of $W$, so posteriors on its retained cells agree under $P$ and $\widetilde P$. \edit{Conditional expectations given a retained $(X^n,\Pi)$ also agree under these two laws. This permits the centered likelihood estimate under $\widetilde P$ to be used in the original-law reduction of Lemma~\ref{lem:matched-cost}.}

\subsection{A reconstruction repair with a controlled cost}
We give the quantitative estimate used for posterior regularization. For a retained $x$, let
$\mathcal A_x=\{y:\|x-v(f(x),y)\|^2\le nb\}$.
Its probability under $\cN(x,qI_n)$ is at least $n^{-1}$. The Gaussian reverse H\"older inequality~\cite{chen} implies that for correlated standard Gaussians $(G_0,G_1)$ with cross covariance $\vartheta I_n$,
\[
 \Pr(G_0\in B,G_1\in A)\ge
       [\Gamma_n(B)\Gamma_n(A)]^{1/(1-\vartheta)}.
\]
Consequently, for the set where $\Pr(G_1\in A\mid G_0)<e^{-u}$,
\begin{equation}
 \Gamma_n(B)\le\Gamma_n(A)^{-1/\vartheta}
                       e^{-u(1-\vartheta)/\vartheta}.\label{eq:hb-reverse-tail}
\end{equation}
Choose $\delta_n=\sqrt{\log n/n}$, $\vartheta_n=1-\delta_n$, and
$u_n=(1+3\vartheta_n)\log n/\delta_n$.
For the ideal perturbation
$Y^*=\vartheta_nY+\delta_n x+\sqrt{q(1-\vartheta_n^2)}G'$, the conditional chance of belonging to $\mathcal A_x$ is at least $e^{-u_n}$ except on an observation event of probability $n^{-3}$.

The decoder can generate instead
$Y^{(j)}=\vartheta_nY+\sqrt{q(1-\vartheta_n^2)}G_j$.
For two Gaussians with common covariance $\sigma^2I$ and mean shift $v$, Cauchy--Schwarz gives
$P_0(A)\ge P_*(A)^2e^{-\|v\|^2/\sigma^2}$.
Since $\|x\|^2\le2pn$, each candidate hits $\mathcal A_x$ with conditional probability at least $e^{-c_n}$ outside the preceding exceptional event, where
\[
 c_n=2u_n+\frac{2pn\delta_n}{q(1+\vartheta_n)}
       =O(\sqrt{n\log n}).
\]
A list of $N_n=\lceil ne^{c_n}\rceil$ independent candidates therefore fails with probability at most $n^{-3}+e^{-n}$.

Clip all candidate reconstructions to $[-h_n,h_n]$. Their coordinate losses lie in $[0,4h_n^2]$, and the list is independent of $\Pi$. Sampling without replacement~\cite{hoeffding} and a union bound over the list give
\begin{equation}
 \widetilde P\left\{\max_{j\le N_n}
  \left|\frac1k\sum_{i\le k}(X_{\Pi(i)}-\widehat X_{\Pi(i)}^{(j)})^2
        -\frac1n\|X^n-\widehat X^{(j)}\|^2\right|>n^{-1/16}\right\}
 \le 2N_n\exp\left\{-\frac{k n^{-1/8}}{8h_n^4}\right\}=o(n^{-3}).
 \label{eq:hb-list-sampling}
\end{equation}
The exponent $n^{5/8}/(\log n)^2$ dominates $\log N_n$.
Select the candidate of smallest sampled loss using the quantized prefix. Quantization changes each sample loss by at most $4h_n\Delta_n+\Delta_n^2$. On list and sampling success its full-block distortion is at most
$b+2n^{-1/16}+8h_n\Delta_n+2\Delta_n^2$.
The exceptional loss is at most $4h_n^2$. Hence the resulting reconstruction, formed from $(W,Y^n,\Pi)$ and independent list randomness, satisfies
\begin{equation}
 \E_{\widetilde P}\|X^n-\widehat X^{\rm rep}\|^2
       \le nb+O(n^{15/16}).\label{eq:hb-repair}
\end{equation}
Independent private list randomness cannot improve on the posterior mean in squared error, so the same bound holds for the posterior mean error.

\subsection{The coarse deficit and posterior entropy}
Let $m_0=\E_{\widetilde P}[X^n\mid W,Y^n,\Pi]$ and
\[
 K_n(0)=\E_{\widetilde P}D(P_{X^n\mid W,Y^n,\Pi}\|\cN(m_0,bI_n)).
\]
Use the cell image reference $r_u=\phi_{(a+q)I_n}(\cdot-u)$ for each augmented cell and always normalize its likelihood under $P$. The posterior on a retained cell is unchanged by the source-only conditioning. Taking expectation in \eqref{eq:hb-exact-identity}, now with posterior reference centered at $m_0$, yields the exact relation
\begin{align}
 K_n(0)+\frac{\E_{\widetilde P}(na-\|X^n-u\|^2)}{2(a+q)}
 ={}&\E_{\widetilde P}H_{r,W}-nR_{a,b}
       -\E_{\widetilde P}\frac{\|X^n\|^2-np}{2p}\nonumber\\
 &+\frac{\E_{\widetilde P}\|X^n-m_0\|^2-nb}{2b}.
 \label{eq:hb-K-identity}
\end{align}
The centered channel term has expectation zero because the flag and $u$ are source-measurable. For each fixed permutation, the image-mass inequality is evaluated under $P$. Restriction to the source-only flag gives
$\E_{\widetilde P}e^{H_{r,W}}\le|\mathcal W_\Pi|/\beta_n$.
Jensen's inequality therefore applies without renormalizing the reference densities and gives
\[
 \E_{\widetilde P}H_{r,W}
 \le\log|\mathcal W_\Pi|+\log(1/\beta_n).
\]
The energy expectation in \eqref{eq:hb-K-identity} is $O(\sqrt n)$ by Cauchy--Schwarz and $\beta_n\ge c>0$. Equations~\eqref{eq:hb-prefix-alphabet} and \eqref{eq:hb-repair} imply
\begin{equation}
 K_n(0)=O(n^{15/16}),\qquad
 \E_{\widetilde P}(na-\|X^n-u\|^2)=O(n^{15/16}).\label{eq:hb-K-bounds}
\end{equation}
Both quantities on the left of \eqref{eq:hb-K-identity} are nonnegative. This is the additional entropy control supplied by the coarse-cell reference.

\subsection{A vanishing Gaussian enhancement}
Add an independent observation of strength $t\ge0$ to $Y^n$. Let $m_t,V_t$ be the posterior mean and covariance given $W$, the observations, and $\Pi$. Set $b_t=(b^{-1}+t)^{-1}$ and
$K_n(t)=\E_{\widetilde P}D(P_{X^n\mid W,\mathrm{obs},\Pi}\|\cN(m_t,b_tI_n))$.
Conditional I--MMSE and its derivative~\cite{immse,payaro} give
\begin{equation}
 K_n'(t)=-\frac1{2b_t}\E_{\widetilde P}\|V_t-b_tI_n\|_{\rm F}^2.\label{eq:hb-K-derivative}
\end{equation}
Indeed, differentiate
$K_n=\frac n2\log(2\pi b_t)+\E\tr V_t/(2b_t)-h(X^n\mid W,\mathrm{obs},\Pi)$,
use $b_t'=-b_t^2$, $h'=-\E\tr V_t/2$, and
$(\E\tr V_t)'=-\E\tr V_t^2$, and complete the square.
The retained prior is bounded, justifying differentiation and averaging.
Integrating over $[0,n^{-1/32}]$ gives a deterministic
$\nu_n\in(0,n^{-1/32}]$ such that
\begin{equation}
 K_n(\nu_n)=o(n),\qquad
 \E_{\widetilde P}\|V_{\nu_n}-b_{\nu_n}I_n\|_{\rm F}^2
       =O(n^{31/32})=o(n).\label{eq:hb-posterior-flat}
\end{equation}
\edit{Indeed, $K_n$ is nonnegative and nonincreasing, so the integral of the covariance defect over an interval of length $n^{-1/32}$ is at most $2bK_n(0)=O(n^{15/16})$. Its average is $O(n^{31/32})$. Choose a deterministic point attaining at most twice that average. The enhancement can depend on the code sequence, but not on the realized source, permutation, or noise.}

Write $\nu=\nu_n$,
$q_\nu=(q^{-1}+\nu)^{-1}$, $b_\nu=(b^{-1}+\nu)^{-1}$,
and let $A^{\rm ext}=\sqrt\nu X^n+G^{\rm ext}$.
The sufficient enhanced observation is
\[
 Y_\nu=q_\nu(q^{-1}Y^n+\sqrt\nu A^{\rm ext})=X^n+Z_\nu,
 \qquad Z_\nu\sim\cN(0,q_\nu I_n).
\]
\edit{The complementary observation is
\[
 V^{\rm anc}=\frac{G^{\rm ext}-\sqrt\nu Z^n}{\sqrt{1+\nu q}},
 \qquad Y^n=Y_\nu-\frac{q\sqrt\nu}{\sqrt{1+\nu q}}V^{\rm anc}.
\]
Its cross-covariance with $Z_\nu$ is zero. Joint Gaussianity makes it independent of $Z_\nu$, and both are independent of the source and permutation. Since $W$ is determined by the source and permutation, $V^{\rm anc}$ is independent of $(X^n,W,Y_\nu,\Pi)$. The independence remains valid after conditioning on the source-only flag.}
The enhanced center is
\begin{equation}
 R_\nu=\frac12\log\frac{p q_\nu}{b_\nu(a+q_\nu)},\qquad
 R_{a,b}-R_\nu=\frac12\log\frac{1+\nu s_a}{1+\nu b},\quad
 s_a=\frac{aq}{a+q}.\label{eq:hb-rate-enhance}
\end{equation}

\subsection{Upper image likelihood after enhancement}
Let $H_0=\log[r_u(Y^n)/p_{M_0,Y^n}(M_0,Y^n)]$ and
$H_{0,\nu}=\log[\phi_{(a+q_\nu)I_n}(Y_\nu-u)/p_{M_0,Y_\nu}(M_0,Y_\nu)]$.
The reference experiment $X^n\mid u\sim\cN(u,aI_n)$ induces, given $(u,Y^n)$,
\[
 A^{\rm ext}\sim\cN(\sqrt\nu m_a,(1+\nu s_a)I_n),\qquad
 m_a=u+\frac a{a+q}(Y^n-u).
\]
The ancillary factor cancels between this reference and the original experiment. Therefore
\begin{equation}
 H_0-H_{0,\nu}
 =\log\frac{p(A^{\rm ext}\mid M_0,Y^n)}
                {\phi_{(1+\nu s_a)I_n}(A^{\rm ext}-\sqrt\nu m_a)}.
 \label{eq:hb-enhance-gain}
\end{equation}
Compare the numerator first with
$\kappa=\phi_{(1+\nu b)I_n}(A^{\rm ext}-\sqrt\nu v)$.
The log ratio $p/\kappa$ has lower tail at most $e^{-\gamma}$.
Expanding the other Gaussian ratio gives
\begin{align}
 \log\frac\kappa{\phi_{(1+\nu s_a)I_n}(A^{\rm ext}-\sqrt\nu m_a)}
 ={}&n(R_{a,b}-R_\nu)
 +\frac\nu2\left[
 \frac{\|X^n-m_a\|^2-ns_a}{1+\nu s_a}
 -\frac{\|X^n-v\|^2-nb}{1+\nu b}\right]+N_n^{\rm ext},
 \label{eq:hb-gain-expansion}
\end{align}
where, conditionally on $(X^n,Y^n)$, $N_n^{\rm ext}$ is centered and has variance at most
$C\nu(\|X^n-m_a\|^2+\|X^n-v\|^2+n)$.
On $E_n$, the second bracketed term is nonnegative. Moreover
\[
 X^n-m_a=\frac q{a+q}(X^n-u)-\frac a{a+q}Z^n.
\]
Its centered energy is $O_P(\sqrt n)$ on $E_n$, by \eqref{eq:hb-event-gaps} and conditional Gaussian moments. Consequently the gain in \eqref{eq:hb-enhance-gain} is at least $n(R_{a,b}-R_\nu)-o_P(\sqrt n)$ on success. Using the budget tail for $H_0$ proves
\begin{equation}
 H_{0,\nu}\le nR_\nu+\sqrt nL+o_P(\sqrt n)
 \quad\text{on }E_n\cap\{F_n=1\}.\label{eq:hb-enhanced-upper}
\end{equation}
This calculation retains the change in rate in \eqref{eq:hb-rate-enhance}; the condition $\nu\to0$ alone would not justify discarding it.

Define the enhanced actual reconstruction
\[
 v^{\rm enh}=\frac{v+b\sqrt\nu A^{\rm ext}}{1+\nu b},\qquad e^{\rm enh}=X^n-v^{\rm enh}.
\]
It is measurable from $(M_0,Y_\nu,V^{\rm anc})$, independently of $\Pi$. Expansion of its squared error and conditional Chebyshev give
\begin{equation}
 \Pr\{E_n,F_n=1,\ \|e^{\rm enh}\|^2>nb_\nu+\eta\sqrt n\}\to0
 \quad(\eta>0).\label{eq:hb-enhanced-distortion}
\end{equation}
The centered enhancement-noise variance is $O(n\nu)$ on original success. Its coordinate maximum is $O_P(\sqrt{\log n})$ on the flag.

\subsection{Retaining the same sampled-description cost}
Condition on $\mathcal B=\sigma(M_0,Y_\nu,V^{\rm anc},\Pi)$.
Let $p_{\mathcal B}$ be the actual posterior and
$\kappa_{\mathcal B}=\phi_{b_\nu I_n}(\cdot-v^{\rm enh})$.
Let $\zeta=\log(p_{\mathcal B}(X^n)/\kappa_{\mathcal B}(X^n))$ and $\zeta_k$ be the corresponding likelihood ratio of the sampled quantization labels. Both $\zeta_k$ and $\zeta-\zeta_k$ have lower tails at most $e^{-\gamma}$ under $\widetilde P$.

Put
\[
 E_n^{\rm en}=\frac{\|e^{\rm enh}\|^2-nb_\nu}{2b_\nu},\qquad
 E_k^{\rm en}=\frac{\sum_{j\le k}(e^{\rm enh}_{\Pi(j)})^2-kb_\nu}{2b_\nu},
 \quad c_n^{\rm pre}=k\left[\log\frac1{\Delta_n}+\frac12\log(2\pi e b_\nu)\right].
\]
Integration over the Gaussian quantization intervals gives
\begin{equation}
 -\log\kappa^{\rm pre}(V^{\rm pre}\mid\mathcal B)
       =c_n^{\rm pre}+E_k^{\rm en}+o_P(1).\label{eq:hb-prefix-reference}
\end{equation}
On the high-probability coordinate cutoff the error is at most
$Ck\Delta_n\sqrt{\log n}=o(1)$.
Since the enhanced error vector is independent of $\Pi$,
\[
 \Var_\Pi\left(E_k^{\rm en}-\frac knE_n^{\rm en}\ \middle|\ e^{\rm enh}\right)
       \le Ck\|e^{\rm enh}\|_\infty^4.
\]
The right side is $O_P(k(\log n)^2)=o(n)$. Combining with \eqref{eq:hb-enhanced-distortion} gives, on original success,
\begin{equation}
 E_k^{\rm en}\le o_P(\sqrt n),\qquad
 E_k^{\rm en}-E_n^{\rm en}\ge-o_P(\sqrt n).\label{eq:hb-sampled-signs}
\end{equation}

Let
$H_{W,\nu}=\log[\phi_{(a+q_\nu)I_n}(Y_\nu-u)/p_{W,Y_\nu\mid\Pi}(W,Y_\nu\mid\Pi)]$.
Factoring the prefix probability gives
\begin{equation}
 H_{W,\nu}=H_{0,\nu}+c_n^{\rm pre}+E_k^{\rm en}-\zeta_k+o_P(1).
 \label{eq:hb-prefix-factor}
\end{equation}
Its upper bound follows from \eqref{eq:hb-enhanced-upper} and \eqref{eq:hb-sampled-signs}.
For the lower bound, expand $H_{0,\nu}$ as in \eqref{eq:hb-exact-identity} with reference mean $v^{\rm enh}$ and noise $q_\nu$. Substitution into \eqref{eq:hb-prefix-factor} yields
\begin{align}
 H_{W,\nu}={}&nR_\nu+c_n^{\rm pre}+\sqrt n A_n+N_{0,n}^{\nu}
 +\frac{na-\|X^n-u\|^2}{2(a+q_\nu)}\nonumber\\
 &+(E_k^{\rm en}-E_n^{\rm en})+(\zeta-\zeta_k)+o_P(1).
 \label{eq:hb-prefix-lower-identity}
\end{align}
The coarse deficit is nonnegative on the flag. Therefore, on original success outside vanishing-probability exceptions,
\begin{align}
 H_{W,\nu}&\le nR_\nu+c_n^{\rm pre}+\sqrt nL+o_P(\sqrt n),\label{eq:hb-paired-upper}\\
 H_{W,\nu}&\ge nR_\nu+c_n^{\rm pre}+\sqrt n A_n+N_{0,n}^{\nu}-o_P(\sqrt n).
 \label{eq:hb-paired-lower}
\end{align}
The same $c_n^{\rm pre}$ appears in both. Its order is larger than $\sqrt n$ and it is canceled, not neglected.

\subsection{Centered image law and the residual covariance}
Let $m=\E_{\widetilde P}[X^n\mid W,Y_\nu,\Pi]$ and define the source-measurable vectors
\[
 d_n=X^n-\E_{\widetilde P}[m\mid X^n,\Pi],\qquad
 \ell_n=\frac{d_n}{q_\nu}-\frac{X^n-u}{a+q_\nu}.
\]
Define
\begin{align}
 J_n&=\frac{\ell_n^{\mathsf T}Z_\nu}{\sqrt n}
 +\frac{a/(a+q_\nu)-b_\nu/q_\nu}{2q_\nu\sqrt n}
               (\|Z_\nu\|^2-nq_\nu),\label{eq:hb-J}\\
 B_n&=-\frac{d_n^{\mathsf T}Z_\nu}{q_\nu\sqrt n}
 +\frac{b_\nu}{2q_\nu^2\sqrt n}(\|Z_\nu\|^2-nq_\nu).
 \label{eq:hb-B}
\end{align}
Their sum is $N_{0,n}^{\nu}/\sqrt n$.
Differentiating the cell likelihood and posterior mean~\cite{dytso} gives
\[
 \nabla_y H_{W,\nu}=\frac{y-m}{q_\nu}-\frac{y-u}{a+q_\nu},\qquad
 \nabla_y m=\frac V{q_\nu}.
\]
For $v(y)=m(W,y,\Pi)-b_\nu y/q_\nu$, its derivative is
$(V-b_\nu I_n)/q_\nu$. Gaussian Poincar\'e~\cite{ledoux}, first for the centered likelihood residual and then for every component of $v$, gives
\begin{equation}
 \E_{\widetilde P}\left|
 H_{W,\nu}-\E_{\widetilde P}[H_{W,\nu}\mid X^n,\Pi]-\sqrt nJ_n
 \right|^2
 \le q_\nu^{-2}\E_{\widetilde P}\|V-b_\nu I_n\|_{\rm F}^2=o(n).
 \label{eq:hb-Poincare}
\end{equation}
It also gives a remainder $r_n^m$ with $\E\|r_n^m\|^2=o(n)$ in
\begin{equation}
 X^n-m=d_n-(b_\nu/q_\nu)Z_\nu-r_n^m.\label{eq:hb-error-decomposition}
\end{equation}

To identify the moments, compare $\widetilde P$ with the law having the same $(W,Y_\nu,\Pi)$ marginal and conditional source $\cN(m,b_\nu I_n)$. Its relative entropy is $K_n(\nu)=o(n)$. For an event with reference probability at most $e^{-cn}$, binary data processing gives
$\widetilde P(E)\le(K_n(\nu)+\log2)/(cn)$.
On each retained cell, $\|X^n\|\le\sqrt{2pn}$ and $\|X^n-u\|\le\sqrt{na}$. Conditional Jensen bounds the posterior mean, and the same cell bounds its fixed center $u$, so $\|m-u\|\le C\sqrt n$. Under the Gaussian reference the error is conditionally $\cN(0,b_\nu I_n)$. Its squared-norm and inner-product deviations of order $n$ have probability $e^{-c_\eta n}$ for every fixed tolerance $\eta>0$. The preceding relative-entropy comparison consequently transfers
\[
 \frac{\|X^n-m\|^2}{n}\to b,\qquad
 \frac{(m-u)^{\mathsf T}(X^n-m)}n\to0.
\]
Equation~\eqref{eq:hb-K-bounds} gives $\|X^n-u\|^2/n\to a$. Using \eqref{eq:hb-error-decomposition}, conditional Gaussian projections, and $\E\|d_n\|^2\le\E\|X^n\|^2\le2pn$, we obtain
\begin{equation}
 \frac{\|X^n-u\|^2}{n}\to a,\qquad
 \frac{(X^n-u)^{\mathsf T}d_n}{n}\to b,\qquad
 \frac{\|d_n\|^2}{n}\to b-\frac{b^2}{q}.
 \label{eq:hb-three-moments}
\end{equation}
\edit{For the two projections needed here, conditional noise variances divided by $n^2$ are bounded by $q_\nu a/n$ and $q_\nu\E\|d_n\|^2/n^2$, respectively. Both tend to zero. Also $\|r_n^m\|/\sqrt n\to0$ by its second-moment bound.} ''All limits are in $\widetilde P$-probability. The second limit uses
$(X^n-u)^{\mathsf T}(X^n-m)/n\to b$; the third follows by expanding \eqref{eq:hb-error-decomposition}.

Set $\rho=a/(a+q)$ and $\beta=b/q$. The conditional covariance of \eqref{eq:hb-J}--\eqref{eq:hb-B}, using \eqref{eq:hb-three-moments}, is
\[
 \Var J_n\to(\rho-\beta)-\tfrac12(\rho-\beta)^2,\quad
 \Var B_n\to\beta-\tfrac12\beta^2,\quad
 \Cov(J_n,B_n)\to-\tfrac12\beta(\rho-\beta).
\]
The linear projections of $Z_\nu$ are exactly Gaussian. Rotate them into at most two coordinate directions; the remaining squared noise norm is an independent chi-square fluctuation, and the two removed squared coordinates contribute $o_P(1)$ after normalization. This proves convergence along every deterministic source sequence satisfying \eqref{eq:hb-three-moments}. Conditional second moments are bounded on a slowly shrinking source-measurable good set. A finite-grid and tail argument therefore makes the conditional CDF convergence uniform over all thresholds. This permits unbounded source-measurable allocations.

Apply Lemma~\ref{lem:matched-cost} to \eqref{eq:hb-paired-upper}--\eqref{eq:hb-paired-lower} and \eqref{eq:hb-Poincare}. Remove the nonnegative localization indicator only after conditioning on the source and evaluating the Gaussian probability. Under the original source law $A_n\Rightarrow\cN(0,1/2)$. The resulting converse is $\limsup\Pr(E_n)\le F_{a,b}(L)$.

\subsection{Construction attaining the image bound}
Fix a source-energy cutoff $B<\infty$. Partition $A_n(x)$ in $[-B,B]$ into intervals of width $n^{-1}$; write $\alpha_j$ for the upper endpoint. There are polynomially many branches.
Use the backward Gaussian reference $X=U_0+W_0$ with independent variances $p-a$ and $a$. For a fixed $x$ in branch $j$, retain coarse words satisfying
\begin{equation}
 na-1\le\|x-u_0\|^2\le na.\label{eq:hb-coarse-shell}
\end{equation}
Under $Q_{U_0\mid X=x}$ the hit probability is at least $c_Bn^{-1/2}$: rotate along $x$ and apply Stirling's formula to the remaining chi-square density in a fixed-width interval about $n+O_B(\sqrt n)$. On this shell,
\[
 \log\frac{q_{X\mid U_0}(x\mid u_0)}{p_X(x)}
 =\frac n2\log\frac pa+\sqrt nA_n(x)+\frac{na-\|x-u_0\|^2}{2a}.
\]
Changing measure to the marginal $Q_{U_0}$ shows that
$N_j=\lceil\exp\{n\log(p/a)/2+\sqrt n\alpha_j+C\log n\}\rceil$
independent coarse candidates have failure at most $n^{-4}$ at every retained source word, for large enough $C$. Select the first hit using the source alone. Averaging later fixes one deterministic family of coarse lists.

Given a selected $u_0$, write $w=x-u_0$ and $y'=y-u_0=w+z$.
Let $T=bq/(q-b)<a$, and use the refinement reference
$W_0=V+E$ with independent Gaussian variances $a-T$ and $T$.
The reconstruction is
\[
 \widehat x=u_0+\frac bq y'+\left(1-\frac bq\right)V.
\]
For a bounded allocation $t$, restrict the conditional kernel of $V\mid W_0=w$ to a fixed-width shell about
\begin{equation}
 \|w-V\|^2=nT-\frac{\sqrt n\,t}{\lambda(1-b/q)^2},\qquad \lambda=\frac1{2b}.
 \label{eq:hb-refinement-shell}
\end{equation}
Its probability is at least $c_{B,t}n^{-1/2}$ uniformly over \eqref{eq:hb-coarse-shell}. Gaussian tail estimates divided by this shell probability give
$\langle w,w-V\rangle/n\to T$ uniformly. The fine conditional information and scaled actual distortion have means
$nR_1+\sqrt n t+O_B(1)$ and $nb-\sqrt n t/\lambda+O_B(1)$, where
$R_1=\log(s_a/b)/2$.
Their centered law, conditional on the source, converges uniformly to \eqref{eq:hb-covariance}.

Use the Gaussian decoder reference for $V\mid Y'$ and the conditional Poisson matching bound~\cite[Lemma~2 and Theorem~3]{poisson}. It applies to the actual conditional source law in each coarse cell: that law need not be Gaussian. Only the prescribed encoder and decoder reference kernels enter the bound. The conditioning cost of \eqref{eq:hb-refinement-shell} is logarithmic. A refinement alphabet of size
\[
 M_{1,j}=\left\lceil\exp\{nR_1+\sqrt n(L-\alpha_j)+C'\log n\}\right\rceil
\]
\edit{gives a one-shot success bound whose integrand is
$\Phi_{\Sigma_{a,b}}(L-A_n-t,t)+o(1)$, uniformly over source words in the retained branch. The guarantee for a fixed refinement code is the average of this integrand under the actual coarse-cell source law.} Choose a Borel maximizing allocation from $L-\alpha_j$. Maximizers are bounded for fixed $B$.

The common message enumerates $(j,\text{coarse index},\text{refinement label})$. Its cardinality is at most $\sum_jN_jM_{1,j}$, whose logarithm is
$nR_{a,b}+\sqrt nL+O_B(\log n)$.
For each branch and each of its finitely many coarse indices, the conditional Poisson construction is formed for the actual residual-source law induced by that coarse choice. Its reference kernels are those displayed above, not an assumption that the residual-source law is Gaussian. Averaging the bounded conditional failure estimates over these cells fixes a single finite family of constructions. \edit{Condition first on the coarse lists. For each realized coarse cell, the one-shot bound controls the error averaged under that cell's actual source law. Selecting a refinement realization with no larger error and weighting by the cell probabilities gives one deterministic collection of refinement codes. This does not assert a pointwise guarantee for that selected deterministic code.} All lists, branches, and labels are fixed before coding. The weak decoder uses only the branch and coarse index; the strong decoder uses the refinement label and its side information. Finally average the resulting total error over the coarse lists and fix a realization with no larger error. The limit on the cutoff is
$\E[\one_{\{|G_0|\le B\}}h_{a,b}(L-G_0)]$.
Let $B\to\infty$ by a slow diagonal choice. For arbitrary prescribed budgets, first use $L-\eta$ and then $\eta\downarrow0$. This completes the proof of Theorem~\ref{thm:gaussian-distortion-image}.



\section{Agreed-Reproduction Images and Their Gaussian Evaluation}\label{app:cr-details}
The common-reconstruction criterion in Section~\ref{sec:cr-gaussian} charges disagreement and actual excess distortion to one fixed error probability. All projection operations below are transformations of the encoder and decoder, with their finite headers included in the message size.

\subsection{A logarithmic-header geometric repair}
Fix $0<D<p$ and choose $\eta>0$ so small that $p-\eta>D$. On
$|\|x\|^2/n-p|\le\eta$ and $\|x-u\|^2\le nD$, the reproduction satisfies
\[
 \sqrt n(\sqrt{p-\eta}-\sqrt D)\le\|u\|
 \le\sqrt n(\sqrt{p+\eta}+\sqrt D).
\]
Thus the least-squares scale $\alpha_* =\langle x,u\rangle/\|u\|^2$ belongs to a fixed bounded interval. Take a grid of spacing $n^{-2}$ on a larger fixed interval and also include $1$. The encoder chooses the grid point minimizing $\|x-\alpha u\|^2$ and transmits its index. The decoder applies this scale to its reconstruction. There are polynomially many choices, so the header costs $O(\log n)$ nats. On original agreement the new reproductions agree, and inclusion of $1$ prevents an increase of distortion.

The nearest grid point to $\alpha_*$ has excess least-squares loss at most $\|u\|^2n^{-4}$. The minimizing choice has no larger loss, so $|\alpha-\alpha_*|\le n^{-2}$. For $u'=\alpha u$,
\begin{equation}
 |\langle x,x-u'\rangle-\|x-u'\|^2|
   =|\langle u',x-u'\rangle|
   =|\alpha|\|u\|^2|\alpha_*-\alpha|=O(n^{-1}).\label{eq:cr-projection-cross}
\end{equation}
Outside the displayed source/reconstruction set use the original scale; its index is already available. The source-energy exceptional mass tends to zero. This construction preserves original success on the retained set, so any converse for the repaired code bounds the original code.

For a coarse reproduction $u_0$ known from the message and a fine reproduction $u_1$ recovered on agreement, choose an orthonormal basis of $\operatorname{span}(u_0,u_1)$ by Gram--Schmidt, with deterministic rules for zero vectors and ties. The first basis vector depends only on $u_0$. Project $x$ first onto $\operatorname{span}(u_0)$ and then onto $\operatorname{span}(u_0,u_1)$. Quantize the projection amplitudes on $[-C\sqrt n,C\sqrt n]$ at spacing $n^{-2}$. At each stage include the original reproduction as an additional candidate and choose the candidate of least distortion. The original candidate is represented by one extra label rather than by transmitting its coordinates.

The weak decoder uses its coarse amplitude and its known original $u_0$. The strong decoder knows $u_0$ from the message and, on agreement, has the original $u_1$; hence it can form the same basis and use the fine amplitudes. Polynomial header alphabets suffice even if the original vectors are nearly collinear: coefficients refer to an orthonormal basis, not to the possibly ill-conditioned original vectors. With exact collinearity the second basis vector is omitted. Each repaired vector is within $O(n^{-2})$ of its exact projection on the retained source set, because its excess projection loss is at most that of the nearest grid point. For repaired residuals $w_0=x-u'_0$ and $w_1=x-u'_1$,
\begin{equation}
 \langle x,w_j\rangle=\|w_j\|^2+o(n),\qquad
 \langle w_0,w_1\rangle=\|w_1\|^2+o(n),\label{eq:cr-nested-cross}
\end{equation}
uniformly on that set. The exact relations follow from orthogonal projection onto nested subspaces; the quantization error is $o(n)$. Distortion and agreement are preserved by the inclusion of the original candidates. This argument uses that the stronger decoder can recover both original vectors on success.

\subsection{Single Gaussian common reconstruction}
Choose $q_{X\mid u}=\phi_{DI_n}(\cdot-u)$, so its channel output is
$q_{Y\mid u}=\phi_{(D+q)I_n}(\cdot-u)$. Lemma~\ref{lem:agreed-image} applies to the joint agreement/distortion event. The score is
\begin{equation}
 g=nR_{\rm CR}+\frac{\|x\|^2}{2p}-\frac{\|x-u\|^2}{2D}
              -\frac{\|y\|^2}{2(p+q)}+\frac{\|y-u\|^2}{2(D+q)}.
 \label{eq:cr-g-score}
\end{equation}
For $w=x-u$, $y=x+z$, $c=q/(p+q)$, and
$\lambda=q/[2D(D+q)]$, expansion gives
\begin{align}
 g+\lambda(\|w\|^2-nD)
 &=nR_{\rm CR}+\sqrt n A_n^{\rm CR}+J_n^{\rm CR},\label{eq:cr-spectral-identity}\\
 A_n^{\rm CR}&=\frac c{2\sqrt n}\left(\frac{\|x\|^2}{p}-n\right),\nonumber\\
 J_n^{\rm CR}&=\left\langle\frac w{D+q}-\frac x{p+q},z\right\rangle
 +\frac12\left(\frac1{D+q}-\frac1{p+q}\right)(\|z\|^2-nq).
 \nonumber
\end{align}
Apply the repair \eqref{eq:cr-projection-cross} first. Its logarithmic message increase does not change the second-order offset. On success, the positive deficit $nD-\|w\|^2$, the score tail in Lemma~\ref{lem:agreed-image}, and conditional variance $O(n)$ of $J_n^{\rm CR}$ imply
$nD-\|w\|^2=O_P(\sqrt n)$ eventwise. Thus successful mass can be retained on a source-measurable set where
\begin{equation}
 \frac{\|x\|^2}{n}\to p,\qquad
 \frac{\|w\|^2}{n}\to D,\qquad
 \frac{\langle x,w\rangle}{n}\to D.\label{eq:cr-three-moments}
\end{equation}
Let $\Delta=q/(D+q)-q/(p+q)$. Conditional Gaussian moments give
\begin{align}
 \lim\frac1n\Var(J_n^{\rm CR}\mid x,u)
 &=q\left[\frac D{(D+q)^2}+\frac p{(p+q)^2}
             -\frac{2D}{(D+q)(p+q)}\right]
   +\frac{q^2}{2}\left[\frac1{D+q}-\frac1{p+q}\right]^2\nonumber\\
 &=\Delta-\frac{\Delta^2}{2}.\label{eq:cr-conditional-variance}
\end{align}
Rotation into the span of $x,w$ proves a uniform conditional Gaussian limit as in Appendix~\ref{app:hb-image}. Under the original source law $A_n^{\rm CR}\Rightarrow\cN(0,c^2/2)$. Removing the nonnegative source-localization indicator after conditional comparison yields
\[
 \limsup P_{\rm c}\le\Phi\left(\frac L{\sqrt{V_{\rm CR}}}\right).
\]
The argument does not condition the channel on agreement.

For achievability use the backward Gaussian reference $X=U+E$ with independent variances $p-D$ and $D$. On a bounded source-energy fluctuation window, restrict the encoder kernel to
$nD-1\le\|x-U\|^2\le nD$.
Its probability under $Q_{U\mid X=x}$ is at least $c_Bn^{-1/2}$ by the scalar local chi-square bound. Gaussian tails, divided by this shell probability, give \eqref{eq:cr-three-moments}. Use the Gaussian decoder reference $Q_{U\mid Y}$ in the conditional Poisson matching construction~\cite{poisson}. On auxiliary recovery the decoder equals the encoder-selected $U$, and distortion is already satisfied by the shell. The true score is \eqref{eq:cr-g-score} minus the logarithm of the shell probability. Thus its normalized limit is $\cN(0,V_{\rm CR})$, and $O(\log n)$ overhead attains its CDF. Fixing a random-code realization makes the encoder's selected reproduction and the decoder deterministic. Cutoff removal and strict quantile inversion prove \eqref{eq:cr-rate}.

\subsection{Nested Gaussian common reconstruction}
Use Lemma~\ref{lem:agreed-image} with
$q_{X\mid u_1}=\phi_{bI_n}(\cdot-u_1)$ and
$r_{u_0}=\phi_{(a+q)I_n}(\cdot-u_0)$. Its score is
\begin{equation}
 g=nR_{\rm HB,CR}+\frac{\|x\|^2}{2p}-\frac{\|w_1\|^2}{2b}
                 -\frac{\|w_0+z\|^2}{2(a+q)}
                 +\frac{\|w_1+z\|^2}{2(b+q)},\quad w_j=x-u_j.
 \label{eq:hbcr-g}
\end{equation}
The exact distortion identity is
\begin{align}
 g+\frac{\|w_0\|^2-na}{2(a+q)}
       +\frac{q(\|w_1\|^2-nb)}{2b(b+q)}
 &=nR_{\rm HB,CR}+\frac{\|x\|^2-np}{2p}+N_n^{\rm CR},\label{eq:hbcr-identity}\\
 N_n^{\rm CR}
 &=\left\langle\frac{w_1}{b+q}-\frac{w_0}{a+q},z\right\rangle
 +\frac12\left(\frac1{b+q}-\frac1{a+q}\right)(\|z\|^2-nq).
 \nonumber
\end{align}
Both multipliers are positive. Apply the nested repair \eqref{eq:cr-nested-cross}. The image tail and conditional variance bound localize the two successful residual energies at $a,b$. Consequently the relevant source-measurable moment limits are
\[
 \|w_0\|^2/n\to a,\quad \|w_1\|^2/n\to b,\quad
 \langle w_0,w_1\rangle/n\to b.
\]
Their conditional noise variance is
\begin{align}
 \lim\frac1n\Var(N_n^{\rm CR})
 &=q\left[\frac b{(b+q)^2}+\frac a{(a+q)^2}
                 -\frac{2b}{(b+q)(a+q)}\right]
    +\frac{q^2}{2}\left[\frac1{b+q}-\frac1{a+q}\right]^2\nonumber\\
 &=\delta-\frac{\delta^2}{2}.
\end{align}
The source score has variance $1/2$. The same conditional Gaussian comparison proves the converse in \eqref{eq:hbcr-rate}. If the original reconstructions are collinear, their repaired residuals coincide up to negligible error; such sources cannot carry a nonvanishing successful mass with both distinct constraints active at this rate. No separate noncollinearity assumption is needed.

For the construction, use the coarse shells and branch headers from Appendix~\ref{app:hb-image}. Given the coarse residual $w_0$ with energy $na+O(1)$, use the backward reference $W_0=V+E$ with independent variances $a-b$ and $b$, restrict $\|w_0-V\|^2$ to $[nb-1,nb]$, and use its Gaussian decoder reference for $V\mid(Y-u_0)$. The fine common-reconstruction rate is
$\frac12\log[a(b+q)/(b(a+q))]$.
The two fixed residual energies make its conditional mean equal to this rate times $n$, up to $O(1)$, while its residual variance is $\delta-\delta^2/2$.
A source branch using $\sqrt nA_n$ extra coarse nats has $\sqrt n(L-A_n)$ fine nats available. Poisson matching recovers the encoder-selected fine word; both distortions are already satisfied by the two shells. Average the conditional scalar CDF over $A_n\Rightarrow\cN(0,1/2)$ and enumerate the branch indices as in Appendix~\ref{app:hb-image}. This gives the normal CDF with variance \eqref{eq:hbcr-parameters}. Strict inversion proves \eqref{eq:hbcr-rate}.



\section{Controlled Partitions and Structured Image Cells}\label{app:partition-tools}
\subsection{Proof of simultaneous auxiliary-to-partition transfer}
The strong functional representation lemma~\cite[Theorem~1]{sfrl}, with its bound converted from bits to nats, supplies a seed $S\perp X$ and a representation $T=g(X,S)$ with $I(X;S\mid T)\le\kappa$. Generate all physical outputs conditionally on $X$, independently of $S$. In the Poisson representation, fixing the seed leaves a countable set of possible values of $T$. Enumerating distinct values gives a label $J_S$ in bijection with $T$ at that fixed seed. Consequently
\begin{align}
 H(J_S\mid E_r,S)
 &=I(X;T\mid E_r,S)\nonumber\\
 &=I(X;T\mid E_r)+I(X;S\mid T,E_r)\nonumber\\
 &\le h_r+\kappa.\label{eq:app-partition-entropy}
\end{align}
The second equality uses $I(X;S\mid E_r)=0$. The final inequality follows from
\[
 I(X;S\mid T,E_r)=I(X;S\mid T)-I(S;E_r\mid T),
\]
since $I(S;E_r\mid X,T)=0$. Thus the same physical Markov constraint is retained after introducing the seed.

For the $j$th output, the relative-entropy chain rule yields
\begin{align}
 \E_{T,S}D(P_{B_j|T,S}\|Q_{j|T})
 &=d_j+I(B_j;S\mid T)\nonumber\\
 &\le d_j+I(X;S\mid T)\le d_j+\kappa.
 \label{eq:app-partition-divergence}
\end{align}
Here $S-X-B_j$ holds conditionally on $T$. Divide each of the $m+s$ nonnegative seed-dependent quantities by its positive upper bound in \eqref{eq:app-partition-entropy}--\eqref{eq:app-partition-divergence}, and sum. Its expectation is at most $m+s$, so one seed has sum at most $2(m+s)$. That seed gives every claimed bound. A quantity with zero expectation is zero almost surely and can be discarded. The same selection includes finitely many additional nonnegative approximation errors with their respective expectation bounds. It does not preserve a signed moment merely from its mean; use an absolute or squared error for such a moment.

For a finite existing source-cell index $M$, the construction can be performed under each $P_{X|M=m}$. Uniform component estimates allow the same constants. More generally, average the conditional overhead and use concavity of $u\mapsto\log(1+u/\log2)$. The resulting header still belongs to the terminal observing $(M,X)$.

\subsection{An exponentially large family of regular components}\label{app:structured-example}
We prove the assertions of Proposition~\ref{prop:structured-cells}. All binary entropies are in nats. Choose fixed rational parameters $t,q,\delta$ with
$0<t,q<1/2$ and $2(t+q)<\delta<1/2$. Take compatible even blocklengths and a rate
$0<R<\log2-h(\delta)$. Greedy packing inside the balanced type class gives at least
$\exp\{n[\log2-h(\delta)]-O(\log n)\}$ centers at pairwise Hamming distance greater than $n\delta$. Indeed, that type class has size $\exp(n\log2-O(\log n))$, while a full binary Hamming ball of radius $n\delta$ has at most $\exp(nh(\delta)+O(\log n))$ points. Retain $L_n=\lfloor e^{nR}\rfloor$ centers $a^n(j)$.

For each center, let $\mathcal A_j$ consist of sequences obtained by flipping exactly a fraction $t$ of the positions of each center symbol. These are disjoint because they lie in radius-$nt$ Hamming balls. All have size
\begin{equation}
 |\mathcal A_j|=\binom{n/2}{nt/2}^{\!2}
       =\exp\{nh(t)+O(\log n)\}.
 \label{eq:structured-cardinality}
\end{equation}
Let $J$ be uniform on $[L_n]$, draw $X^n$ uniformly on $\mathcal A_J$, and observe
$C^n=X^n\oplus N^n$, $N_i\sim\operatorname{Bern}(q)$ independently. Then $J$ is a deterministic function of $X^n$. Choose $\eta>0$ with $2(t+q+\eta)<\delta$. Nearest-center decoding of $C^n$ recovers $J$ whenever $\|N^n\|_1\le n(q+\eta)$, so its error is at most $e^{-c n}$. Fano's inequality gives
\begin{equation}
 H(J\mid C^n)\le h(e^{-cn})+e^{-cn}\log L_n=O(ne^{-cn}).
 \label{eq:structured-recoverability}
\end{equation}
A smaller positive constant absorbs the polynomial factor.

For a fixed center let $Q_{X|A}$ be the binary symmetric test channel with crossover $t$ and $W_{C|X}$ the BSC with crossover $q$. Write $v=t+q-2tq$ and let
\[
 s(a,x,c)=-\log Q_{X|C,A}(x\mid c,a).
\]
For every $x^n\in\mathcal A_j$, the sum $\sum_i s(a_i(j),x_i,C_i)$ has mean $nH_*$ and variance $nV_*$, where
\begin{align}
 H_*&=h(t)+h(q)-h(v),\nonumber\\
 V_*&=\sum_{a,x}Q_{AX}(a,x)
                  \Var_{W(\cdot|x)}s(a,x,C),\quad Q_A(0)=Q_A(1)=1/2.
 \label{eq:structured-spectrum-parameters}
\end{align}
Its law is the same for every source word in the component, and, by binary symmetry, for every component. The increments are bounded, and $V_*>0$: for a fixed $(a,x)$ the posterior score differs between the two positive-probability observation values.

The reference product law conditioned on the conditional-type event gives exactly the uniform component. Its conditioning probability is polynomially bounded below. The output-reference correction and its lower-tail estimate therefore give
\begin{equation}
 -\log P(X^n\mid C^n,J)
  =\sum_i s(a_i(J),X_i,C_i)+O_P(\log n).
 \label{eq:structured-component-score}
\end{equation}
This statement is uniform over the components. It follows either from Theorem~\ref{v5:screened} or directly from Bayes' rule and
$P_{C^n|J}/Q_{C^n|a^n(J)}\le\operatorname{poly}(n)$, with an $e^{-\gamma}$ lower likelihood tail. Hence the centered conditional spectrum has limit $\mathcal N(0,V_*)$.

A section image for $(X^n,C^n)$ has at most $M$ source words for each observation. Thresholding the component conditional spectrum constructs such an image with successful mass at least
$P\{-\log P(X^n|C^n,J)\le\log M\}$.
Section counting gives the converse with threshold $\log M+\gamma$ and error $e^{-\gamma}$. Corollary~\ref{cor:partition-profile}, using \eqref{eq:structured-recoverability}, therefore gives the same Gaussian image limit without revealing $J$.
For an actual fixed-length source code, assign random bin labels separately in each component and decode the most likely component source word after recovering $J$ from $C^n$. The conditional binning bound is
\[
 P_{\rm e}\le P\{-\log P(X^n|C^n,J)>\log M-\gamma\}
                  +e^{-\gamma}+e^{-cn}.
\]
All component maps can be fixed in advance by averaging, and the same message alphabet is reused across components. The encoder computes $J$ from $X^n$ and the decoder recovers it from its own observation. The preceding image converse applies to every code, proving the rate formula in Proposition~\ref{prop:structured-cells}.

Finally, the union $\mathcal A=\bigcup_j\mathcal A_j$ lies inside the balanced source type and has global logarithmic density deficit
\[
 \log\frac{|T_n(1/2)|}{|\mathcal A|}
       =n[\log2-R-h(t)]+O(\log n),
\]
which is positive and linear. Under the unconditional fair-input product reference $R_{X^nC^n}$, its conditional divergence is
\begin{align}
 K_n&=D(P_{X^nC^n}\|R_{X^nC^n})
                    -D(P_{C^n}\|R_{C^n})\nonumber\\
    &=H(C^n)-H(X^n)\nonumber\\
    &=n[h(v)-h(t)]+O(\log n)>0\quad\text{to first order}.
 \label{eq:structured-screening-failure}
\end{align}
To justify the last equality, $H(X^n)=\log L_n+nh(t)+O(\log n)$ by \eqref{eq:structured-cardinality}. The component output reference has entropy $nh(v)$ and relative-entropy correction $O(\log n)$, so
$H(C^n|J)=nh(v)+O(\log n)$. Then use
$H(C^n)=H(J)+H(C^n|J)-H(J|C^n)$ and \eqref{eq:structured-recoverability}. Thus both the global density test and the small conditional-divergence test fail, while the recoverable-partition theorem applies.





\begin{thebibliography}{99}

\bibitem{ck}
I.~Csisz{\'a}r and J.~K{\"o}rner, \emph{Information Theory: Coding Theorems for
  Discrete Memoryless Systems}, 2nd~ed.\hskip 1em plus 0.5em minus 0.4em\relax
  Cambridge University Press, 2011.

\bibitem{imagecomp}
D.~Cao, ``Exact second-order asymptotics for the Wyner--Ahlswede--K\"orner problem,'' \href{https://arxiv.org/abs/2608.21991v2}{arXiv:2608.21991.v2}, 2026.

\bibitem{wak}
D.~Cao, ``Tight weighted second-order asymptotics for the
Wyner--Ahlswede--K\"orner problem under regular posterior geometry,''
\href{https://arxiv.org/abs/2608.21991v1}{arXiv:2608.21991.v1}, 2026.

\bibitem{blfr}
J.~Liu, T.~A. Courtade, P.~W. Cuff, and S.~Verd\'u,
``A forward-reverse Brascamp--Lieb inequality: Entropic duality and Gaussian optimality,'' \emph{Entropy}, vol.~20, no.~6, Art.~418, 2018.

\bibitem{blinfo}
J.~Liu, T.~A. Courtade, P.~Cuff, and S.~Verd\'u,
``Information-theoretic perspectives on Brascamp--Lieb inequality and its reverse,'' arXiv:1702.06260, 2017.

\bibitem{lvhv}
J.~Liu, R.~van Handel, and S.~Verd\'u, ``Second-order converses via reverse hypercontractivity,'' \emph{Mathematical Statistics and Learning}, vol.~2, no.~2, pp.~103--163, 2019. arXiv:1812.10129.

\bibitem{wkt}
S.~Watanabe, S.~Kuzuoka, and V.~Y.~F. Tan, ``Non-asymptotic and second-order
  achievability bounds for coding with side-information,'' arXiv:1301.6467,
  2013.

\bibitem{poisson}
C.~T. Li and V.~Anantharam, ``A unified framework for one-shot achievability
  via the {Poisson} matching lemma,'' \emph{IEEE Transactions on Information
  Theory}, vol.~67, no.~5, pp. 2624--2651, 2021.

\bibitem{lili}
X.~Li and C.~T. Li, ``New second-order achievability bounds for coding with
  side information via type deviation convergence,'' arXiv:2510.20241, 2025.

\bibitem{consolidated}
L.~Zhou, ``Exact second-order asymptotics for the discrete Wyner--Ziv
problem,'' \href{https://arxiv.org/abs/2610.11998v1}{arXiv:2610.11998v1}, 2026.

\bibitem{kaspi}
L.~Zhou and M.~Motani, ``Non-asymptotic converse bounds and refined asymptotics
  for two lossy source coding problems,'' arXiv:1708.05496, 2017.

\bibitem{htw}
M.~Hayashi, H.~Tyagi, and S.~Watanabe, ``Secret key agreement: General capacity
  and second-order asymptotics,'' arXiv:1411.0735, 2014.

\bibitem{rd}
V.~Kostina and S.~Verd{\'u}, ``Fixed-length lossy compression in the finite
  blocklength regime,'' \emph{IEEE Transactions on Information Theory},
  vol.~58, no.~6, pp. 3309--3338, 2012.

\bibitem{nit}
A.~{El Gamal} and Y.-H. Kim, \emph{Network Information Theory}.\hskip 1em plus
  0.5em minus 0.4em\relax Cambridge University Press, 2011.

\bibitem{conditionalrd}
S.-Q. Le, V.~Y.~F. Tan, and M.~Motani, ``Second-order coding rates for
  conditional rate-distortion,'' arXiv:1410.2687, 2014.

\bibitem{xuzhou}
J.~Xu and L.~Zhou, ``Exact second-order asymptotics for the Gaussian Wyner--Ziv problem,'' working paper, version~12, 2026. [Online]. Available: \url{https://lzhou-sustech.github.io/preprint-pdfs/Gaussian_Wyner_Ziv_Second_Order.pdf}

\bibitem{zhouyu}
L.~Zhou and L.~Yu, ``Exact second-order asymptotics for the Gaussian Gray--Wyner problem,'' \href{https://arxiv.org/abs/2610.38306v1}{arXiv:2610.38306v1}, 2026.

\bibitem{oufkir}
A.~Oufkir and M.~Berta, ``Channel dichotomies,'' \href{https://arxiv.org/abs/2610.41858v1}{arXiv:2610.41858v1}, 2026.

\bibitem{steinbergcr}
Y.~Steinberg, ``Coding and common reconstruction,'' \emph{IEEE Transactions on Information Theory}, vol.~55, no.~11, pp.~4995--5010, 2009.

\bibitem{ahmadi}
B.~Ahmadi, R.~Tandon, O.~Simeone, and H.~V.~Poor, ``Heegard--Berger and cascade source coding problems with common reconstruction constraints,'' arXiv:1112.1762v4, 2012.

\bibitem{heegard}
C.~Heegard and T.~Berger, ``Rate distortion when side information may be absent,'' \emph{IEEE Transactions on Information Theory}, vol.~31, no.~6, pp.~727--734, 1985.

\bibitem{steinbergmerhav}
Y.~Steinberg and N.~Merhav, ``On successive refinement for the Wyner--Ziv problem,'' \emph{IEEE Transactions on Information Theory}, vol.~50, no.~8, pp.~1636--1654, 2004.

\bibitem{chen}
W.-K.~Chen, N.~Dafnis, and G.~Paouris, ``Improved H\"older and reverse H\"older inequalities for Gaussian random vectors,'' \emph{Advances in Mathematics}, vol.~280, pp.~643--689, 2015.

\bibitem{hoeffding}
W.~Hoeffding, ``Probability inequalities for sums of bounded random variables,'' \emph{Journal of the American Statistical Association}, vol.~58, no.~301, pp.~13--30, 1963.

\bibitem{immse}
D.~Guo, S.~Shamai (Shitz), and S.~Verd\'u, ``Mutual information and minimum mean-square error in Gaussian channels,'' \emph{IEEE Transactions on Information Theory}, vol.~51, no.~4, pp.~1261--1282, 2005.

\bibitem{payaro}
M.~Payar\'o and D.~P.~Palomar, ``Hessian and concavity of mutual information, differential entropy, and entropy power in linear vector Gaussian channels,'' \emph{IEEE Transactions on Information Theory}, vol.~55, no.~8, pp.~3613--3628, 2009.

\bibitem{dytso}
A.~Dytso, H.~V.~Poor, and S.~Shamai (Shitz), ``Conditional mean estimation in Gaussian noise: A meta derivative identity with applications,'' \emph{IEEE Transactions on Information Theory}, vol.~69, no.~3, pp.~1883--1898, 2023.

\bibitem{ledoux}
M.~Ledoux, \emph{The Concentration of Measure Phenomenon}, ser.~Mathematical Surveys and Monographs, vol.~89. American Mathematical Society, 2001.

\bibitem{hentila}
H.~Hentil\"a, Y.~Y.~Shkel, and V.~Koivunen, ``Communication-constrained secret key generation: Second-order bounds,'' \emph{IEEE Transactions on Information Theory}, vol.~70, no.~11, pp.~8180--8203, 2024.

\bibitem{tanbc}
\edit{V.~Y.~F.~Tan, ``Second-order asymptotics for the scalar Gaussian broadcast channel,'' \href{https://arxiv.org/abs/2611.01583v1}{arXiv:2611.01583v1}, 2026.}

\bibitem{sfrl}
\edit{C.~T.~Li and A.~El Gamal, ``Strong functional representation lemma and applications to coding theorems,'' \emph{IEEE Transactions on Information Theory}, vol.~64, no.~11, pp.~6967--6978, 2018. \href{https://arxiv.org/abs/1701.02827v4}{arXiv:1701.02827v4}.}
\end{thebibliography}
\end{document}